%
%  Walter D. van Suijlekom
%  
%  'rsa.tex'
%  10 nov 2011
%
%
\documentclass[12pt]{amsart}
\usepackage{fullpage,amsthm}
\usepackage{amsmath,amsfonts,amssymb} 
\usepackage{graphicx}

%\DeclareGraphicsRule{*}{mps}{*}{}

\input xy
\xyoption{all}

% Definitions
\newtheorem{thm}{Theorem}%[section]
\newtheorem{corl}[thm]{Corollary}
\newtheorem{lma}[thm]{Lemma} 
\newtheorem{prop}[thm]{Proposition}

\newtheorem{rem}[thm]{Remark}
%[ass]

\DeclareMathOperator{\ad}{ad}

\def\bar{\overline}

\def\C{\mathbb{C}}

\def\cO{\mathcal{O}}

\def\dd{\mathrm{d}}

\def\e{{[1]}}

\def\g{\mathfrak{g}}
\def\gf{\mathrm{gf}}
\def\gh{\textup{gh}}
\def\H{\mathcal{H}}

\def\half{\tfrac{1}{2}}

\def\id{\mathrm{id}}

\def\nn{\nonumber}

\DeclareMathOperator{\ord}{ord}

\def\R{\mathbb{R}}
\def\ren{\mathrm{ren}}

\def\su{\mathfrak{su}}

\def\tad{\textup{tad}}
\DeclareMathOperator{\tr}{Tr}
\def\tilde{\widetilde}

\DeclareMathOperator{\dvol}{dvol}
\DeclareMathOperator{\Vol}{Vol}

\def\Z{\mathbb{Z}} 

%constants
\def\a{\mathfrak{a}}
\def\b{\mathfrak{b}}
\def\c{\mathfrak{c}}
\def\d{\mathfrak{d}}
\def\e{\mathfrak{e}}
\def\f{\mathfrak{f}}
\def\g{\mathfrak{g}}
\def\h{\mathfrak{h}}
\def\k{\mathfrak{k}}
\def\l{\mathfrak{l}}
\def\m{\mathfrak{m}}

\title{Renormalization of the asymptotically expanded Yang--Mills spectral action}
\author{Walter D. van Suijlekom}
\address{Institute for Mathematics, Astrophysics and Particle Physics,
Radboud University Nij-megen, Heyendaalseweg 135, 6525 AJ Nijmegen, The Netherlands}
\date{November 16, 2011}

\begin{document}
\begin{abstract}
We study renormalizability aspects of the spectral action for the Yang--Mills system on a flat 4-dimensional background manifold, focusing on its asymptotic expansion. 
Interpreting the latter as a higher-derivative gauge theory, a power-counting argument shows that it is superrenormalizable. We determine the counterterms at one-loop using zeta function regularization in a background field gauge and establish their gauge invariance. Consequently, the corresponding field theory can be renormalized by a simple shift of the spectral function appearing in the spectral action. 

This manuscript provides more details than the shorter companion paper, where we have used a (formal) quantum action principle to arrive at gauge invariance of the counterterms. Here, we give in addition an explicit expression for the gauge propagator and compare to recent results in the literature.
\end{abstract}

\bibliographystyle{plainmath}
\maketitle

\section{Introduction}
Noncommutative geometry \cite{C94} has shown to be capable of describing Yang--Mills theories on the classical level, which further extends to the full Standard Model of high-energy physics \cite{CCM07}. One applies a so-called spectral action principle \cite{CC96,CC97} to a certain noncommutative manifold to arrive at a physical Lagrangian. In the low-energy limit, one recognizes the Lagrangian of the Standard Model, which can be perturbatively quantized using the usual physics textbook methods. It needs no stressing that this situation should be improved towards having a more intrinsic noncommutative geometrical description of the corresponding quantum theory. 

Recently, we have made some progress in this direction by showing that the asymptotically expanded spectral action for the Yang--Mills system -- interpreted as a higher-derivative field theory \cite{Sla71,Sla72b} -- is superrenormalizable \cite{Sui11}. The present paper gives full details of this result by presenting explicit formula for the gauge propagator and more importantly, we determine the form of the divergent part of the one-loop effective action. In {\it loc.cit.} this was derived from the formal quantum action principle (BRST-invariance of the one-loop effective action).
Since this is the only counterterm needed, and is proportional to the Yang--Mills action, this establishes renormalizability of the corresponding gauge field theory. 

\bigskip

Let us give an overview of the approach and results in this paper, whilst clarifying how our previous results are {\it not} in contradiction with \cite{ILV11}, as opposed to the authors' claim. We start with the full asymptotic expansion of the spectral action of Chamseddine and Connes \cite{CC96,CC97} in the case of the Yang--Mills system on a flat background manifold. That is, we study the asymptotics (as $\Lambda \to \infty$) of the spectral action:
\begin{equation}
\label{eq:sa}
S[A] := \tr f((D+A)/\Lambda)- \tr f(D/\Lambda)
\end{equation}
with $A= i \gamma^\mu A_\mu$ a Yang--Mills gauge field, minimally coupled to the Dirac operator $D$ on the flat background. We have subtracted the purely `gravitational' part $\tr f(D/\Lambda)$, being interested mostly in the gauge part of the spectral action. This also justifies our choice of a flat background manifold. 

The trace is over the $L^2$-spinor space and $f$ is a suitable function on $\R$. One concludes from this that the large eigenvalues of $D+A$ do not contribute to the spectral action, since $f(x) \to 0$ as $x \to \infty$ in order to make the trace well-defined. We note, however, that this does not mean that the corresponding field theory for $A$ is finite. Indeed, the large eigenvalues of $D+A$ have little to do with the high-frequency modes of $A$ itself. 

Chamseddine and Connes established in \cite{CC96} that the above spectral action \eqref{eq:sa} is given, asymptotically as $\Lambda \to \infty$, by
$$
S[A] \sim %\frac{f_4 N^2 \Lambda^4}{2 \pi^2} \Vol(M)
- \frac{f(0)}{24 \pi^2} \int_M \tr_N F_{\mu \nu} F^{\mu\nu} + \cO(\Lambda^{-1})  .
$$
%where the coefficients $f_{i}$ are related to the derivatives and moments of the function $f$. 
Thus, as $\Lambda \to \infty$ the spectral action reduces to the Yang--Mills action% (plus a cosmological term)
. In the original \cite{CC96,CC97} the authors adopted the Wilsonian viewpoint in which $\Lambda$ sets a physical energy scale. Instead, in the present paper we interpret $\Lambda$ as a regularizing cutoff parameter. %Eventually, after a perturbative quantization and renormalization, one sends $\Lambda$ to infinity. 
This motivates the fact that we start with the asymptotic expansion rather than the full expression \eqref{eq:sa} as \cite{ILV11} do. In order to fully clarify the apparent mismatch with {\it loc.~cit.}, we make a slight change of notation with respect to our previous \cite{Sui11}: we will write $S^\Lambda[A]$ for the asymptotic expansion (as $\Lambda \to \infty$) of the spectral action $S[A]$ defined in Eq. \eqref{eq:sa}, stressing the role of $\Lambda$ as a regulator. Henceforth, we will refer to $S^\Lambda$ as the asymptotically expanded spectral action.
%fIt is illustrative to compare this with lattice gauge theory where a similar approach is taken: applying a lattice regularization to the Yang--Mills action one obtains the Wilson action. One then proceeds to quantize the latter, to eventually let the lattice spacing go to zero to recover a quantization of the original theory. In our case, we turn the spectral action in a higher-derivative action that regularizes Yang--Mills theory and reduces to the Yang--Mills action as the regularizing parameter $\Lambda$ goes to infinity.

As said, we also take the terms proportional to $\Lambda^{-k}$ ($k >0$) into account. In fact, we will explicitly determine the tadpole terms and the free part of $S^\Lambda[A]$ to any order in $\Lambda$. In other words, we compute the parts linear and quadratic in $A$, respectively, exploiting a formula that we derived in \cite{Sui11}. This results in:
\begin{align*}
S^{\Lambda}_\tad[A] &= \frac{f_2}{4 \pi^2} \int \tr \partial_\mu A^\mu \\
S^\Lambda_0[A] &= - \int( \partial_\mu A_\nu  - \partial_\nu A_\mu)\varphi_\Lambda(\Delta) (\partial_\mu A_\nu - \partial_\nu A_\mu),
\end{align*}
where $\Delta = D^2$ is the Laplacian on the flat background, and $\varphi_\Lambda$ is the expansion
$$
\varphi_\Lambda(\Delta) := %(f_0 c_0)^{-1}
\sum_{k \geq 0} (-1)^k \Lambda^{-2k} f_{-2k} c_k \Delta^k
$$
with the $c_k$ certain combinatorial expressions, and the $f_i$ related to the derivatives of $f$ at zero. Note that for a simple non-abelian gauge group, the tadpole term vanishes, which fits nicely with \cite{CC06}.

After a suitable gauge-fixing, we explicitly determine the gauge propagator to any order in the derivatives. It is given by
$$
D_{\mu\nu}^{ab}(p; \Lambda) = \left[ g_{\mu\nu} - (1-\xi) \frac{p_\mu p_\nu}{p^2}\right] \frac{\delta^{ab}}{p^2 \varphi_\Lambda(p^2)} .
$$
Under suitable assumptions on the function $f$, this factor $\varphi_\Lambda(p^2)$ is a strictly positive polynomial, and improves the UV-behaviour of the corresponding perturbative quantum theory, which is typical for higher-derivative gauge theories as introduced in \cite{Sla71,Sla72b} (cf. \cite[Section 4.4]{FS80}). Consequently, after also incorporating the Faddeev--Popov ghost fields, we show in Section \ref{sect:ren-sa} that the asymptotically expanded spectral action $S^\Lambda[A]$ for the Yang--Mills system is power-counting superrenormalizable. We use zeta function regularization in a background field gauge -- exploiting the explicit forms for the heat invariants for higher-order Laplacians derived by \cite{Gil80} -- to determine in Section \ref{sect:1loop-sa} the form of the counterterm, which is proportional % consisting of %a term proportional to the volume of $M$ and 
%a term proportional 
to the Yang--Mills action.
They can thus safely be subtracted from the spectral action, involving only the Yang--Mills term in $S^\Lambda[A]$. 
We conclude that $S^\Lambda[A]$ can be renormalized through a redefinition of the coefficient $f(0)$% and $f_4$
:
\begin{gather*}
f(0) \mapsto f(0) +24 \pi^2 \left( c  + \tilde c\right)
\left(\frac{1}{z} + 2k \ln \mu\right)%  , \qquad
%f_4 \mapsto f_4 - \frac{2 \pi^2}{\Lambda^4 N^2} \left(q +\tilde q\right)\left( \frac{1}{z}+2k \ln \mu \right)
\end{gather*}
where $\mu$ is a mass scale. Furthermore, $c$ and $\tilde c$ are constants % and $q$ and $\tilde q$ are polynomials in the derivatives of $f$, 
determined from the computations of the divergent part. This shows renormalizability of the asymptotic expansion $S^\Lambda[A]$ for the Yang--Mills system. 

We explicitly compute the counterterms for the spectral action up to 6'th order in $A$. Even though this theory is not superrenormalizable (which would require also terms of 8'th order in $A$), it serves as an illustration of our methods.
Finally, we comment on the subtle relation between the renormalized asymptotically expanded spectral action for the Yang--Mills system and the usual renormalization of Yang--Mills theory.

\bigskip

Let us end this section by clarifying the apparent mismatch of the above results with \cite{ILV11} (see also Remark \ref{rem:ILV} below). There it was shown that the quadratic part of the spectral action decays as $1/p^4$ as $p \to \infty$, which seems to contradict the above results. However, note that the field theories that one is comparing are different. 
In the above, and also in \cite{Sui11}, we adopt an asymptotic expansion of the spectral action in the parameter $\Lambda$. Together with a suitable choice of the function $f$, this allows for the corresponding action functional $S^\Lambda[A]$ to have only finitely many terms when expanded in $\Lambda$, thus defining a local field theory. 
% is interpreted as a regularizing cutoff parameter, reducing $S^\Lambda[A]$ to the Yang--Mills action as $\Lambda \to \infty$. 
%By a suitable choice of the function $f$, the action functional $S^\Lambda[A]$ has finitely many terms when expanded in $\Lambda$, and thus defines a local field theory. 
%Moreover, the consideration of an asymptotic expansion in $\Lambda$ does not give rise to a bound on the fields or on their momentum with respect to $\Lambda$.
In contrast, the authors in \cite{ILV11} consider the spectral action $S[A]$ as defined in Eq. \eqref{eq:sa} (without expanding in $\Lambda$). This defines a different -- in fact non-local -- field theory, with correspondingly different large momentum behaviour.

%A final remark applies to the physical significance of $\Lambda$. Interpreted as a regularizing parameter as we do, one needs to let $\Lambda \to \infty$, after quantizing $S_\Lambda[A]$. The potential divergencies in this procedure should be subtracted; we will come back to a possible approach to this in the Conclusions. Here we would like to point out the difference with respect to \cite{CC96,CC97} where $\Lambda$ defined a physical energy scale, adopting the Wilsonian viewpoint of effective actions.  

\section{The Yang--Mills system}

The object of study in this paper is the {\it spectral action} for the Yang--Mills (YM) system on a compact background manifold. It is given by the relatively simple formula:
$$
S[A] := \tr f(D_A/\Lambda) - \tr f(D/\Lambda).
$$
This spectral action has firm roots in the noncommutative geometrical description of the Yang--Mills system. One considers a Dirac operator with coefficients in a $SU(N)$-vector bundle equipped with a connection $A$. That is, locally we have 
$$
D_A = i \gamma^\mu ( \nabla_\mu +  A_\mu)
$$
with $\nabla_\mu$ the spin connection on a Riemannian spin manifold $M$ and $A_\mu$ a skew-hermitian traceless matrix. The (hermitian) Dirac gamma matrices satisfy $\{ \gamma_\mu, \gamma_\nu \} = 2 g_{\mu\nu}$ and are represented on spinor space $S_x$ for each $x \in M$. The Dirac operator then acts as a self-adjoint operator in the Hilbert space $\H$ of $M_N(\C)$-valued spinors:
$$
\H := L^2(M,S) \otimes M_N(\C).
$$
Such a construction is {\it noncommutative} in the sense that now coordinates on $M$ are naturally $M_N(\C)$-valued as well, which leads to consider the basic set of data 
$$
( C^\infty(M)\otimes M_N(\C), L^2(M,S) \otimes M_N(\C), D_A \otimes 1).
$$
This is the first example of a {\it spectral triple}. We will not go into further details on this, but refer to \cite{C94,GVF01,CM07} for more details.
For simplicity, we take $M$ to be flat ({\it i.e.} vanishing Riemann curvature tensor) and 4-dimensional. 
Furthermore, we will assume that $f$ is a Laplace--Stieltjes transform:
$$
f(x) = \int_{t>0} e^{-tx^2} d\mu(t).
$$
%\begin{lma}
%\label{lma:der-laplace}
%The following identities hold for the derivatives of the Laplace transform:
%\begin{enumerate}
%\item $f'(x) = \int (-2t x) e^{-t x^2 } d\mu(t) $
%\item $f''(x) = \int  \left( -2t + 4 t^2 x^2 \right) e^{-t x^2 } d\mu(t) $
%\end{enumerate}
%\end{lma}

\begin{prop}[\cite{CC97}]
In the above notation, there is an asymptotic expansion (as $\Lambda \to \infty$):
\begin{equation}
\label{sa-eym}
S[A]  \sim \sum_{m \geq 0} \Lambda^{4-m} f_{4-m} \int_M a_m (x,D_A^2),
\end{equation}
in terms of the Seeley--De Witt invariants of $D_A^2$.
The coefficients are defined by $f_k := \int t^{-k/2} d\mu(t)$; in particular $f_0 = f(0)$.
\end{prop}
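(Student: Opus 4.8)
The plan is to reduce the spectral action to a small-time heat-trace computation, using the Laplace--Stieltjes form of $f$, and then to read off the $\Lambda$-expansion from the classical Seeley--De Witt asymptotics of the Laplace-type operator $D_A^2$. Since $f(x) = \int_{t>0} e^{-tx^2}\, d\mu(t)$, the (bounded, measurable) functional calculus gives $f(D_A/\Lambda) = \int_{t>0} e^{-(t/\Lambda^2)\, D_A^2}\, d\mu(t)$. Each $e^{-(t/\Lambda^2) D_A^2}$ is trace class (the heat semigroup of $D_A^2$ on the compact manifold $M$ is smoothing), so a Fubini--Tonelli argument lets me interchange the trace with the $t$-integration and obtain
$$
\tr f(D_A/\Lambda) = \int_{t>0} \tr\big(e^{-(t/\Lambda^2)\, D_A^2}\big)\, d\mu(t).
$$
By the Lichnerowicz formula $D_A^2$ is a generalized Laplacian, so on the flat $4$-manifold $M$ its heat trace has the standard expansion $\tr\big(e^{-s D_A^2}\big) \sim \sum_{m\geq 0} s^{(m-4)/2} \int_M a_m(x, D_A^2)$ as $s \to 0^+$, with the odd-$m$ coefficients vanishing in the absence of boundary.

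Substituting $s = t/\Lambda^2$ and integrating term by term against $d\mu(t)$, the $m$-th summand produces a factor $(t/\Lambda^2)^{(m-4)/2} = \Lambda^{4-m}\, t^{(m-4)/2}$. Pulling the power of $\Lambda$ out of the $t$-integral and recognizing the remaining moment as $\int_{t>0} t^{(m-4)/2}\, d\mu(t) = f_{4-m}$, directly from the definition $f_k = \int t^{-k/2}\, d\mu(t)$, yields $\sum_{m\geq 0}\Lambda^{4-m} f_{4-m} \int_M a_m(x, D_A^2)$. The subtracted piece $\tr f(D/\Lambda)$ has the identical expansion with $A$ set to zero, so strictly the coefficient of $\Lambda^{4-m} f_{4-m}$ in $S[A]$ is $\int_M \big(a_m(x, D_A^2) - a_m(x, D^2)\big)$; since $M$ is flat, the $A$-independent contributions are exactly the pure-gravitational terms, and I would record that the displayed formula is to be read with this subtraction understood.

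The genuinely delicate step, and the one I expect to be the main obstacle, is justifying that the $t$-integration may be exchanged with the small-time asymptotic expansion: the heat-trace expansion is valid only as $s \to 0^+$, whereas $t$ ranges over all of $(0,\infty)$. The clean route is to split the integral at a fixed scale and control the remainder $R_N(s) := \tr\big(e^{-s D_A^2}\big) - \sum_{m < N} s^{(m-4)/2} \int_M a_m(x, D_A^2)$, which is $\cO(s^{(N-4)/2})$ as $s \to 0^+$ and stays bounded as $s \to \infty$ because $D_A^2$ has discrete, nonnegative spectrum. After the rescaling $s = t/\Lambda^2$ one then checks that this error contributes only $\cO(\Lambda^{4-N})$, so the series is an honest asymptotic expansion and not merely a formal one. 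This control, together with the convergence of the moments $f_{4-m}$ --- precisely the requirement that the measure $\mu$ decay suitably at $0$ and at $\infty$, which is the content of assuming $f$ to be a Laplace--Stieltjes transform --- is what makes the interchange legitimate.
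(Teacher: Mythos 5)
Your argument is correct and is essentially the standard one: the paper itself offers no proof here but simply cites Chamseddine--Connes, and their derivation is exactly this route --- functional calculus on the Laplace--Stieltjes form of $f$, the small-time Seeley--De Witt expansion of $\tr e^{-sD_A^2}$, and term-by-term integration against $d\mu(t)$ with the moments $\int t^{(m-4)/2}\,d\mu(t)=f_{4-m}$, the tail of the $t$-integral being controlled by the finiteness of the moments of $\mu$. One small imprecision: for $N>4$ the remainder $R_N(s)$ is not bounded as $s\to\infty$ (the partial sum contains positive powers of $s$), but it is polynomially bounded, and that is enough since the same moment conditions on $\mu$ that make $f_{4-N}$ finite absorb this growth, so your splitting argument goes through as described.
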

We will denote the right-hand side of Equation \eqref{sa-eym} by $S^\Lambda[A]$.
Recall that the Seeley--De Witt coefficients $a_m(x,D_A^2)$ are gauge invariant (and coordinate independent) polynomials in the fields $A_\mu$. Indeed, the Weitzenb\"ock formula gives
\begin{equation}
\label{eq:weitzenbock}
D_A^2 =- (\partial_\mu +A_\mu) (\partial^\mu + A^\mu)- \frac{1}{2} \gamma^\mu \gamma^\nu F_{\mu\nu}
\end{equation}
in terms of the curvature $F_{\mu\nu} = \partial_\mu A_\nu - \partial_\nu A_\mu + [A_\mu,A_\nu]$ of $A_\mu$. Consequently, a Theorem by Gilkey \cite[Theorem 4.8.16]{Gil84} shows that (in this case) $a_m$ are polynomial gauge invariants in $F_{\mu\nu}$ and its covariant derivatives. The {\it order} $\ord$ of $a_m$ is $m$, where we set on generators:
$$
\ord A_{\mu_1; \mu_2\cdots \mu_k} = k.
$$
Consequently, the curvature $F_{\mu\nu}$ has order $2$, and $F_{\mu_1 \mu_2; \mu_3 \cdots \mu_k}$ has order $k$. For example, $a_4(x,D_A^2)$ is proportional to $\tr F_{\mu\nu}F^{\mu\nu}$. In fact, we have:
%and more generally:
%$$
%a_{4+2k}(x,D_A^2) =- c_k \tr F_{\mu\nu} \Delta^k_A (F^{\mu\nu}) + \cO(F^3)
%$$
%for some constants $c_k$ and the Laplacian $\Delta_A= -(\partial_\mu + A_\mu)^2$ (see also \cite{Avr99} and references therein). 
%The remainder is of third and higher order in $F$, plus its covariant derivatives, adding up to an order equal to $4+2k$.
\begin{thm}[Chamseddine-Connes \cite{CC97}]
\label{thm:CC}
The spectral action for the above Yang--Mills system is given, asymptotically as $\Lambda \to \infty$, by
$$
S[A] \sim \frac{f_4 N^2 \Lambda^4}{2 \pi^2} \Vol(M)- \frac{f_0}{24 \pi^2} \int_M \tr_N F_{\mu \nu} F^{\mu\nu} + \cO(\Lambda^{-1})  
$$
where $\tr_N$ denotes the trace in (the adjoint representation) $\su(N)$.
\end{thm}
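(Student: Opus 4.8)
The plan is to feed the Weitzenb\"ock form \eqref{eq:weitzenbock} of $D_A^2$ into the general heat-kernel expansion of the preceding Proposition and simply read off the first few coefficients. Since that Proposition already supplies the asymptotic series $S[A]\sim\sum_{m\geq 0}\Lambda^{4-m}f_{4-m}\int_M a_m(x,D_A^2)$, everything reduces to computing the low-order Seeley--De Witt invariants of the Laplace-type operator $D_A^2$. The odd coefficients vanish on the closed manifold $M$, so the contributions up to order $\Lambda^0$ come only from $m=0,2,4$, with the remaining $a_m$ feeding the $\cO(\Lambda^{-1})$ remainder.

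First I would put $D_A^2$ into Gilkey's canonical form $P=-(g^{\mu\nu}\nabla_\mu\nabla_\nu+E)$. From \eqref{eq:weitzenbock} the connection is $\nabla_\mu=\partial_\mu+A_\mu$ (the spin connection drops out on the flat background), its curvature is $\Omega_{\mu\nu}=F_{\mu\nu}$, and the endomorphism is $E=\tfrac12\gamma^\mu\gamma^\nu F_{\mu\nu}$. With these identifications, and using that the Riemann tensor vanishes, Gilkey's universal formulas in dimension four reduce, modulo total derivatives, to $a_0=(4\pi)^{-2}\tr\id$, $a_2=(4\pi)^{-2}\tr E$, and $a_4=(4\pi)^{-2}\tfrac{1}{360}\tr(180\,E^2+30\,\Omega_{\mu\nu}\Omega^{\mu\nu})$.

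What remains is purely algebraic: the combined Clifford and $\su(N)$ traces. For $a_0$ the trace is just the dimension of the representation space, $\tr\id=4N^2$, producing the cosmological $\Lambda^4$ term. For $a_2$ one has $\tr E\propto g^{\mu\nu}F_{\mu\nu}=0$ by antisymmetry of $F$, so there is no $\Lambda^2$ term on the flat background. For $a_4$ I would use $\tr_{\mathrm{spinor}}(\gamma^\mu\gamma^\nu\gamma^\alpha\gamma^\beta)=4(g^{\mu\nu}g^{\alpha\beta}-g^{\mu\alpha}g^{\nu\beta}+g^{\mu\beta}g^{\nu\alpha})$ to obtain $\tr E^2=-2\,\tr_N F_{\mu\nu}F^{\mu\nu}$, while $\tr\Omega_{\mu\nu}\Omega^{\mu\nu}=4\,\tr_N F_{\mu\nu}F^{\mu\nu}$; the combination $180\cdot(-2)+30\cdot4=-240$ collapses the prefactor to $-\tfrac{1}{24\pi^2}$, giving $\int_M a_4=-\tfrac{1}{24\pi^2}\int_M\tr_N F_{\mu\nu}F^{\mu\nu}$. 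Assembling the $m=0,2,4$ terms against their weights $\Lambda^{4-m}f_{4-m}$ then yields the stated expansion.

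The main obstacle, and where care is needed, is twofold. First, reducing $D_A^2$ to Gilkey form without sign or factor errors, since the clean constant $-1/24\pi^2$ hinges on the exact cancellation in $180\cdot(-2)+30\cdot4$. Second, fixing the overall normalization of the leading volume term: this depends on the representation in which $A_\mu$ acts on $M_N(\C)$ (the adjoint action coming from the inner fluctuations, whence the notation $\tr_N$) and on the conventions buried in $(4\pi)^{-n/2}$ and in the moments $f_k$. The $F_{\mu\nu}F^{\mu\nu}$ coefficient is the convention-robust, physically relevant output, whereas the cosmological constant is the piece most sensitive to these normalization choices.
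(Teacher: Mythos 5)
Your approach is correct and coincides with the paper's own verification of the key coefficient: Theorem~\ref{thm:CC} is merely cited from \cite{CC97}, but the identity $a_4(D_A^2)=-\tfrac{1}{24\pi^2}\int_M\tr_N F_{\mu\nu}F^{\mu\nu}$ is computed in Section 6 (Proposition on the noncommutative Einstein--Yang--Mills system) by exactly your route --- Gilkey's Theorem 4.8.16 applied to the Weitzenb\"ock form, with $\tr E^2=-2\tr_N F_{\mu\nu}F^{\mu\nu}$, $\tr\Omega_{\mu\nu}\Omega^{\mu\nu}=4\tr_N F_{\mu\nu}F^{\mu\nu}$ and $\tfrac{180\cdot(-2)+30\cdot 4}{360}=-\tfrac{2}{3}$. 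The one loose end is the volume term, which you rightly flag as convention-sensitive but do not pin down: with the Gilkey normalization $a_0=(4\pi)^{-2}\tr(\id)=(4\pi)^{-2}\cdot 4N^2$ and $f_4=\int t^{-2}\,d\mu(t)$ your computation yields $f_4N^2\Lambda^4/4\pi^2$ rather than the stated $f_4N^2\Lambda^4/2\pi^2$ (and in any case this term cancels against the subtracted $\tr f(D/\Lambda)$ in the definition \eqref{eq:sa}), so the discrepancy lies in the bookkeeping of $f_4$ rather than in your argument; the gauge-theoretic content, namely the coefficient $-f_0/24\pi^2$, is correctly and completely derived.
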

This appearance of the Yang--Mills action at lowest order is the main motivation to study this model. We aim at a better understanding also of the terms in $S^\Lambda[A]$ proportional $\Lambda^{-k}$ ($k >0$). First, we compute the coefficients $f_k$ explicitly. 
\begin{lma}
\label{lma:coef}
The constants $f_{k} := \int t^{-k/2} d\mu(t)$ $(k \in \Z)$ are given by
\begin{enumerate}
\item $k> 0$: $f_k = \frac{2}{\Gamma\left(\frac{k}{2}\right)} M_{k-1}[f]$ with $M_{k-1}$ the $k-1$'th moment of $f$ ,
\item $k\geq 0$: $f_{-2k} = \frac{(-1)^k f^{(2k)}(0)}{(2k-1)!!}.
$
\end{enumerate}
\end{lma}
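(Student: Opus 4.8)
The plan is to prove the two parts separately, in each case inserting the Laplace--Stieltjes representation $f(x) = \int_{t>0} e^{-tx^2}\, d\mu(t)$ and then interchanging the order of the two operations involved (integration over $x$ with integration over $t$ in part (1), differentiation with integration in part (2)). In both cases the problem collapses to a single elementary Gaussian-type integral in the variable $x$, after which everything is bookkeeping.

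For part (1) I would start from the moment $M_{k-1}[f] = \int_0^\infty x^{k-1} f(x)\, dx$, substitute the representation of $f$, and use Fubini to obtain
\begin{equation*}
M_{k-1}[f] = \int_{t>0} \left( \int_0^\infty x^{k-1} e^{-tx^2}\, dx \right) d\mu(t).
\end{equation*}
The inner integral is evaluated by the substitution $u = t x^2$, which turns it into $\half\, t^{-k/2} \int_0^\infty u^{k/2-1} e^{-u}\, du = \half\, \Gamma(k/2)\, t^{-k/2}$. Pulling the constant $\half\,\Gamma(k/2)$ out of the remaining $t$-integral and recognizing $\int_{t>0} t^{-k/2}\, d\mu(t) = f_k$ gives $M_{k-1}[f] = \half\, \Gamma(k/2)\, f_k$, which is exactly the asserted identity after solving for $f_k$.

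For part (2), since $f$ depends on $x$ only through $x^2$ it is even, so I would expand the kernel as a power series $e^{-tx^2} = \sum_{n\geq 0} \frac{(-t)^n}{n!} x^{2n}$, differentiate $2k$ times and set $x=0$; only the $n=k$ term survives, giving $\frac{d^{2k}}{dx^{2k}} e^{-tx^2}\big|_{x=0} = \frac{(-t)^k}{k!}(2k)!$. Interchanging the $2k$-fold differentiation with the $t$-integral then yields $f^{(2k)}(0) = (-1)^k \frac{(2k)!}{k!}\, f_{-2k}$, and the double-factorial identity $(2k)! = 2^k\, k!\,(2k-1)!!$ recasts the prefactor in terms of $(2k-1)!!$, producing the formula in the statement after solving for $f_{-2k}$.

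The genuinely delicate points here are analytic rather than algebraic: both steps require exchanging an integral (or a $2k$-th derivative) with the Stieltjes integral against $d\mu$, which is legitimate only under integrability hypotheses on $\mu$ — precisely the finiteness of the moments $f_k$ and $f_{-2k}$ that appear in the conclusion. I would therefore make explicit the standing assumption that these are finite (equivalently, that $f$ is smooth at the origin and decays fast enough at infinity), and justify each exchange by dominated convergence or Fubini using a uniform bound on $x^{k-1} e^{-tx^2}$ and on the tails of the power series for the kernel. Once these interchanges are secured, the appearance of $\Gamma(k/2)$ and of the double factorial is routine.
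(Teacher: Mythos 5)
Part (1) of your proposal is correct and is in substance identical to the paper's argument: the paper writes $t^{-k/2}=\Gamma(k/2)^{-1}\int_{v>0}e^{-tv}v^{k/2-1}\,dv$ and applies Fubini to recognize $f$, whereas you run the same Fubini interchange in the opposite direction starting from $M_{k-1}[f]$; either way the content is the single Gamma integral $\int_0^\infty x^{k-1}e^{-tx^2}\,dx=\tfrac12\Gamma(k/2)t^{-k/2}$. Your remarks about justifying the interchanges are reasonable (the paper is silent on them), but they are not the issue.

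Part (2) has a genuine problem in its last step. Your computation is correct up to and including
$f^{(2k)}(0)=(-1)^k\frac{(2k)!}{k!}\,f_{-2k}=(-1)^k\,2^k\,(2k-1)!!\,f_{-2k}$,
but solving this for $f_{-2k}$ gives
$$
f_{-2k}=\frac{(-1)^k f^{(2k)}(0)}{2^k\,(2k-1)!!},
$$
which is \emph{not} the formula in the statement: it differs by the factor $2^k$. Your closing sentence, asserting that the double-factorial identity "produc[es] the formula in the statement," is therefore false as written; a one-line check with $\mu=\delta_1$, i.e.\ $f(x)=e^{-x^2}$, gives $f_{-2}=1$ while the stated formula yields $-f''(0)/1!!=2$. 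To be fair, your derivation is the honest one: the paper's own proof reaches the stated formula via the probabilists' Hermite polynomials $H_n(x)=(-1)^ne^{x^2/2}(d/dx)^ne^{-x^2/2}$ and $H_{2k}(0)=(-1)^k(2k-1)!!$, which tacitly treats the kernel as $e^{-tx^2/2}$ rather than the $e^{-tx^2}$ fixed in the definition of $f$, and thereby loses the same factor $2^k$. So the correct resolution is not to claim agreement but to flag the discrepancy: either the normalization of the Laplace--Stieltjes transform must be $f(x)=\int e^{-tx^2/2}d\mu(t)$ for part (2) as stated, or the stated formula should carry the extra $2^k$ in the denominator. As your proposal stands, the final step would fail if checked.
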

\proof
(1) was already derived in a slightly different form in \cite{CC97} (cf. \cite[Sect. 1.11]{CM07}). In our notation, we substitute $t^{-k/2}$ in the definition of $f_k$ using the Mellin transform (cf. Eq. \eqref{eq:mellin} below):
$$
f_k = %\int_{t>0} t^{-k/2} d \mu(t) = 
\frac{1}{\Gamma\left(\frac{k}{2}\right)}\int_{t>0} \int_{v>0} e^{-tv} v^{k/2-1} d\mu(t)dv =  \frac{1}{\Gamma\left(\frac{k}{2}\right)} \int_{v>0} v^{k-1} f(v) dv = \frac{2}{\Gamma\left(\frac{k}{2}\right)} M_{k-1}.
$$
(2) We derive for the even derivatives of $f$:
$$
f^{(2k)}(x) = \int_{t>0} e^{-tx^2/2} H_{2k}(\sqrt t x) t^{k} d\mu(t)
$$
in terms of the Hermite polynomials $H_{n}(x) \equiv (-1)^n e^{x^2/2} (d/dx)^n e^{-x^2/2}$. Evaluating both sides at zero gives the desired result, using in addition that $H_{2k}(0)= (-1)^k (2k-1)!!$.
\endproof

Next, we take a closer look at the terms of lowest order in $A$ in $S[A]$. In particular, we will derive formulas for the {\it tadpole term}
$$
S_{\textup{tad}}[A] = \frac{\dd}{\dd u} S[uA] \bigg|_{u=0},
$$
and the {\it free action}
\begin{equation}
\label{eq:free-sa}
S_0[A] = \frac{1}{2} \frac{\dd}{\dd u}  \frac{\dd}{\dd v} S[uA+vA]  \bigg|_{u=v=0}.
\end{equation}
In the next subsection, we will expand the latter asymptotically as $\Lambda \to \infty$, in terms of the above coefficients $f_{k}$. 
For a fully rigorous derivation of the above formulas, and formulas for the higher order terms using G\^ateaux derivatives in a more general functional analytical setting, we refer to \cite{Sui11}. Here, we only sketch the derivation which is based on the following result:
\begin{lma}
\label{lma:heatoperator}
Let $P(A)= DA+AD+A^2$ with $A= i \gamma^\mu A_\mu$. Then
$$
e^{-t (D_A)^2} = e^{-tD^2} - t \int_0^1 ds ~ e^{-st (D_A)^2} P(A) e^{-(1-s)t D^2}.
$$
\end{lma}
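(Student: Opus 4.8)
The plan is to read the claimed identity as an instance of Duhamel's formula (variation of parameters) applied to the two heat semigroups generated by $(D_A)^2$ and $D^2$, whose difference is precisely $P(A)$. The only algebraic input needed is the following. Since $D_A = D+A$ with $A = i\gamma^\mu A_\mu$, expanding the square gives
$$
(D_A)^2 = (D+A)^2 = D^2 + DA + AD + A^2 = D^2 + P(A),
$$
so that $P(A)$ is exactly the perturbation relating the two Laplace-type operators. This is the only place the specific form of $P(A)$ enters; everything else is semigroup bookkeeping.

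Next I would introduce the interpolating family
$$
G(s) := e^{-st(D_A)^2}\, e^{-(1-s)tD^2}, \qquad s \in [0,1],
$$
which satisfies $G(0) = e^{-tD^2}$ and $G(1) = e^{-t(D_A)^2}$. Differentiating in $s$ and using that each generator commutes with its own semigroup,
$$
\frac{\dd G}{\dd s} = e^{-st(D_A)^2}\bigl(-t(D_A)^2 + tD^2\bigr)e^{-(1-s)tD^2} = -t\, e^{-st(D_A)^2}\, P(A)\, e^{-(1-s)tD^2},
$$
where in the last step I inserted the identity $(D_A)^2 - D^2 = P(A)$. Integrating from $0$ to $1$ yields $G(1) - G(0) = -t\int_0^1 \dd s\, e^{-st(D_A)^2}P(A)e^{-(1-s)tD^2}$, which is exactly the claimed formula after substituting the endpoint values of $G$.

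The hard part is the analytic justification rather than the formal computation. Since $M$ is compact and $A$ is smooth, both $D^2$ and $(D_A)^2$ are nonnegative self-adjoint operators with discrete spectrum, so the heat semigroups $e^{-tD^2}$ and $e^{-t(D_A)^2}$ are well-defined, smoothing, and trace class for $t>0$. What must be checked is that $s \mapsto G(s)$ is differentiable in a suitable operator topology, that $P(A)$ -- a first-order differential operator -- is relatively bounded so that the sandwiched expression $e^{-st(D_A)^2}P(A)e^{-(1-s)tD^2}$ is a genuine (indeed trace-class) operator for $s \in (0,1)$, and that differentiation under the integral sign is permitted with the integrand remaining integrable up to the endpoints. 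These points are standard for Laplace-type operators on compact manifolds; I would either verify them directly or defer to the rigorous treatment of these G\^ateaux-derivative computations in \cite{Sui11}.
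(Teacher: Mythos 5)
Your proof is correct, but it runs along a different track than the paper's. The paper verifies that the candidate operator $e^{-tD^2} - \int_0^t dt'\, e^{-(t-t')D_A^2} P(A) e^{-t'D^2}$ satisfies the Cauchy problem $(d_t + D_A^2)u = 0$, $u(0)=1$, and then concludes by uniqueness of its solution that this candidate must equal $e^{-tD_A^2}$ (the $s$-form of the integral follows by substituting $t' = (1-s)t$). You instead apply the fundamental theorem of calculus to the interpolating family $G(s) = e^{-st(D_A)^2}e^{-(1-s)tD^2}$ and integrate its $s$-derivative from $0$ to $1$. The two arguments are the two standard derivations of Duhamel's formula, and both hinge on the same algebraic identity $(D_A)^2 - D^2 = P(A)$, which you state explicitly and the paper leaves implicit. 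What each buys: your version is self-contained in that it needs no appeal to a uniqueness theorem, only to differentiability of $G(s)$ in a suitable operator topology (including handling the unbounded first-order operator $P(A)$ sandwiched between smoothing semigroups, which you correctly flag as the analytic point to check); the paper's version only ever differentiates a semigroup against its own generator, which is immediate from the spectral theorem, but must then invoke uniqueness for the abstract Cauchy problem. Both treatments are at a comparable, partly formal level of rigor, and both defer the fully rigorous functional-analytic justification to \cite{Sui11}, as you also propose.
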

\begin{proof}
Note that $e^{-t D_A^2}$ is the unique solution of the Cauchy problem
$$
\left\{ \begin{array}{r} \left( d_t + D_A^2 \right) u(t) = 0 \\ u(0)=1 \end{array} \right.
$$
with $d_t = d/dt$. Using the fundamental theorem of calculus, we find that 
\begin{align*}
d_t \left[ e^{-tD^2} - \int_0^t dt' e^{-(t-t')D_A^2} P(A) e^{-t'D^2} \right]
&= -D^2 e^{-tD^2} - P(A) e^{-t D^2} \\
& \qquad \qquad +  \int_0^t dt' D_A^2 e^{-(t-t')D_A^2} P(A) e^{-t'D^2} \\
&= -D_A^2 \left( e^{-tD^2} -  \int_0^t dt' e^{-(t-t')D_A^2} P(A) e^{-t'D^2}  \right)
\end{align*}
showing that the bounded operator $e^{-tD^2} - \int_0^t dt' e^{-(t-t')D_A^2} P(A) e^{-t'D^2}$ also solves the above Cauchy problem. 
\end{proof}

\begin{prop}
\label{prop:firstder}
The tadpole term is given by
$$
S_\tad[A] =  - \int_{t>0}  t \Lambda^{-2} \tr_\H \left[ \{D,A\} e^{-t (D/\Lambda)^2} \right]d\mu(t) = \Lambda^{-2y} \tr_\H A f'(D/\Lambda).
$$
\end{prop}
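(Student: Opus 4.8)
The plan is to reduce everything to the Duhamel identity of Lemma \ref{lma:heatoperator} combined with the Laplace--Stieltjes representation of $f$. Using $f(x)=\int_{t>0}e^{-tx^2}d\mu(t)$ I would first write
$$
S[A] = \int_{t>0}\tr_\H\left[e^{-t (D_A/\Lambda)^2}-e^{-t(D/\Lambda)^2}\right]d\mu(t),
$$
and then apply Lemma \ref{lma:heatoperator} with $t$ replaced by $t/\Lambda^2$ (so that $(D_A/\Lambda)^2 = D_A^2/\Lambda^2$) to rewrite the bracketed difference as
$$
-\frac{t}{\Lambda^2}\int_0^1 ds\; e^{-st(D_A/\Lambda)^2}\,P(A)\,e^{-(1-s)t(D/\Lambda)^2}.
$$
This turns $S[A]$ into a single integral over $t$ and $s$ of a trace that is manifestly at least first order in $A$, the $A$-dependence entering through $P(A)$.

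Next I would substitute $A\mapsto uA$ and differentiate at $u=0$. Since $P(uA)=u\{D,A\}+u^2A^2$ vanishes at $u=0$, the term in which the $u$-derivative hits the exponential $e^{-st(D_{uA}/\Lambda)^2}$ comes multiplied by $P(0)=0$ and drops out; only differentiating the explicit factor $P(uA)$ survives, producing $\{D,A\}$, while the remaining exponentials are evaluated at $u=0$, replacing $D_A$ by $D$. This yields
$$
S_\tad[A] = -\frac{1}{\Lambda^2}\int_{t>0}t\int_0^1 ds\;\tr_\H\left[e^{-st(D/\Lambda)^2}\{D,A\}e^{-(1-s)t(D/\Lambda)^2}\right]d\mu(t).
$$
Since the two heat factors are both functions of $D^2$ they commute, so cyclicity of the trace collapses them and the integrand equals $\tr_\H[\{D,A\}e^{-t(D/\Lambda)^2}]$, independent of $s$. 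The $s$-integral is then trivial and gives the first displayed equality.

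For the second equality I would use that $D$ commutes with $e^{-t(D/\Lambda)^2}$ together with cyclicity to write $\tr_\H[\{D,A\}e^{-t(D/\Lambda)^2}]=2\tr_\H[AD\,e^{-t(D/\Lambda)^2}]$. Differentiating the Laplace--Stieltjes representation gives $f'(D/\Lambda)=\int_{t>0}(-2t)(D/\Lambda)e^{-t(D/\Lambda)^2}d\mu(t)$, whence $\tr_\H[Af'(D/\Lambda)]=-2\Lambda^{-1}\int_{t>0}t\,\tr_\H[AD\,e^{-t(D/\Lambda)^2}]d\mu(t)$. Comparing with the first equality and pulling out one power of $\Lambda$ identifies $S_\tad[A]=\Lambda^{-1}\tr_\H A f'(D/\Lambda)$ (the exponent $-2y$ in the statement being a misprint for $-1$).

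The algebra above is routine; the genuine content is analytic. I expect the main obstacle to be legitimizing the interchange of $\tfrac{\dd}{\dd u}$ with both the Stieltjes integral $\int d\mu(t)$ and the operator trace $\tr_\H$, and in particular making rigorous the claim that the first variation of the semigroup $e^{-st(D_{uA}/\Lambda)^2}$ decouples. This requires differentiability of the perturbed heat semigroup in $A$ together with trace-norm estimates uniform enough to differentiate under the trace, as well as control of the small-$t$ behaviour so that $\int_{t>0}t\,\tr_\H[\cdots]d\mu(t)$ converges. These are precisely the points treated via G\^ateaux derivatives in \cite{Sui11}; for the present purposes the above sketch, resting only on Lemma \ref{lma:heatoperator} and cyclicity, suffices.
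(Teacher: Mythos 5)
Your proposal is correct and follows essentially the same route as the paper's own proof: Duhamel's formula from Lemma \ref{lma:heatoperator} with $t\mapsto t/\Lambda^2$, the observation that only the explicit $P(uA)$ factor survives differentiation at $u=0$, cyclicity to collapse the $s$-integral, and the identity $f'(x)=\int(-2tx)e^{-tx^2}d\mu(t)$. Your reading of the exponent $\Lambda^{-2y}$ as a misprint for $\Lambda^{-1}$ is also consistent with the computation.
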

\begin{proof}
Using Lemma \ref{lma:heatoperator} we obtain after substituting $t \mapsto t/\Lambda^2$:
\begin{align*}
\frac{1}{u}\left(S[uA] - S[0] \right)
&= \frac{1}{u} \int  \tr \left( e^{-t (D_{uA}/\Lambda)^2} - e^{-t(D/\Lambda)^2} \right)d\mu(t)\\
&= -\frac{1}{u} \int  t \Lambda^{-2} \tr \int_0^1 e^{-st (D_{uA}/\Lambda)^2} P(uA) e^{-(1-s)t (D/\Lambda)^2} ds d\mu(t)\\
& \to -\Lambda^{-2} \int t \tr (DA+AD) e^{-t (D/\Lambda)^2}d \mu(t) \qquad \text{ as } u \to 0.
\end{align*}
Finally, the identity $f'(x) = \int (-2t x) e^{-t x^2 } d\mu(t) $ yields the displayed formula. 
\end{proof}
%Let us proceed and consider the next term:

\begin{prop}
\label{prop:secondder}
The free part of the spectral action is given by
\begin{multline*}
S_0[A] = \int_{t>0}  \bigg\{ -t \Lambda^{-2}   \tr \left[ A^2 e^{-t (D/\Lambda)^2} \right] \\
+ \frac{1}{2} t^2 \Lambda^{-4}   \tr \int_0^1 \{ D, A\} e^{-st (D/\Lambda)^2} \{D,A\} e^{-(1-s)t (D/\Lambda)^2} ds \bigg\} d\mu(t).
\end{multline*}
\end{prop}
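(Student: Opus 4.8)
The plan is to extract the part of $S[wA]$ that is quadratic in the scaling parameter $w$. Indeed, since $uA+vA=(u+v)A$, the definition \eqref{eq:free-sa} reads $S_0[A]=\tfrac12\frac{\dd^2}{\dd w^2}S[wA]\big|_{w=0}$, so $S_0[A]$ is exactly the coefficient of $w^2$ in the expansion of $S[wA]$. Writing $P(wA)=w\{D,A\}+w^2A^2$ and $K_0(t):=e^{-t(D/\Lambda)^2}$, I would apply Lemma \ref{lma:heatoperator} to $e^{-t(D_{wA}/\Lambda)^2}$ (after the substitution $t\mapsto t/\Lambda^2$, exactly as in the proof of Proposition \ref{prop:firstder}), and then apply it a second time to the surviving factor $e^{-st(D_{wA}/\Lambda)^2}$ sitting inside the first Duhamel term. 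This yields an expansion of $e^{-t(D_{wA}/\Lambda)^2}-K_0(t)$ into a first-order term (one insertion of $P(wA)$), a second-order term (two insertions), and a remainder carrying at least three factors of $P(wA)$, hence $\cO(w^3)$.

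Next I would collect the two sources of $w^2$. The first comes from the quadratic piece $w^2A^2$ of $P(wA)$ inside the single-insertion term; after taking the trace, cyclicity merges $K_0(st)$ and $K_0((1-s)t)$ into $K_0(t)$, the $s$-integral is trivial, and one obtains $-t\Lambda^{-2}\,\tr[A^2K_0(t)]$. The second comes from the double-insertion term in which both factors $P(wA)$ contribute their linear piece $w\{D,A\}$ while the leftover heat factor is frozen at $w=0$; this gives the double integral $t^2\Lambda^{-4}\int_0^1 ds\int_0^1 ds'\,s\,\tr[\,K_0(s'st)\{D,A\}K_0((1-s')st)\{D,A\}K_0((1-s)t)\,]$.

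The main work is to show this double integral equals $\tfrac12\,t^2\Lambda^{-4}\int_0^1\tr[\{D,A\}K_0(st)\{D,A\}K_0((1-s)t)]\,ds$. I would first use cyclicity to merge the outer two heat factors, so that the integrand depends on $s,s'$ only through $\alpha:=s(1-s')$, namely it equals $s\,G(\alpha)$ with $G(\alpha):=\tr[\{D,A\}K_0(\alpha t)\{D,A\}K_0((1-\alpha)t)]$. Changing variables from $s'$ to $\alpha$ at fixed $s$ turns the double integral into $\int_0^1 ds\int_0^s G(\alpha)\,d\alpha$, and Fubini on the triangle $0\le\alpha\le s\le1$ gives $\int_0^1(1-\alpha)G(\alpha)\,d\alpha$. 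The last ingredient is the symmetry $G(\alpha)=G(1-\alpha)$, itself a consequence of cyclicity of the trace, which converts $\int_0^1(1-\alpha)G(\alpha)\,d\alpha$ into $\tfrac12\int_0^1 G(\alpha)\,d\alpha$. Adding the two contributions and integrating against $d\mu(t)$ then produces the claimed formula.

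The genuinely delicate point is precisely this reparametrization-plus-symmetrization step that collapses the two nested Duhamel integrations into a single integral with the factor $\tfrac12$; everything else is bookkeeping of powers of $w$ together with repeated use of cyclicity. The analytic justifications — trace-class properties of the heat operators, differentiation under the $d\mu(t)$-integral, and the vanishing of the $\cO(w^3)$ remainder in the limit — are of the same nature as in Proposition \ref{prop:firstder} (and are treated rigorously in \cite{Sui11}), so I would only indicate them rather than belabour them.
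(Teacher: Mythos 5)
Your proof is correct, and it reaches the result by a slightly different route than the paper. The paper computes the mixed partial derivative in the order suggested by the definition: first $\partial_v S[uA+vA]|_{v=0}$, which by one application of Lemma \ref{lma:heatoperator} gives $-\int t\Lambda^{-2}\tr\{D_{uA},A\}e^{-t(D_{uA}/\Lambda)^2}d\mu(t)$, and then $\partial_u$ at $u=0$, where the product rule produces the $A^2$ term from $\partial_u\{D_{uA},A\}=2A^2$ and the double-$\{D,A\}$ term from a second first-order Duhamel expansion of $e^{-t(D_{uA}/\Lambda)^2}$; the prefactor $\tfrac12$ in Eq.~\eqref{eq:free-sa} then directly supplies the factor $\tfrac12$ in front of the second term, and only a single $s$-integral ever appears. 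You instead read $S_0[A]$ as the $w^2$-coefficient of $S[wA]$, which forces you to iterate the Duhamel formula and then collapse the nested simplex integral; the reparametrization $\alpha=s(1-s')$, the Fubini step giving $\int_0^1(1-\alpha)G(\alpha)\,d\alpha$, and the symmetry $G(\alpha)=G(1-\alpha)$ are all needed precisely to manufacture the $\tfrac12$ that the paper gets for free from its definition. Both arguments rest on the same Lemma \ref{lma:heatoperator} and the same analytic caveats deferred to \cite{Sui11}; your version costs an extra change of variables but has the virtue of making transparent why the simplex combinatorics produce the symmetric single-integral form, which is the same mechanism behind the factor $k!l!/(k+l+1)!$ used later in the proof of Theorem \ref{thm:free-sa}.
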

\begin{proof}
As in the proof of Proposition \ref{prop:firstder} we derive
$$
\frac{\dd}{\dd v} S[uA +vA] \bigg|_{v=0} = - \int t  \Lambda^{-2}  \{ D_{uA},A \} e^{-t (D_{uA}/\Lambda)^2} d\mu(t).
$$
Applying Lemma \ref{lma:heatoperator} once more, we find
\begin{multline*}
\frac{\dd}{\dd u}\frac{\dd}{\dd v} S[uA +vA] \bigg|_{u=v=0} \!\!\!\!\!= -2 \int t \Lambda^{-2}   \tr A^2 e^{-t (D/\Lambda)^2} d\mu(t)\\ + \int t^2 \Lambda^{-4}   \tr \{ D,A\} \int_0^1 e^{-st(D/\Lambda)^2} \{D,A\} e^{-(1-s)t (D/\Lambda)^2} ds d\mu(t),
\end{multline*}
as claimed.
\end{proof}
Note that the above formulas for $S_\tad$ and $S_0$ are exact, and not asymptotic expansions; below we will consider the asymptotics of $S_0[A]$ for large $\Lambda$.

\subsection{Local expressions for the tadpole and free part}
The aim of this section is to derive local expressions for the above tadpole and free part of the spectral action for the Yang--Mills system. For the latter, this is possible by adopting an asymptotic expansion. We will use heat kernel techniques which we briefly recall, referring for more details to \cite{BGV92}. 

First, the Dirac operator can be related to the Laplacian $\Delta$ on $M$ via Weitzenb\"ock's formula:
$$
D^2 = \Delta := -g^{\mu\nu} \partial_\mu \partial_\nu.
$$
The heat kernel for $\Delta$ is simply given by 
\begin{equation}
\label{heat-kernel}
k_t(x,y) = (4 \pi t)^{-2} e^{-\| x-y\|/4t}.
\end{equation}
It satisfies 
\begin{equation}
\label{heat-kernel-expansion}
\int k_t(x,y)\psi(y)dy \sim  \sum_{k=0}^\infty \frac{(-t)^k}{k!}( \Delta^k \psi)(x) 
\end{equation}
asymptotically as $t \to 0$. This reflects the fact that $k_t$ is the kernel of the heat operator in the sense that
$$
e^{-t \Delta} \psi(x) = \int_M k_t(x,y) \psi(y) dy; \qquad (\psi \in L^2(M)).
$$
The following relations will be convenient later:
\begin{equation}
\label{heat-kernel-relations}
\begin{aligned}
\partial_\mu^x k_t(x,y) &= -  \frac{x_\mu - y_\mu}{2t} k_t(x,y) = -\partial_\mu^y k_t(x,y),\\
\partial_\mu^x \partial_\nu^y k_t(x,y) &=   \frac{g_{\mu\nu}}{2t}  k_t(x,y) + \frac{(x_\mu - y_\mu)(y_\nu - x_\nu)}{4 t^2} k_t(x,y),\\
\partial_\mu^x \partial_\nu^x k_t(x,y) &= -  \frac{g_{\mu\nu}}{2t}  k_t(x,y) + \frac{(x_\mu - y_\mu)(x_\nu - y_\nu)}{4 t^2} k_t(x,y). 
\end{aligned}
\end{equation}

\begin{thm}
The tadpole term for the Yang--Mills system is given by
$$
S_\tad[A] = \frac{f_2 \Lambda^2}{4 \pi^2} \int_M \tr_N \partial_\mu A^\mu
$$
which vanishes for $A_\mu$ a $\su(N)$-gauge field.
\end{thm}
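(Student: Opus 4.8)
The plan is to start from the exact integral formula for the tadpole established in Proposition \ref{prop:firstder},
$$
S_\tad[A] = -\int_{t>0} t\,\Lambda^{-2}\,\tr_\H\!\left[\{D,A\}\, e^{-t(D/\Lambda)^2}\right]d\mu(t),
$$
and to reduce the operator trace over $\H = L^2(M,S)\otimes M_N(\C)$ to a local integral over $M$ using the heat-kernel facts recalled in \eqref{heat-kernel}--\eqref{heat-kernel-relations}. First I would insert the flat-space expressions $D = i\gamma^\nu\partial_\nu$ and $A = i\gamma^\mu A_\mu$ and expand the anticommutator. The terms in which a derivative lands on the spinor argument recombine, via the Clifford relation $\{\gamma^\mu,\gamma^\nu\}=2g^{\mu\nu}$, into a single first-order operator, so that
$$
\{D,A\} = -\gamma^\nu\gamma^\mu(\partial_\nu A_\mu) - 2A^\nu\partial_\nu,
$$
splitting into a zeroth-order (multiplication) piece that carries the nontrivial spinor structure and a first-order piece.

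The crucial point is that the first-order piece drops out of the trace. Since $e^{-t(D/\Lambda)^2}$ is smoothing on the compact manifold $M$ and $D^2=\Delta$ acts as the scalar Laplacian times the spinor identity, the trace is the integral over the diagonal of the smooth Schwartz kernel. The kernel of $A^\nu\partial_\nu e^{-t\Delta/\Lambda^2}$ is $A^\nu(x)\,\partial_\nu^x k_{t/\Lambda^2}(x,y)$, and the first relation in \eqref{heat-kernel-relations} gives $\partial_\nu^x k_t(x,y) = -\tfrac{x_\nu-y_\nu}{2t}k_t(x,y)$, which vanishes at $x=y$. Hence only the multiplication piece contributes. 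Performing the spinor trace with $\tr(\gamma^\nu\gamma^\mu)=4g^{\nu\mu}$ collapses $\gamma^\nu\gamma^\mu\partial_\nu A_\mu$ to $4\,\partial^\mu A_\mu$, while the $M_N(\C)$-trace turns the matrix field into $\tr_N(\partial^\mu A_\mu)$ (up to the normalization encoded in $\tr_N$).

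It then remains to evaluate the $L^2$-trace as an integral over the diagonal,
$$
\tr_{L^2}\!\left[(\partial^\mu A_\mu)\,e^{-t\Delta/\Lambda^2}\right] = \int_M (\partial^\mu A_\mu)(x)\,k_{t/\Lambda^2}(x,x)\,dx,
$$
where by \eqref{heat-kernel} the diagonal value is $k_{t/\Lambda^2}(x,x) = \Lambda^4/(16\pi^2 t^2)$. Substituting back, the $t$-integral becomes $\int_{t>0} t\,\Lambda^{-2}\cdot\Lambda^4 t^{-2}\,d\mu(t)\ \propto\ \Lambda^2\int_{t>0} t^{-1}\,d\mu(t) = \Lambda^2 f_2$, recognizing $f_2=\int t^{-1}d\mu(t)$ from the definition of the coefficients in Lemma \ref{lma:coef}. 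Collecting the numerical factors $4/(16\pi^2)=1/(4\pi^2)$ yields exactly $S_\tad[A]=\frac{f_2\Lambda^2}{4\pi^2}\int_M \tr_N \partial_\mu A^\mu$. The vanishing claim is then immediate: for an $\su(N)$-valued gauge field each $A_\mu$ is traceless, so $\tr_N\partial_\mu A^\mu = \partial_\mu\tr_N A^\mu = 0$.

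I expect the main obstacle to be the careful justification that the first-order operator $-2A^\nu\partial_\nu$ contributes nothing, that is, that the trace genuinely localizes to the diagonal of the heat kernel and that the off-diagonal first derivative vanishes there. The remainder is bookkeeping of the Clifford trace and a single heat-kernel coefficient.
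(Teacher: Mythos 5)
Your proposal is correct and follows essentially the same route as the paper's proof: both compute the Schwartz kernel of $\{D,A\}e^{-t(D/\Lambda)^2}$, observe via the first relation in \eqref{heat-kernel-relations} that the first-order (derivative) part vanishes on the diagonal, and then evaluate the remaining multiplication part with $\tr\gamma^\mu\gamma^\nu=4g^{\mu\nu}$ and $k_{t/\Lambda^2}(x,x)=\Lambda^4/(16\pi^2t^2)$ to produce the coefficient $f_2\Lambda^2/4\pi^2$. The bookkeeping and the tracelessness argument for the $\su(N)$ case are both accurate.
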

\proof
The kernel of the operator $\{D,A\} e^{-t (D/\Lambda)^2}$ appearing in Proposition \ref{prop:firstder} is given by
$$
- \gamma^\mu \gamma^\nu \left( \partial_\mu A_\nu(x) + A_\nu \partial_\mu^x+ A_\mu \partial^x_\nu \right) k_{t/\Lambda^2}(x,y).
$$
Taking the trace corresponds to integrating this heat kernel over the diagonal (and taking the trace over Dirac matrices and $M_N(\C)$), so that with Eq. \eqref{heat-kernel} and \eqref{heat-kernel-relations}  we find
$$
S_\tad [A] =  \int_{t>0} \frac{\Lambda^2}{4 \pi^2 t} d\mu(t) \int_M \tr_N \partial_\mu A^\mu 
$$
using $\tr \gamma^\mu \gamma^\nu = 4 g^{\mu\nu}$. 
\endproof

\begin{thm}
\label{thm:free-sa}
There is the following asymptotic expansion (as $\Lambda \to \infty$) for the free part of the spectral action on a flat background manifold $M$
$$
S_{0}[A] \sim S^\Lambda_{0}[A] :=- \sum_{k \geq0} (-1)^k c_k f_{-2k} \Lambda^{-2k} \int \tr_N \hat F^{\mu\nu} \Delta^k ( \hat F_{\mu\nu})
$$
where $\Delta$ is the Laplacian on $(M,g)$, $\hat F_{\mu\nu} = \partial_\mu A_\nu - \partial_\nu A_\mu$ and $c_k$ are the following positive constants:
$$
c_k =  \frac{1}{8 \pi^2} \frac{(k+1)!}{(2k+3)(2k+1)!}.
$$
\end{thm}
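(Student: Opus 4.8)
The plan is to start from the exact (non-asymptotic) formula for $S_0[A]$ in Proposition \ref{prop:secondder} and to compute the two traces appearing there by passing to heat kernels on the flat manifold, exactly as in the proof of the tadpole theorem. First I would make the operator $\{D,A\}$ explicit: since the spin connection vanishes, $D = i\gamma^\mu\partial_\mu$ and $A = i\gamma^\nu A_\nu$, so a short computation using $\{\gamma^\mu,\gamma^\nu\} = 2g^{\mu\nu}$ gives $\{D,A\} = -\gamma^\mu\gamma^\nu(\partial_\mu A_\nu) - 2A^\mu\partial_\mu$, a first-order differential operator with matrix-valued coefficients, while $A^2 = -\gamma^\mu\gamma^\nu A_\mu A_\nu$ is a multiplication operator. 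The Dirac traces will be handled with $\tr\gamma^\mu\gamma^\nu = 4g^{\mu\nu}$ and $\tr(\gamma^\mu\gamma^\nu\gamma^\rho\gamma^\sigma) = 4(g^{\mu\nu}g^{\rho\sigma} - g^{\mu\rho}g^{\nu\sigma} + g^{\mu\sigma}g^{\nu\rho})$, the remaining $M_N(\C)$-trace producing $\tr_N$.

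The first trace, $\tr[A^2 e^{-t(D/\Lambda)^2}]$, is exact because $A^2$ is a local multiplication operator: integrating $A^2(x)\,k_{t/\Lambda^2}(x,x)$ over the diagonal and using $k_{t/\Lambda^2}(x,x) = \Lambda^4/(16\pi^2 t^2)$ from \eqref{heat-kernel} yields a term proportional to $\Lambda^2\int\tr_N A_\mu A^\mu$, a non-gauge-invariant ``mass'' term. The second trace is the genuine computation. Writing $\tau_1 = st/\Lambda^2$ and $\tau_2 = (1-s)t/\Lambda^2$ (so $\tau_1 + \tau_2 = t/\Lambda^2$), the operator $\{D,A\}e^{-\tau_1\Delta}\{D,A\}e^{-\tau_2\Delta}$ has a kernel obtained by inserting an intermediate point $z$; its trace is $\int dx\,dz$ of the product of the two kernels, each being $\{D,A\}$ applied to a heat kernel and computed through the derivative relations \eqref{heat-kernel-relations}. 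The $z$-integration reduces to Gaussian moment integrals $\int dz\,(z-x)^{\otimes n}\,k_{\tau_1}(x,z)k_{\tau_2}(z,x)$, which are elementary; equivalently one Taylor-expands $A(z)$ about $A(x)$ and uses the small-$t$ expansion \eqref{heat-kernel-expansion} to generate the local derivative series in powers of $t/\Lambda^2$.

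Next I would organize everything by powers of $\Lambda$. The crucial check is gauge invariance made explicit: the leading $\Lambda^2$ mass term coming from the second trace must cancel the one from the first, and at every order the longitudinal pieces $\propto(\partial_\mu A^\mu)$ must drop out, leaving only the transverse combinations $\hat F^{\mu\nu}\Delta^k\hat F_{\mu\nu}$ with $\hat F_{\mu\nu} = \partial_\mu A_\nu - \partial_\nu A_\mu$. Abstractly this is forced by the gauge invariance of the Seeley--De Witt coefficients entering Theorem \ref{thm:CC} (the quadratic part can only depend on the linearized curvature), but it must be seen to hold in the explicit heat-kernel computation. The coefficient of the surviving $\Lambda^{-2k}$ term is then assembled from three ingredients: the $s$-integral, which is a Beta function $\int_0^1 s^a(1-s)^b\,ds = a!\,b!/(a+b+1)!$ and supplies the $(2k+3)$-type denominator; the $t$-integral against $d\mu$, which converts the net power $t^k$ into $f_{-2k}$ via Lemma \ref{lma:coef}(2); and the Gaussian normalization $(4\pi)^{-2}$ together with the multinomial factors from the moments, which supply the $1/(8\pi^2)$ and the remaining combinatorics collapsing into $c_k = \frac{1}{8\pi^2}\frac{(k+1)!}{(2k+3)(2k+1)!}$.

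The main obstacle is precisely this last bookkeeping step: tracking the index contractions together with the two-parameter heat-kernel moments and summing over the ways the $2k$ extra derivatives of $\Delta^k$ can be distributed between the two factors, so as to recover the closed-form $c_k$ rather than merely the structural shape $\hat F^{\mu\nu}\Delta^k\hat F_{\mu\nu}$. The delicate points are the cancellation of the mass and longitudinal terms, which only works after the $s$-integration conspires with the combinatorial factors, and pinning down the exact denominator $(2k+3)(2k+1)!$. The case $k=0$, where the result must reduce to $-\frac{f_0}{24\pi^2}\int\tr_N\hat F^{\mu\nu}\hat F_{\mu\nu}$ with $c_0 = 1/24\pi^2$, provides a reassuring consistency check against the Chamseddine--Connes Theorem \ref{thm:CC}.
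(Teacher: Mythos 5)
Your plan follows the paper's proof essentially step for step: the paper likewise splits $\{D,A\}$ via Weitzenb\"ock into $-\partial_\mu A^\mu - 2A_\mu\partial^\mu - \tfrac12\gamma^\mu\gamma^\nu\hat F_{\mu\nu}$ (equivalent to your form), reduces the product of heat kernels to $(4\pi t)^{-2}k_{s(1-s)t}(x,y)$, applies the Beta integral $\int_0^1 s^k(1-s)^l\,ds = k!\,l!/(k+l+1)!$, converts powers of $t$ into $f_{-2k}$, and cancels the $\Lambda^2$ mass term against the $A^2$ trace exactly as you describe. The bookkeeping you flag as the main obstacle is handled in the paper by exploiting the Hilbert--Schmidt orthogonality of $\id$ and $\gamma^\mu\gamma^\nu$ ($\mu\neq\nu$) to treat the $\hat F\cdot\hat F$ block and the $(\partial_\mu A^\mu + 2A_\mu\partial^\mu)$ block separately, giving $c_k' = \tfrac12\tfrac{k!}{(2k+1)!} - \tfrac{(k+1)!}{(2k+3)!} = \tfrac{(k+1)!}{(2k+3)(2k+1)!}$.
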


\proof
We consider the first term in the expression for $S_0[A]$ derived in Proposition \ref{prop:secondder}. After writing $A= i \gamma^\mu A_\mu$, using the explicit form of the heat kernel and the property that $\tr \gamma^\mu \gamma^\nu =4 g^{\mu\nu}$, we find:
\begin{equation}
\label{eq:mass-term}
- \Lambda^{-2} \int_{t>0} t   \tr  A^2 e^{-t D^2/\Lambda^2} d\mu(t) =  \frac{4 f_2 \Lambda^2}{(4 \pi )^2}  \int_M \tr_N A_\mu A^\mu.
\end{equation}
The second expression in $S_0[A]$ is more involved, we first determine (suppressing the $\Lambda$-dependence until we have finished the proof of Lemma \ref{lem:kernel-term}) for $\psi \in \H$:
\begin{align}\nn
\{ D,A\} e^{-st D^2} \{ D,A\} e^{-(1-s)tD^2} \psi(x) &= 
 \int dy dz 
\left( -\partial_\mu A^\mu(x) - 2 A_\mu \partial_x^\mu -\tfrac{1}{2} \gamma^\mu \gamma^\nu \hat F_{\mu\nu}(x) \right) 
k_{st}(x,y) \\ &  \times \left(- \partial_\rho A^\rho(y) - 2  A_\rho \partial_y^\rho-\tfrac{1}{2} \gamma^\rho \gamma^\sigma \hat F_{\rho\sigma}(y) \right) k_{(1-s)t}(y,z) \psi(z)
\label{eq:2ndterm}
\end{align}
with $\hat F_{\mu\nu} = \partial_\mu A_\nu - \partial_\nu A_\mu$. Indeed, this follows by substituting
$$
\{ D, A\} = - \partial_\mu A^\mu - 2  A_\mu \partial^\mu-\frac{1}{2} \gamma^\mu \gamma^\nu \hat F_{\mu\nu}
$$
as in Weitzenb\"ocks formula \eqref{eq:weitzenbock}.
Note that $\id$ and $\gamma^\mu \gamma^\nu$ $( \mu \neq \nu)$ are orthogonal with respect to the Hilbert--Schmidt inner product. We derive the local form of the resulting expressions in a series of Lemma's. 
\begin{lma}
\label{lem:F-term}
\begin{multline*}
\int_0^1 ds \int_{M \times M} \tr \half \gamma^\mu \gamma^\nu \hat F_{\mu\nu}(x)\half \gamma^\rho \gamma^\sigma \hat F_{\rho\sigma}(y) k_{st}(x,y) k_{(1-s)t}(y,x) \\
\sim - 2 (4 \pi t)^{-2} \sum_{k\geq 0}   \int \tr_N \hat F_{\mu\nu}\frac{k!}{(2k+1)!} (-t \Delta)^k \hat F^{\mu\nu}.
\end{multline*}
\end{lma}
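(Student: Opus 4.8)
The plan is to read the left-hand side as the integral over the diagonal of the kernel of the operator $\half\gamma^\mu\gamma^\nu \hat F_{\mu\nu}\,e^{-stD^2}\,\half\gamma^\rho\gamma^\sigma\hat F_{\rho\sigma}\,e^{-(1-s)tD^2}$. Since both $\hat F_{\mu\nu}$ act as multiplication operators, the composite kernel is just the product $\half\gamma^\mu\gamma^\nu\hat F_{\mu\nu}(x)\,k_{st}(x,y)\,\half\gamma^\rho\gamma^\sigma\hat F_{\rho\sigma}(y)\,k_{(1-s)t}(y,x)$, and taking the trace amounts to setting the endpoints equal and integrating over $x,y$ together with the Dirac and $M_N(\C)$ traces; this is precisely the contribution isolated by the Hilbert--Schmidt orthogonality noted before the lemma. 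First I would carry out the Dirac trace using $\tr(\gamma^\mu\gamma^\nu\gamma^\rho\gamma^\sigma)=4(g^{\mu\nu}g^{\rho\sigma}-g^{\mu\rho}g^{\nu\sigma}+g^{\mu\sigma}g^{\nu\rho})$. Antisymmetry of $\hat F_{\mu\nu}$ kills the $g^{\mu\nu}g^{\rho\sigma}$ piece, and the remaining two terms combine to leave the scalar coefficient $-2\,\tr_N\hat F_{\mu\nu}(x)\hat F^{\mu\nu}(y)$ multiplying the two heat kernels.

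Next I would insert the explicit Gaussians from \eqref{heat-kernel}. Multiplying $k_{st}(x,y)$ by $k_{(1-s)t}(y,x)$ merges the exponents into a single Gaussian in $x-y$ with width governed by $\tau:=t\,s(1-s)$, while the prefactors collapse to $(4\pi t)^{-4}(s(1-s))^{-2}$. I would then set $u=y-x$, Taylor-expand $\hat F^{\mu\nu}(y)=\hat F^{\mu\nu}(x+u)$ in $u$, and perform the Gaussian $u$-integral term by term. Only even orders survive by parity, and Wick's theorem with $\langle u^\mu u^\nu\rangle = 2\tau g^{\mu\nu}$ converts the $2m$-th Taylor term into $(2m-1)!!\,(2\tau)^m(\partial\cdot\partial)^m\hat F^{\mu\nu}(x)=(2m-1)!!(-2\tau)^m\Delta^m\hat F^{\mu\nu}(x)$, using the sign convention $\Delta=-\partial\cdot\partial$. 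The Gaussian normalization $(4\pi\tau)^2$ combines with the collapsed prefactor to give exactly $(4\pi t)^{-2}$, and the Taylor factorials simplify through $(2m-1)!!/(2m)! = 1/(2^m m!)$, leaving the clean coefficient $(4\pi t)^{-2}\,(-1)^m\,(t\,s(1-s))^m/m!$ in front of each $\Delta^m\hat F^{\mu\nu}(x)$.

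Finally I would integrate over $s\in[0,1]$ with the Beta integral $\int_0^1 (s(1-s))^m\,ds = (m!)^2/(2m+1)!$; combined with the $1/m!$ already present this produces the factor $m!/(2m+1)!$, and reassembling $(-t\Delta)^m$ reproduces the stated series $-2(4\pi t)^{-2}\sum_{k\geq0}\frac{k!}{(2k+1)!}\int\tr_N\hat F_{\mu\nu}(-t\Delta)^k\hat F^{\mu\nu}$. The main obstacle is not the algebra but justifying that this manipulation is a genuine asymptotic expansion as $t\to 0$: one must argue that the Gaussian localizes the $u$-integral near the diagonal, that each successive Taylor term carries an extra power of $\tau\sim t$, and that the Taylor remainder is controlled uniformly in $s$ so that the $s$-integration preserves the ordering in $t$. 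Keeping the normalization constants and the Laplacian sign convention $\Delta=-\partial\cdot\partial$ straight throughout is the other place where care is required.
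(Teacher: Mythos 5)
Your proposal is correct and follows essentially the same route as the paper: combine the two Gaussians into $(4\pi t)^{-2}k_{s(1-s)t}(x,y)$, apply the heat-kernel asymptotic expansion \eqref{heat-kernel-expansion}, integrate over $s$ with the Beta integral \eqref{simplex-integral}, and evaluate the four-gamma trace. The only cosmetic difference is that you re-derive the expansion \eqref{heat-kernel-expansion} by Taylor expansion and Gaussian moments rather than citing it, and your normalization and sign bookkeeping all check out.
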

\begin{proof}
From the explicit form of $k_t(x,y)$ we derive that
\begin{equation}
\label{heat-kernel-product}
k_{st}(x,y) k_{(1-s)t}(y,x) = (4 \pi t)^{-2} k_{s(1-s)t}(x,y).
\end{equation}
The result then follows from the asymptotic expansion of $e^{-s(1-s)t \Delta}$, the standard integrals,
\begin{equation}
\label{simplex-integral}
\int_0^1 s^k(1-s)^l  = \frac{k! l!}{(k+l+1)!},
\end{equation}
and the trace formulas $\tr \gamma^\mu \gamma^\nu \gamma^\rho \gamma^\sigma = 4 \left( g^{\mu\nu} g^{\rho\sigma} - g^{\mu\rho}g^{\nu \sigma} + g^{\mu \sigma} g^{\nu \rho}\right)$ in spinor space.
\end{proof}
The remaining term from Eq. \eqref{eq:2ndterm} becomes after a series of integration by parts:
\begin{multline*}
\int_{M \times M}
\tr \left( \partial_\mu A^\mu(x) + 2 A_\mu \partial_x^\mu  \right) k_{st}(x,y) 
 \left(\partial_\nu A^\nu(y) + 2 A_\nu \partial_y^\nu  \right) k_{(1-s)t}(y,x)  \\ = \int_{M \times M}\tr  A^\mu(x) A^\nu(y) \left[  2 \partial_\mu^x k_{st}(x,y) \partial_\nu^y k_{(1-s)t} (x,y) -2 \partial_\mu^x \partial_\nu^y k_{st} (x,y) k_{(1-s)t}(x,y) \right].
\end{multline*}

\begin{lma}
\begin{multline*}
 2 \partial_\mu^x k_{st}(x,y) \partial_\nu^y k_{(1-s)t} (x,y) - 2 \partial_\mu^x \partial_\nu^y k_{st} (x,y) k_{(1-s)t}(x,y)  \\=  -2 (4 \pi t)^{-2} \left[ s(1-s) - (1-s)^2\right] \partial_\mu^y \partial_\nu^y k_{s(1-s)t}(x,y) -  2 (4 \pi t)^{-2} \frac{g_{\mu\nu}}{t} k_{s(1-s)t}(x,y). %+ (4 \pi t)^{-2} 2(1-s)^2 \partial_\mu^x \partial_\nu^x  k_{s(1-s)t}(x,y) 
\end{multline*}
\end{lma}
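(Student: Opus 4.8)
The plan is to verify this identity by direct computation, since after applying the elementary relations in \eqref{heat-kernel-relations} both sides collapse to one and the same linear combination of $g_{\mu\nu}\, k_{s(1-s)t}(x,y)$ and $(x_\mu-y_\mu)(x_\nu-y_\nu)\, k_{s(1-s)t}(x,y)$. There is no conceptual content here beyond careful bookkeeping of the scalar prefactors.

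First I would treat the left-hand side. Writing $u_\mu := x_\mu - y_\mu$, the first-derivative relation in \eqref{heat-kernel-relations} gives $\partial_\mu^x k_{st} = -\tfrac{u_\mu}{2st} k_{st}$ and $\partial_\nu^y k_{(1-s)t} = +\tfrac{u_\nu}{2(1-s)t} k_{(1-s)t}$, the sign flip arising because $\partial^y = -\partial^x$ on the kernel, while the mixed second-derivative relation gives $\partial_\mu^x \partial_\nu^y k_{st} = \tfrac{g_{\mu\nu}}{2st}k_{st} - \tfrac{u_\mu u_\nu}{4s^2t^2}k_{st}$. Substituting these into the two terms of the left-hand side and collapsing every product $k_{st}(x,y)\,k_{(1-s)t}(x,y)$ by means of the product identity \eqref{heat-kernel-product} — which applies because $k_\tau$ is symmetric in its two arguments — turns the left-hand side into $(4 \pi t)^{-2} k_{s(1-s)t}(x,y)$ times an explicit combination of $g_{\mu\nu}$ and $u_\mu u_\nu$ with coefficients built from $s$, $1-s$ and $t$.

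Next I would expand the right-hand side in the same way. The only new input is that $\partial_\mu^y \partial_\nu^y k_{s(1-s)t} = \partial_\mu^x \partial_\nu^x k_{s(1-s)t}$, since the kernel depends only on $x-y$ and the two $y$-derivatives contribute two cancelling sign flips; I then apply the $\partial^x\partial^x$ relation from \eqref{heat-kernel-relations} to rewrite this once more as a combination of $g_{\mu\nu}\, k_{s(1-s)t}$ and $u_\mu u_\nu\, k_{s(1-s)t}$. Simplifying the scalar prefactor via $s(1-s)-(1-s)^2 = (1-s)(2s-1)$, the right-hand side likewise becomes $(4\pi t)^{-2}k_{s(1-s)t}$ times its own combination of $g_{\mu\nu}$ and $u_\mu u_\nu$.

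Finally, matching the two coefficient functions completes the proof. For the $g_{\mu\nu}$ term this reduces to the cancellation $\tfrac{2s-1}{st} - \tfrac{2}{t} = -\tfrac{1}{st}$, and for the $u_\mu u_\nu$ term to an analogous simplification producing on both sides the common coefficient $\tfrac{1-2s}{2s^2(1-s)t^2}$. I expect the only genuine obstacle to be the error-prone tracking of the powers of $s$, $1-s$ and $t$ in the denominators, together with the sign conventions that distinguish $\partial^x$ from $\partial^y$; once these are pinned down the identity drops out immediately.
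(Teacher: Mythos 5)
Your proposal is correct and follows essentially the same route as the paper: a direct computation from the relations \eqref{heat-kernel-relations} together with the product identity \eqref{heat-kernel-product}, and your quoted coefficients (e.g.\ $\tfrac{2s-1}{st}-\tfrac{2}{t}=-\tfrac{1}{st}$ for the $g_{\mu\nu}$ part and $\tfrac{1-2s}{2s^2(1-s)t^2}$ for the $(x_\mu-y_\mu)(x_\nu-y_\nu)$ part) check out. The only cosmetic difference is that the paper rewrites each of the two left-hand terms directly in the form of the right-hand side, whereas you expand both sides in the basis $\{g_{\mu\nu}k_{s(1-s)t},\,(x_\mu-y_\mu)(x_\nu-y_\nu)k_{s(1-s)t}\}$ and match coefficients.
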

\begin{proof}
Using the above relations \eqref{heat-kernel-relations} in combination with Equation \eqref{heat-kernel-product} we find for the first term:
$$
2\partial_\mu^x k_{st}(x,y) \partial_\nu^y k_{(1-s)t} (x,y)
=  -2(4 \pi t)^{-2} s(1-s) \partial_\mu^y \partial_\nu^y k_{s(1-s)t}(x,y) - (4\pi t)^{-2} \frac{g_{\mu\nu}}{t} k_{s(1-s)t}(x,y),
$$
and for the second
$$
- 2 \partial_\mu^x \partial_\nu^y k_{st} (x,y) k_{(1-s)t}(x,y) 
=   2(4\pi t)^{-2} (1-s)^2 \partial_\mu^y \partial_\nu^y k_{s(1-s)t}(x,y) - (4 \pi t)^{-2} \frac{g_{\mu\nu}}{t} k_{s(1-s)t}(x,y).
$$
\end{proof}
We now combine the above results:
\begin{lma}
\label{lem:kernel-term}
\begin{multline*}
\int_{M \times M}  \int_0^1 ds
\tr \left(  \partial_\mu A^\mu(x) + 2 A_\mu \partial_x^\mu  \right) k_{st}(x,y) 
 \left( \partial_\nu A^\nu(y) + 2 A_\nu \partial_y^\nu  \right) k_{(1-s)t}(y,x)  \\ \sim- 2( 4 \pi t)^{-2} t^{-1} \int_M  \tr A^\mu A_\mu 
-2 (4 \pi t)^{-2} \sum_{k=0}^\infty \frac{(k+1)!}{(2k+3)!}  \int_M \tr A^\mu \left(- g_{\mu\nu} \Delta- \partial_\mu \partial_\nu \right) (-t\Delta)^k A^\nu.
\end{multline*}
\end{lma}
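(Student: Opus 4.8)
The plan is to start from the integration-by-parts identity displayed just before the statement, substitute the kernel identity of the preceding Lemma, and then collapse everything onto the diagonal via the heat-kernel expansion \eqref{heat-kernel-expansion}, doing the $s$-integrals with the simplex formula \eqref{simplex-integral}. Inserting the preceding Lemma's expression for $2\partial_\mu^x k_{st}\,\partial_\nu^y k_{(1-s)t} - 2\partial_\mu^x\partial_\nu^y k_{st}\,k_{(1-s)t}$ into $\int \tr A^\mu(x) A^\nu(y)[\cdots]$ splits the left-hand side into two pieces, one carrying $\partial_\mu^y\partial_\nu^y k_{s(1-s)t}$ with the weight $[s(1-s)-(1-s)^2]$, and one carrying $t^{-1} g_{\mu\nu} k_{s(1-s)t}$.

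For the first piece I would integrate by parts twice in $y$, which moves both derivatives onto $A^\nu$ and produces $\partial_\mu\partial_\nu A^\nu$ (the two sign flips cancel); the remaining $k_{s(1-s)t}$ is then expanded on the diagonal through \eqref{heat-kernel-expansion} as $\sum_k \tfrac{(-s(1-s)t)^k}{k!}\Delta^k$. The second piece is treated identically but with no derivatives, giving $\sum_k \tfrac{(-s(1-s)t)^k}{k!}\Delta^k$ acting on $A_\mu$. Every resulting term then carries a factor $(s(1-s))^k$, possibly multiplied by $s(1-s)$ or $(1-s)^2$, which must be integrated over $s\in[0,1]$. The one nontrivial evaluation is
$$
\int_0^1 [s(1-s)-(1-s)^2](s(1-s))^k\,ds = \frac{(k+1)!\,(k+1)! - k!\,(k+2)!}{(2k+3)!} = -\frac{(k+1)!\,k!}{(2k+3)!},
$$
while the $g_{\mu\nu}$ piece yields $\int_0^1 (s(1-s))^k\,ds = \tfrac{(k!)^2}{(2k+1)!}$. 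Isolating the $k=0$ contribution of the latter gives precisely the leading term $-2(4\pi t)^{-2}t^{-1}\int_M \tr A^\mu A_\mu$ on the right-hand side.

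The final step is to recombine the two series into the single expression claimed. The $\partial_\mu\partial_\nu$ series, with prefactor $\tfrac{(k+1)!}{(2k+3)!}$, already matches the $-\partial_\mu\partial_\nu$ part. For the remaining $g_{\mu\nu}$ series (the terms $k\ge 1$) I would reindex $k\mapsto j+1$ and use $(-t\Delta)^{j+1} = -t\Delta\,(-t\Delta)^j$ to absorb the explicit $1/t$, converting $\tfrac{1}{t}\tfrac{k!}{(2k+1)!}(-t\Delta)^k$ at $k=j+1$ into $-\tfrac{(j+1)!}{(2j+3)!}\Delta(-t\Delta)^j$, which is exactly the $-g_{\mu\nu}\Delta$ term. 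I expect the main obstacle to be purely bookkeeping: getting the two combinatorial prefactors to coincide after the index shift, which rests on the factorial identity above and on the fact that $\partial_\mu\partial_\nu$ and $\Delta$ are constant-coefficient operators on the flat background, hence commute and may be pulled freely through the diagonal expansion. The asymptotic symbol $\sim$ is understood order by order in $t$ throughout, matching the status of \eqref{heat-kernel-expansion}.
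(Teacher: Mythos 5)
Your proposal is correct and follows essentially the same route as the paper: insert the preceding kernel Lemma into the integration-by-parts identity, expand $k_{s(1-s)t}$ on the diagonal via \eqref{heat-kernel-expansion}, evaluate the $s$-integrals with \eqref{simplex-integral} (your value $-\tfrac{k!\,(k+1)!}{(2k+3)!}$ for the weighted integral is exactly what the paper uses), and reindex the $g_{\mu\nu}$ series to split off the $k=0$ mass term and match the $-g_{\mu\nu}\Delta$ piece. All signs and combinatorial factors check out against the paper's computation.
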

\begin{proof}
First, we derive from the above Lemma:
\begin{multline*}
\int_{M \times M} \int_0^1 ds
\tr \left(  \partial_\mu A^\mu(x) + 2 A_\mu \partial_x^\mu  \right) k_{st}(x,y) 
 \left( \partial_\nu A^\nu(y) + 2 A_\nu \partial_y^\nu  \right) k_{(1-s)t}(y,x)  \\ \sim  -2 (4 \pi t)^{-2} \sum_{k \geq 0} \int_M \int_0^1 ds \tr A^\mu(x)  \frac{(-t)^k}{ k! t} \Delta^k A_\mu(x) s^k (1-s)^k
\\ - 2  (4 \pi t)^{-2} \sum_{k=0}^\infty \int_M \int_0^1 ds \tr A^\mu(x) \partial_\mu \partial_\nu \frac{(-t)^k s^k (1-s)^k }{k!} \Delta^k A^\nu(x) \left[ s(1-s)- (1-s)^2\right].
\end{multline*}
With Equation \eqref{simplex-integral} the first term equals
\begin{multline*}
2 (4 \pi t)^{-2} \sum_{k \geq 0} \int_M \tr A^\mu(x)  \frac{(-t)^{k-1} k!}{(2k+1)!} \Delta^k A_\mu(x) = -2 (4 \pi t)^{-2} t^{-1}  \int_M \tr A^\mu A_\mu  \\+  2 (4 \pi t)^{-2} \sum_{k \geq 0} \int_M \tr A^\mu(x)  \frac{(-t)^{k} (k+1)!}{(2k+3)!} \Delta^{k+1} A_\mu(x). 
\end{multline*}
On the other hand, the second term becomes with \eqref{simplex-integral}
$$
 2  (4 \pi t)^{-2} \sum_{k=0}^\infty \int_M \tr A^\mu(x) \partial_\mu \partial_\nu \frac{(-t)^k (k+1)!}{(2k+3)!} \Delta^k A^\nu(x).
$$
These last two formulas combine to give the desired result. 
\end{proof}

After integrating over $d \mu(t)$ and taking the trace over spinor and $\su(N)$-indices, this combines with Lemma \ref{lem:F-term} to yield the final result for the second term in Proposition \ref{prop:secondder}:
\begin{align*}
&\int_{t>0}  \frac{1}{2} t^2 \Lambda^{-4}   \tr \int_0^1 \{ D, A\} e^{-st (D/\Lambda)^2} \{D,A\} e^{-(1-s)t (D/\Lambda)^2} ds \bigg\} d\mu(t)\\
&\qquad = -\int_{t>0} 4 (4 \pi t)^{-2} t \Lambda^2 d\mu(t) \int_M \tr_N A^\mu A_\mu 
\\
& \qquad\qquad-\int_{t>0} 2(4 \pi t)^{-2} t^2 d\mu(t)  \sum_{k=0}^\infty c_k' \int_M \tr_N \hat F^{\mu\nu} (-t\Delta/\Lambda^2)^k \hat F_{\mu\nu} \\
&\qquad =- \frac{4 f_2 \Lambda^2 }{(4 \pi)^2} \int_M \tr_N A^\mu A_\mu 
-\frac{1}{8 \pi^2} \sum_{k \geq 0} f_{-2k}c_k'\Lambda^{-2k} \int_M \tr_N \hat F^{\mu\nu}  (-\Delta)^k \hat F_{\mu\nu}.
\end{align*}
where we have also restored the $\Lambda$-dependence by replacing $t \mapsto t \Lambda^{-2}$.
The first term cancels against Equation \eqref{eq:mass-term}. The explicit form of the coefficients follows from a combination of Lemma \ref{lem:F-term} and Lemma \ref{lem:kernel-term}:
$$
c_k' =  \frac{1}{2} \frac{k!}{(2k+1)!} -  \frac{(k+1)!}{(2k+3)!} = \frac{(k+1)!}{(2k+3)(2k+1)!}
$$
which is positive. It combines with $1/8 \pi^2$ to give the $c_k$'s displayed above. This completes the proof of Theorem \ref{thm:free-sa}. \endproof

This result could probably also be obtained by exploiting the leading terms expansion obtained in \cite{BGO90,Avr91}, after a careful counting of the number of contractions in $\nabla^{k/2} F \cdot \nabla^{k/2} F$ appearing in {\it loc.cit.}.

The formula for $S_0[A]$ can be checked with the Yang--Mills term appearing in \cite{CC96} (cf. Theorem \ref{thm:CC} above).
\begin{corl}
Modulo negative powers of $\Lambda$, we have
$$
S_0[A] \sim - \frac{f_0}{24 \pi^2} \int_M  \tr \hat F^{\mu\nu}  \hat F_{\mu\nu} + \cO(\Lambda^{-1}). 
$$
\end{corl}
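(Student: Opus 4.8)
The plan is to read the leading term directly off the asymptotic expansion established in Theorem \ref{thm:free-sa}. Since that expansion
$$
S_0[A] \sim - \sum_{k \geq 0} (-1)^k c_k f_{-2k} \Lambda^{-2k} \int \tr_N \hat F^{\mu\nu} \Delta^k (\hat F_{\mu\nu})
$$
is already organized as a power series in $\Lambda^{-2}$, the corollary amounts to isolating the $k=0$ contribution and absorbing everything else into the remainder. So there is essentially no new analytic content; the work has all been done in proving Theorem \ref{thm:free-sa}.

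First I would extract the $k=0$ term. Here $\Lambda^{-2k} = 1$, $(-1)^k = 1$, and $\Delta^0 = \id$, so this term reads $-c_0 f_0 \int \tr_N \hat F^{\mu\nu} \hat F_{\mu\nu}$. It then remains only to evaluate the two constants. From the explicit formula for $c_k$ in Theorem \ref{thm:free-sa},
$$
c_0 = \frac{1}{8\pi^2} \frac{1!}{3 \cdot 1!} = \frac{1}{24\pi^2},
$$
while part (2) of Lemma \ref{lma:coef} gives $f_0 = f^{(0)}(0)/(-1)!! = f(0)$, using the convention $(-1)!! = 1$. Substituting these yields precisely $-\frac{f_0}{24\pi^2} \int \tr \hat F^{\mu\nu} \hat F_{\mu\nu}$, matching the stated leading term.

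Finally I would observe that every remaining term in the sum has $k \geq 1$ and therefore carries a factor $\Lambda^{-2k}$ with $2k \geq 2$; collecting these produces a remainder of order $\cO(\Lambda^{-2})$, which is in particular $\cO(\Lambda^{-1})$ as claimed. This completes the extraction.

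I do not expect a genuine obstacle, since the corollary is a bookkeeping consequence of Theorem \ref{thm:free-sa}. The only points deserving a moment's care are the evaluation $f_0 = f(0)$ through the double-factorial convention, and the verification that $c_0 = 1/24\pi^2$ reproduces the universal Yang--Mills normalization appearing in Theorem \ref{thm:CC}; the latter coincidence also serves as a useful consistency check on the entire heat-kernel computation leading to Theorem \ref{thm:free-sa}.
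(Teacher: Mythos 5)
Your extraction is correct and matches what the paper does: the corollary is stated as an immediate consequence of Theorem \ref{thm:free-sa}, obtained by isolating the $k=0$ term with $c_0 = \tfrac{1}{24\pi^2}$ and $f_0 = f(0)$, exactly as you computed. No further comment is needed.
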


We see that the first term in $S_0[A]$ is the usual (free part of the) Yang--Mills action. In fact, we could adjust the positive function $f$ so that $f_0 c_0 =1/4$. We end this section by introducing an expansion in $\Lambda$:
$$
\varphi_\Lambda(\Delta) =  \sum_{k \geq0} (-1)^k \Lambda^{-2k} f_{-2k} c_k  \Delta^k,
$$
so that we can write more concisely
$$
S^\Lambda_0[A] = - \int \tr \hat F_{\mu\nu} \varphi_\Lambda(\Delta) (\hat F^{\mu\nu}).
$$
This form motivates the interpretation of $S^\Lambda_0[A]$ (and of $S^\Lambda[A]$) as a higher-derivative gauge theory. As we will see below, this indeed regularizes the theory in such a way that $S^\Lambda[A]$ defines a superrenormalizable field theory.  

\begin{rem}
\label{rem:ILV}
Even though the above expansion $S_0^\Lambda$ is asymptotic for large $\Lambda$, it is interesting to consider the corresponding actual sum that defines $\varphi_\Lambda$. In particular, this allows to confront our results once again with \cite{ILV11}, by considering the large momentum limit of the full sum. Thus, consider
$$
\varphi_\Lambda (x) = \frac{1}{8 \pi^2} \int_{t \geq 0}  \sum_{k=0}^\infty \frac{(k+1)!}{ (2k+3)(2k+1)!} (-tx/ \Lambda^{2})^k  d \mu(t).
$$
One finds that
$$
\varphi_\Lambda(x) = \frac 1{8 \pi^2} \int \left( \left( \frac{\Lambda}{\sqrt{tx}} +2  \frac{\Lambda^3}{(tx)^{3/2}} \right) F\left( \frac{\sqrt{tx}}{2\Lambda} \right) - \frac{\Lambda^2}{tx}\right) d \mu(t).
%\frac 1{16 \pi^2} \int \left( \left( \frac{\Lambda}{\sqrt{tx}} +2  \frac{\Lambda^3}{(tx)^{3/2}} \right) e^{-tx/4\Lambda^2} \sqrt{\pi} \textup{erfi}( \sqrt{tx}/2\Lambda) - 2\frac{\Lambda^2}{tx}\right) d \mu(t)
$$
Here, the Dawson function $F$ is defined in terms of the error function by 
$$
F(z) = \frac{\sqrt{\pi}}2 e^{-z^2} \textup{erfi}(z).
$$
The asymptotic behaviour of $\varphi_\Lambda(p^2)$ as $p^2 \to \infty$ can then be determined to be 
$$
\varphi_\Lambda(p^2) \underset{p^2 \to \infty}{\sim} \frac{1}{2 \pi^2} f_{4}  \Lambda^4 p^{-4} + \cdots
$$
using that $F(z) \sim 1/2z + 1/4z^3 + \cdots$ as $z \to \infty$. It is striking that already at this heuristic level it would lead to the same conclusion on the UV-behaviour of the spectral action as in \cite{ILV11}. 
A fully rigorous analysis of $\varphi_\Lambda(p^2)$ can be done by evaluating the integrals over $s$ appearing in the proofs of the above Lemmas \ref{lem:F-term} and \ref{lem:kernel-term} before asymptotically expanding the heat kernel using Eq. \eqref{heat-kernel-expansion}. This indeed confirms the results of \cite{ILV11} (in particular their Equation (25)).

We avoid such behaviour of the gauge propagator by working with the asymptotic expansion $S^\Lambda[A]$ in large $\Lambda$. Together with a suitable choice of the function $f$, it is precisely this expansion which allows us to obtain polynomial growth for $\varphi_\Lambda(p^2)$ for large $p$. This allows us to show that the gauge field theory defined by $S^\Lambda[A]$ is (super)renormalizable, as we will now proceed to discuss.
\end{rem}

\section{Gauge fixing in the YM-system}

We add a gauge-fixing term of the following higher-derivative form:
\begin{equation}
\label{sa-gf}
S^\Lambda_\gf[A]= - \frac{1}{2 \xi}  \int \tr_N \partial_\mu A^\mu \varphi_\Lambda(\Delta) \left( \partial_\nu A^\nu \right) .
\end{equation}
In order to derive the gauge propagator, we need to invert the quadratic form given by $S^\Lambda_0[A] + S^\Lambda_\gf[A]$. This is only possible if $\varphi_\Lambda(p^2)$ is nonvanishing, in which case it is given by
%We derive the {\it propagator} by inverting the non-degenerate quadratic form given by 
$$
D_{\mu\nu}^{ab}(p; \Lambda) = \left[ g_{\mu\nu} - (1-\xi) \frac{p_\mu p_\nu}{p^2}\right] \frac{\delta^{ab}}{p^2 \varphi_\Lambda(p^2)} .
$$
%which is an expansion in $\Lambda$. 
The non-vanishing of $\varphi_\Lambda(p^2)$ can be guaranteed by assuming that $f^{(2k)}(0) \geq 0$.
%This expression is meaningful if $f^{(2k)}(0) \geq 0$ for all $k \geq 0$. 
Indeed, since by Lemma \ref{lma:coef} we have that $f_{-2k} = (-1)^k f^{(2k)}(0)/(2k-1)!!$, in that case the summands constituting $\varphi_\Lambda$ are positive so that indeed $\varphi_\Lambda(p^2) \neq 0$. In the following, we will make the above assumptions on the even higher derivatives of $f$ at zero. The behaviour of $D_{\mu\nu}^{ab}$ for large $p$ will be discussed in more detail in the next section.

As usual, the above gauge fixing requires a Jacobian, conveniently described by a Faddeev--Popov ghost Lagrangian:
\begin{equation}
\label{sa-gh}
S^\Lambda_\gh[A,\bar C,C] = - \int \tr_N \partial_\mu \bar C \varphi_\Lambda(\Delta) \left( \partial^\mu C + [A^\mu,C] \right).
\end{equation}
Here $C,\bar C$ are the Faddeev--Popov ghost fields and their propagator is
$$
\tilde D^{ab}(p; \Lambda) = \frac{\delta^{ab}}{p^2 \varphi_\Lambda(p^2)}.
$$

\begin{prop}
The sum $S^\Lambda[A] + S^\Lambda_\gf[A] + S^\Lambda_\gh[A,\bar C, C]$ is invariant under the BRST-transformations:
\begin{gather}
\label{brst}
sA_\mu = \partial_\mu C + [A_\mu,C];\qquad s C = -\half [C,C]; \qquad s \bar C =  \xi^{-1} \partial_\mu A^\mu.
\end{gather}
\end{prop}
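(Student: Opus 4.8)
The plan is to verify BRST-invariance by splitting the action into its three pieces and computing $s$ on each, exploiting the defining property that $s$ is a graded derivation which is nilpotent ($s^2=0$). First I would record the standard consequences of the transformations \eqref{brst}: since $s$ acts as an odd derivation, $s(\varphi_\Lambda(\Delta)X)=\varphi_\Lambda(\Delta)(sX)$ because $\varphi_\Lambda(\Delta)$ is a fixed differential operator commuting with $s$, and I would note that $sF_{\mu\nu}=[F_{\mu\nu},C]$ (the curvature transforms in the adjoint), which is the infinitesimal form of the gauge covariance already used to establish gauge invariance of the Seeley--DeWitt coefficients. The key structural fact is that $sA_\mu=\partial_\mu C+[A_\mu,C]=:(\nabla_\mu C)$ is precisely the covariant derivative of the ghost, and $sC=-\tfrac12[C,C]$ is tailored so that $s^2A_\mu=0$ and $s^2C=0$; I would verify these nilpotency relations first, as they are the backbone of the cancellations.

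Next I would treat $S^\Lambda[A]$ itself. By Theorem \ref{thm:CC} and the discussion following Proposition \ref{prop:firstder}, $S^\Lambda[A]=\sum_m \Lambda^{4-m}f_{4-m}\int_M a_m(x,D_A^2)$ is built entirely from the gauge-invariant Seeley--DeWitt coefficients $a_m$, which are polynomials in $F_{\mu\nu}$ and its covariant derivatives. Since an infinitesimal gauge transformation with parameter $C$ acts on these exactly as $sA_\mu=\partial_\mu C+[A_\mu,C]$, gauge invariance of each $a_m$ gives $s\,S^\Lambda[A]=0$ directly, with no computation beyond invoking the established invariance. This is the cleanest part of the argument.

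The substantive cancellation is between the gauge-fixing term $S^\Lambda_\gf$ and the ghost term $S^\Lambda_\gh$. Applying $s$ to \eqref{sa-gf} and using $s\bar C=\xi^{-1}\partial_\mu A^\mu$ together with $sA_\mu=\partial_\mu C+[A_\mu,C]$, I would find that the two contributions assemble into the form $s\bigl(\bar C\,\varphi_\Lambda(\Delta)\,\partial_\mu(sA^\mu)\bigr)$-type expressions: concretely, $s\,S^\Lambda_\gf$ produces $-\tfrac1\xi\int\tr_N\partial_\mu(sA^\mu)\,\varphi_\Lambda(\Delta)(\partial_\nu A^\nu)$, while $s\,S^\Lambda_\gh$ contains the term where $s$ hits $\bar C$, namely $-\int\tr_N\,\xi^{-1}\partial_\mu(\partial_\nu A^\nu)\,\varphi_\Lambda(\Delta)\bigl(\partial^\mu C+[A^\mu,C]\bigr)$. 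After an integration by parts moving $\varphi_\Lambda(\Delta)$ (which is formally self-adjoint, being a polynomial in $\Delta$) these two cancel, using that $\partial_\mu(sA^\mu)=\partial_\mu(\partial^\mu C+[A^\mu,C])$. The remaining piece of $s\,S^\Lambda_\gh$, where $s$ acts on the bracket $\partial^\mu C+[A^\mu,C]$, must vanish on its own; this is the standard identity $s(\nabla_\mu C)=0$ following from $s^2A_\mu=0$, so that term contributes nothing.

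The main obstacle I anticipate is the bookkeeping of the higher-derivative operator $\varphi_\Lambda(\Delta)$ under integration by parts: one must confirm that $\varphi_\Lambda(\Delta)$ is formally self-adjoint on $M$ (true since $\Delta$ is self-adjoint and $\varphi_\Lambda$ has real coefficients by Lemma \ref{lma:coef}) so that it can be freely transposed between the two factors in the pairings, and that it genuinely commutes with $s$ and with $\partial_\mu$ on the flat background. Once this is in place the cancellation is the usual Faddeev--Popov mechanism, merely dressed with $\varphi_\Lambda(\Delta)$; the nontrivial input is that the \emph{same} operator $\varphi_\Lambda(\Delta)$ appears in $S^\Lambda_\gf$ and $S^\Lambda_\gh$, which is exactly why the ghost Lagrangian was defined with that factor in \eqref{sa-gh}. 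I would present the verification termwise, flagging the self-adjointness of $\varphi_\Lambda(\Delta)$ as the one point requiring the flatness of $M$ and the explicit structure of Lemma \ref{lma:coef}.
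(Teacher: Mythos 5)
Your proof is correct and follows essentially the same route as the paper: $s\,S^\Lambda=0$ by gauge invariance of the Seeley--DeWitt coefficients, the $s\bar C$ contribution to $s\,S^\Lambda_\gh$ cancels $s\,S^\Lambda_\gf$ after an integration by parts (using formal self-adjointness of $\varphi_\Lambda(\Delta)$), and the remaining piece of $s\,S^\Lambda_\gh$ vanishes because $s(\partial_\mu C+[A_\mu,C])=0$ --- a step the paper leaves implicit here but spells out in the background-field version. One caveat: the operator $s$ defined by \eqref{brst} is \emph{not} nilpotent (the paper notes $s^2\neq 0$ immediately after the proposition, since $s^2\bar C=\xi^{-1}\partial_\mu(sA^\mu)\neq 0$); only $s^2A_\mu=0$ and $s^2C=0$ hold, and since those are all you actually use, your opening appeal to ``$s^2=0$'' should be restricted to those two relations.
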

\begin{proof}
First, $s(S)=0$ because of gauge invariance of $S[A]$. We compute 
\begin{align*}
s(S^\Lambda_\gf) &= -\frac{1}{\xi} \int \tr_N ( \partial_\mu A^\mu) \varphi_\Lambda(\Delta) \left( \partial_\nu \partial^\nu C + \partial_\nu( [A^\nu,C] \right) .
\intertext{On the other hand,}
s(S^\Lambda_\gh) &= - \frac{1}{\xi} \int \tr_N (\partial_\mu \partial^\nu A_\nu )\varphi_\Lambda(\Delta) \left( \partial^\mu C+ [A^\mu,C] \right)
\end{align*}
which modulo vanishing boundary terms is minus the previous expression.
\end{proof}
Note that $s^2 \neq 0$, which can be cured by standard homological methods: introduce an auxiliary (aka Nakanishi-Lautrup) field $h$ so that $\bar C$ and $h$ form a contractible pair in BRST-cohomology. In other words, we replace the above transformation in \eqref{brst} on $\bar C$ by $s \bar C = -h$ and $s h = 0$. If we replace $S^\Lambda_\gf + S^\Lambda_\gh$ by $s \Psi^\Lambda$ with $\Psi^\Lambda$ an arbitrary {\it gauge fixing fermion}, it follows from gauge invariance of $S^\Lambda$ and nilpotency of $s$ that $S^\Lambda + s \Psi^\Lambda$ is BRST-invariant. The above special form of $S^\Lambda_\gf+ S^\Lambda_\gh$ can be recovered by choosing
$$
\Psi^\Lambda =-  \int \tr_N \varphi_\Lambda(\Delta) (\bar C) \left( \half \xi h + \partial_\mu A^\mu \right).
$$

\begin{rem}
\label{rem:gf}
One might wonder what gauge fixing condition is implemented by $S^\Lambda_\gf$ as in \eqref{sa-gf}, given the presence of the term $\varphi_\Lambda(\Delta)$. If $f^{(2k)}(0) \geq 0$, then the function $x \mapsto \varphi_\Lambda(x^2)$ is positive, turning the bilinear form
$$
(\omega_1,\omega_2) := - \int \tr_N \omega_1 \wedge \ast (\varphi_\Lambda(\Delta) \omega_2)
$$
into an inner product. On the Lagrangian level, we can equally well implement the Lorenz gauge fixing condition $\partial \cdot A = 0$ using this inner product instead of the usual $L^2$-inner product. This gives rise to $S^\Lambda_\gf[A] =  ( \partial \cdot A, \partial \cdot A)/2\xi$. Similarly, $S^\Lambda_\gf$ is given by the inner product $(\bar C, \partial_\mu C + [A_\mu,C])$.
\end{rem}

\section{Renormalization of the asymptotically expanded YM-spectral action}
\label{sect:ren-sa}
As said, we consider the asymptotically expanded spectral action for the Yang--Mills system as a higher-derivative field theory. This means that we will use the higher derivatives of $F_{\mu\nu}$ that appear in the asymptotic expansion as natural regulators of the theory, similar to \cite{Sla71,Sla72b} (see also \cite[Sect. 4.4]{FS80}). However, note that the regularizing terms are already present in the asymptotic expansion $S^\Lambda[A]$ of the spectral action and need not be introduced as such. 
Let us consider the expansion of Theorem \ref{thm:free-sa} up to order $n$ (which we assume to be at least $8$), {\it i.e.} we set $f_{4-m} = 0$ for all $m > n$ while $f_0, \ldots f_{4-n} \neq 0$. Also, assume a gauge fixing of the form \eqref{sa-gf} and \eqref{sa-gh}.

Then, we easily derive from the structure of $\varphi_\Lambda(p^2)$ that the propagators of both the gauge field and the ghost field behave as $|p|^{-n+2}$ as $|p| \to \infty$. Indeed, in this case:
$$
\varphi_\Lambda(p^2) = \sum_{k=0}^{n/2-2} (-1)^k \Lambda^{-2k} f_{-2k} c_k p^{2k}.
$$
Moreover, the weights of the interaction in terms of powers of momenta is given by:
$$
\begin{array}{|c|c|c|}
\hline
\text{vertex} & \text{valence} & \max \# \text{ der}\\
\hline
\hline
\parbox{2cm}{\vspace{1mm}\includegraphics[scale=.2]{./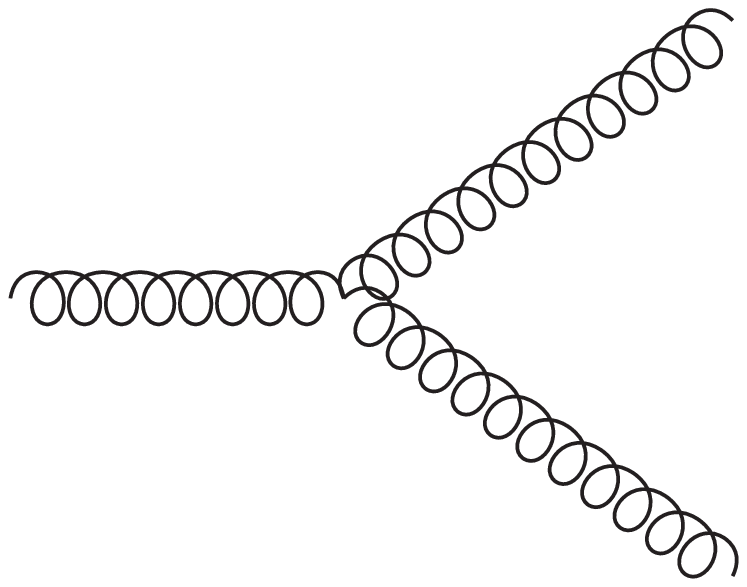}} & 3 & n-3\\[2mm]
\parbox{2cm}{\includegraphics[scale=.2]{./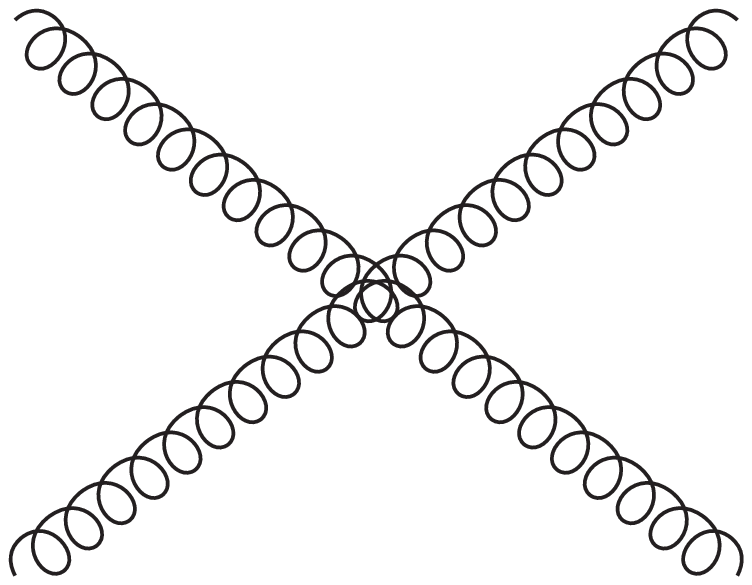}} & 4 & n-4\\[2mm]
\vdots& \vdots & \vdots\\
\parbox{2cm}{\includegraphics[scale=.2]{./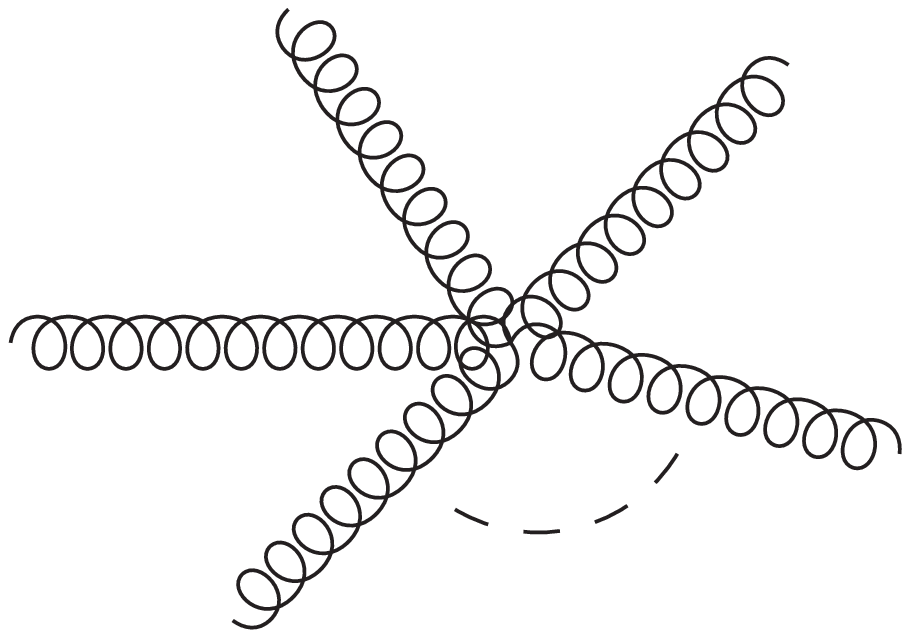}} & n & 0 \\[2mm]
\parbox{2cm}{\includegraphics[scale=.2]{./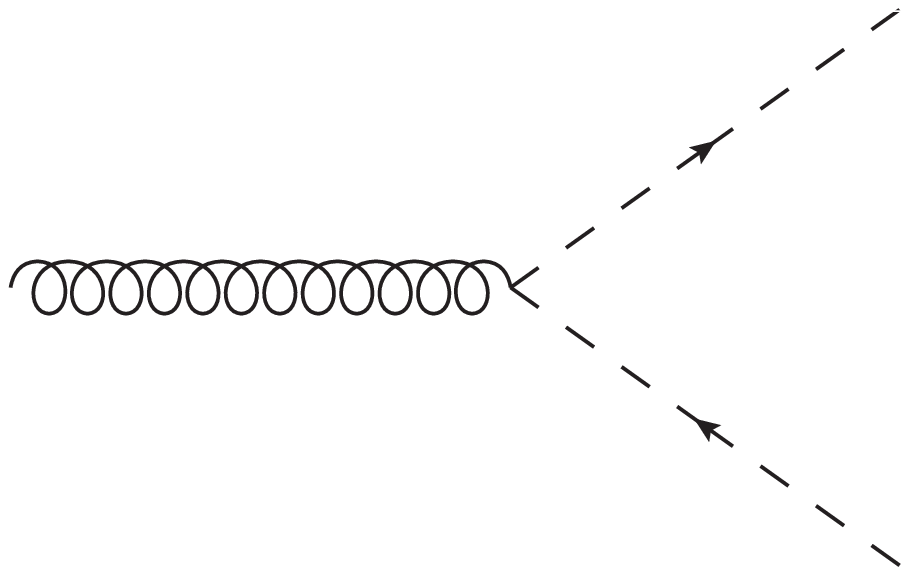}} & 3 & n-3\\[2mm]
\hline
\end{array}
$$
We will use $v_k$ to indicate the number of gauge interaction vertices of valence $k$, and with $\tilde v$ the number of ghost-gauge vertices.

Let us now find an expression for the {\it superficial degree of divergence} $\omega$ of a Feynman graph consisting of $I$ internal gauge edges, $\tilde I$ internal ghost edges, $v_k$ valence $k$ gauge vertices and $\tilde v$ ghost-gauge vertices. In 4 dimensions, we find at loop order $L$:
$$
\omega \leq 4L - I(n-2) - \tilde I (n-2) + \sum_{i=3}^n v_i (n-i) + \tilde v (n-3).
$$
\begin{lma}
Let $E$ and $\tilde E$ denote the number of external gauge and ghost edges, respectively. The superficial degree of divergence of the graph satisfies:
$$
\omega \leq (4-n)(L-1) + 4 - (E+\tilde E).
$$
\end{lma}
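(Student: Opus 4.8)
The plan is to derive the stated bound from the one given just above the lemma using nothing more than two elementary combinatorial facts about the graph: Euler's relation for a connected graph, and the vertex--edge (``handshaking'') count applied separately to the gauge and to the ghost lines. Writing $V_g = \sum_{i=3}^n v_i$ for the number of pure-gauge vertices and $V = V_g + \tilde v$ for the total number of vertices, the loop number of a connected graph satisfies
$$
L = I + \tilde I - V + 1 .
$$

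Next I would set up the leg counts. Each gauge vertex of valence $i$ carries $i$ gauge legs, while each ghost--gauge vertex (coming from the cubic term $\partial_\mu \bar C \, \varphi_\Lambda(\Delta)[A^\mu,C]$ in \eqref{sa-gh}) carries exactly one gauge leg and two ghost legs. Since every internal edge absorbs two legs of its type and every external edge exactly one, this yields
$$
\sum_{i=3}^n i\, v_i + \tilde v = 2 I + E, \qquad 2 \tilde v = 2 \tilde I + \tilde E .
$$

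With these identities the computation becomes mechanical. I would first rewrite the vertex contribution $\sum_{i=3}^n v_i (n-i) + \tilde v (n-3)$ by expanding $\sum_i v_i(n-i) = n V_g - \sum_i i\, v_i$ and then eliminating $\sum_i i\, v_i$ and $\tilde v$ through the two handshaking relations; this collapses it to $n V - 2I - 2\tilde I - E - \tilde E$. Substituting back into
$$
\omega \leq 4L - (n-2)(I + \tilde I) + \sum_{i=3}^n v_i(n-i) + \tilde v(n-3),
$$
the $I$- and $\tilde I$-terms combine into $-n(I + \tilde I - V)$, and Euler's relation converts $I + \tilde I - V$ into $L-1$. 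One is then left with $\omega \leq 4L - n(L-1) - (E+\tilde E)$, and the algebraic identity $4L - n(L-1) = (4-n)(L-1) + 4$ delivers the claim.

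The only genuine subtlety—and the spot where a slip is easiest—is the leg bookkeeping for the mixed ghost--gauge vertex: its single gauge leg must be counted in the first handshaking identity (so that it enters $2I+E$ together with the genuine gauge vertices), while its two ghost legs feed the second. Once this is pinned down, the two relations are precisely what is needed to trade every explicit vertex- and internal-edge-count for the quantities $L$, $E$ and $\tilde E$, and the remainder is routine arithmetic.
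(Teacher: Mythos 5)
Your proposal is correct and follows essentially the same route as the paper: the two handshaking identities $2I+E=\sum_i i\,v_i+\tilde v$ and $2\tilde I+\tilde E=2\tilde v$ combined with Euler's formula $L=I+\tilde I-\sum_i v_i-\tilde v+1$, leading to the same intermediate expression $\omega\leq 4L-n(I+\tilde I)+nV-(E+\tilde E)$. The only difference is cosmetic (you eliminate the vertex sums first rather than substituting for $2I$ and $2\tilde I$), and your remark about counting the single gauge leg of the ghost--gauge vertex in the first identity matches the paper's bookkeeping.
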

\begin{proof}
We use the relations
$$
2 I + E = \sum_i i v_i + \tilde v; \qquad
2 \tilde I + \tilde E = 2\tilde v 
$$
where $E$ and $\tilde E$ are the number of external gauge and ghost legs, respectively. Indeed, these formulas count the number of half (gauge/ghost) edges in a graph in two ways: from the number of edges and from the valences of the vertices. We use them to substitute for $2I$ and $2\tilde I$ in the above expression for $\omega$ so as to obtain
$$
\omega = 4L - In - \tilde I n + n \left(\sum_i v_i + \tilde v \right) - (E + \tilde E)
$$
from which the result follows at once from Euler's formula $L= I + \tilde I - \sum_i v_i - \tilde v +1$.
\end{proof}
As a consequence, $\omega < 0$ if $L \geq 2$ (provided $n \geq 8$): all Feynman graphs are finite at loop order greater than 1. If $L=1$, then there are finitely many graphs which are divergent, namely those for which $E+ \tilde E \leq 4$. We conclude that the asymptotically expanded spectral action for the Yang--Mills system is superrenormalizable. 

Of course, the spectral action being a gauge theory, there is more to renormalizability than just power counting: we have to establish gauge invariance of the counterterms. 
We already know that the counterterms needed to render the perturbative quantization of $S^\Lambda[A]$ finite are of order $4$ or less in the fields and arise only from one-loop graphs. The key property of the effective action at one loop is that it is supposed to be BRST-invariant:
$$
s(\Gamma_{1}) = 0. 
$$
In particular, assuming a regularization compatible with gauge invariance, the divergent part $\Gamma_{1,\infty}$ is BRST-invariant. Results from \cite{Dix91, DTV85, DTV85b, BDK90,DHTV91} on BRST-cohomology for Yang--Mills type theories ascertain that the only BRST-closed functional of order 4 or less in the fields is represented by
$$
\delta Z \int F_{\mu\nu}F^{\mu\nu} %+ \delta Z' \Vol(M)
$$
for some constant $\delta Z$. % and $\delta Z'$. 
We will confirm this through an explicit calculation using zeta function regularization in background field gauge in the remaining part of this paper. 

This particular form for the counterterm $\Gamma_{1,\infty}$ can be added to $S^\Lambda$ and absorbed by a redefinition of the function $f$. Indeed, it maps 
$$
f_0 \mapsto f_0 + 24 \pi^2\delta Z%; \qquad f_4 \mapsto f_4 - \frac{2 \pi^2}{N^2 \Lambda^4} \delta Z' ,
$$
leaving all other coefficients $f_{-2k}$ ($k \neq 0$) invariant. Intriguingly, renormalization of $S^\Lambda[A]$ can thus be accomplished merely by shifting the function $f$ by a constant amount $24 \pi^2\delta Z$.
%{\it shifting renormalizing} the function $f$ shiftingchanging only $f(0)$ and its 4'th moment $f_4$ whilst leaving all its higher derivatives at $0$ invariant. 

\section{One-loop effective action}
\label{sect:1loop-sa}

\subsection{Background field gauge}
Adopting the background field formalism, one expands $S^\Lambda[A]$ around a (skew-hermitian) background gauge field $B$. For the one-loop effective action it is sufficient to consider only the quadratic part in $A$ (skew-hermitian), for which we compute 
$$
S^\Lambda_0[A;B] = - \sum_{n,m \geq 0} \Lambda^{2-n-m} f_{2-n-m} \int_M \tr_N A_\mu c^{(n)\mu\nu}_{\mu_1\cdots \mu_m} \nabla^{\mu_1}_B \cdots \nabla^{\mu_m}_B A_\nu .
$$
where $c^{(n)}$ are gauge invariant and Lorentz covariant polynomials in $B_\mu$ of order $n$. %, which we assume to be completely symmetric in the lower indices.
For example, in terms of the $c_k$ from Theorem \ref{thm:free-sa}:
\begin{align}
&c^{(0)\mu\nu}_{\mu_1\cdots \mu_m} = (-1)^{m/2} 2 c_{(m-2)/2}\left(
g^\mu_{\phantom\mu \mu_1} g_{\mu_2 \mu_3} \cdots g_{\mu_{m-2} \mu_{m-1}} g_{\mu_m}^{\phantom{\mu_m}\nu}-
g^{\mu \nu} g_{\mu_1 \mu_2} \cdots g_{\mu_{m-1} \mu_m}  \right)
\label{eq:top-coef}
\intertext{and (less explicitly for some constants $\alpha,\beta,\gamma$ and $\delta$)}
&c^{(2)\mu\nu}_{\mu_1\cdots \mu_m} = \alpha F^{\mu \nu} g_{\mu_1 \mu_2} \cdots g_{\mu_{m-1} \mu_m} 
+ \beta \sum_i F^{\mu \mu_i} g_{\mu_1\mu_2} \cdots g_{\nu \mu_{i\pm1}} \cdots g_{\mu_{m-1} \mu_m} \nn \\ \nn
& \quad+ \gamma \sum_i F^{\nu \mu_i} g_{\mu_1\mu_2} \cdots g_{\mu \mu_{i\pm1}} \cdots g_{\mu_{m-1} \mu_m}+ \delta \sum_{i\neq j} F^{\mu_i \mu_j} g_{\mu_1\mu_2} \cdots g_{\mu \mu_{i\pm1}}  \cdots g_{\nu \mu_{j\pm 1}} \cdots g_{\mu_{m-1} \mu_m}.
\end{align}
In the following, we will lighten notation and absorb the power $\Lambda^{2-n-m}$ in the coefficient $f_{2-n-m}$. 

The background field gauge is defined by setting $\nabla^\mu_B A_\mu = 0$ which we accomplish as above (cf. Remark \ref{rem:gf}) through the inner product with weight $\varphi_\Lambda(\Delta_B)$:
\begin{align*}
S^\Lambda_\gf[A;B]&:=-\frac{1}{2 \xi} \int_M (\nabla_B^\mu A_\mu) \varphi_\Lambda(\Delta_B) (\nabla_B^\nu A_\nu) \\
&=
-2\xi^{-1} \sum_{m \geq 2} (-1)^{m/2} f_{2-m} c_{(m-2)/2} \int_M (\nabla_B^\mu A_\mu) \Delta_B^{m-2} (\nabla_B^\nu A_\nu).
\end{align*}
For instance, for $\xi=1$ (Feynman gauge), it precisely cancels the first `longitudinal' term in \eqref{eq:top-coef}; compare with Equation \eqref{sa-gf}, where $B=0$. We define a differential operator $P_B$ by setting
$$
S^\Lambda_0[A;B] + S^\Lambda_\gf[A;B] =: \int_M \tr A_\mu P_B^{\mu\nu} (A_\nu). 
$$

\subsection{One-loop effective action}
The background field formalism is very convenient to compute the effective action at one-loop. Namely, it is given by the determinant
$$
W [B] = \frac{1}{2} \ln \det (P_B P^{-1})
$$
where $P \equiv P_{B=0}$. As before, we terminate the expansion defining $P_B$ by assuming $f_{2-m}$ is zero for all $m>2k$ ($k$ is related to the integer $n$ in Section \ref{sect:ren-sa} via $2k = n-2$). This makes $P_B$ a differential operator of order $2k$ and for $\xi=1$ (Feynman gauge) can be written in the form 
$$
P_B^{\mu\nu} =  f_{2-2k} g^{\mu\nu} \Delta^k + p_{2,\alpha\beta}^{\mu\nu} \nabla^\alpha \nabla^\beta \Delta^{k-2} +  (-1)^k \left \{  p_{3,\alpha_1\cdots \alpha_{2k-3}}^{\mu\nu} \nabla^{\alpha_1} \cdots \nabla^{\alpha_{2k-3}} + %p_{4,\alpha_1\cdots \alpha_{2k-4}}^{\mu\nu} \nabla^{\alpha_1} \cdots \nabla^{\alpha_{2k-4}} +
 \cdots \right\}
$$
with 
$$
p_{2,\alpha\beta}^{\mu\nu} =  f_{2-2k} \left(\alpha F^{\mu\nu} g_{\alpha\beta} + \beta' F^{\mu}_{\phantom\mu(\alpha} g^{\nu}_{\phantom\nu \beta)} + \gamma'  F^{\nu}_{\phantom\mu(\beta} g^{\mu}_{\phantom\nu \alpha)} 
  \right)
  +  2 f_{2-2(k-1)} c_{(2k-4)/2} g^{\mu\nu}g_{\alpha\beta}
$$
and $p_{4,\alpha_1 \cdots \alpha_{k-4}}$ are gauge invariants of $B$ (of order less then or equal to 2 and 4, respectively). The invariant $p_4$ also depends linearly on $f_{2-2k}$, $ f_{2-2(k-1)}$ and $ f_{2-2(k-2)}$.

The above determinant is ill-defined but we make sense of it in the following way \cite{Haw77}. First, we (still heuristically) apply the ``$\ln\det= \tr \ln$'' formula to obtain
$$
\ln \det P_B P^{-1} = - \int^\infty_0 \frac{dt}{t} \tr (e^{-t P_B} - e^{-t P}).
$$
Indeed, this follows simply from the fact that
$$
\ln \lambda = - \int_0^\infty \frac{dt}{t} (e^{-t\lambda} -e^{-t}).
$$
Zeta function regularization is the procedure to replace the determinant by
$$
\frac{1}{2} \ln \det P_B P^{-1} := -\frac{1}{2} \tilde{\mu}^{2 k z} \int^\infty_0 \frac{dt}{t^{1-z}} \tr (e^{-t P_B}- e^{-t P} )
$$
with $z\in \C$ (see \cite{Vas03} for an excellent review). Here $\tilde\mu$ is the so-called {\it mass scale}, introduced to keep the effective action dimensionless. Now, with the Mellin transform
\begin{equation}
\label{eq:mellin}
\Gamma(z) \lambda^{-z} = \int_0^\infty dt ~t^{z-1} e^{-t\lambda}; \qquad (\Re z>0),
\end{equation}
we find that the regularized one-loop effective action reads
\begin{equation}
\label{eq:eff-action-reg}
W_z [B] = -\frac{1}{2} \tilde{\mu}^{2kz} \Gamma(z) (\tr P_B^{-z} - \tr P^{-z})  \equiv  -\frac{1}{2}  \tilde{\mu}^{2kz}\Gamma(z)( \zeta(P_B,z) - \zeta(P,z)).
\end{equation}
As a function of $z$ this expression is holomorphic for $\Re z >>0$. It has a simple pole at $z=0$:
$$
W_z [B] = -\frac{1}{2}(  \zeta(P_B,0) - \zeta(P,0))\left(\frac{1}{z} + 2k \ln \mu\right) -\frac{1}{2}( \zeta'(P_B,0)- \zeta'(P,0)) + \cO(z)
$$
which follows by expanding the gamma function (further absorbing the Euler constant $\gamma_E$ in $\tilde \mu$ to define $\mu$ by $\mu^{2k} = e^{-\gamma_E} \tilde \mu^{2k}$). The first term is the counterterm that should be subtracted from the spectral action. Let us determine its form in terms of the curvature of the background field $B$.
We apply a result by Gilkey \cite{Gil80} (cf. Theorem \ref{thm:gilkey} below) to calculate
$$
\zeta(P_B,0) = a_4 (P_B)
$$
since the differential operator $P_B$ is of the form \eqref{eq:higher-laplacian} and we are in 4 dimensions. 
Since we have assumed $M$ is flat, the expression for $a_4(P_B)$ reduces to
\begin{align*}
a_4(P_B) &=\frac{1}{4 \pi^2}
\int_M \tr_N \bigg(
\frac{1}{12} F_{\mu\nu}F^{\mu\nu}
+ \frac{1}{48 k} \left( {p_{2,\alpha}}^\alpha  {p_{2,\beta}}^\beta 
+2 {p_{2,\alpha}}^\beta  {p_{2,\beta}}^\alpha \right)
%\left(g^{(\mu\nu}g^{\rho\sigma)}%+ g^{\mu\rho}g^{\nu\sigma}+g^{\mu\sigma}g^{\nu\rho}\right)
-\frac{1}{k S(\delta^{k-2})} S(p_4)
\bigg) .
\end{align*}
Since the $p_2$ and $p_4$ are gauge invariants depending on $B$, with maximal order equal to 2 and 4, respectively, we find that $a_4$ is a gauge invariant functional of $B$, of order less then or equal to $4$. 

Taking into account that $a_4(P_B)$ is scale invariant (the coefficient of $t^0$ in the heat expansion) we arrive at the following result.

\begin{prop}
\label{prop:eff-action-gauge}
There exist a constant $c$ %and a polynomial $q$  
such that the residue of the regularized effective action is
$$
-\frac{1}{2}( \zeta(P_B,0) -\zeta(P,0)) = c %\left( \frac{f_{3-2k}}{f_{2-2k}},  \frac{f_{4-2k}}{f_{2-2k}}\right) 
\int_M \tr F_{\mu\nu}F^{\mu\nu}. %+  q\left(\frac{ f_{4-2k} }{f_{2-2k}}\right) \Vol(M).
$$
\end{prop}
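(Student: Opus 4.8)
The plan is to exploit the three structural features of $a_4(P_B)$ just recorded --- gauge invariance, the bound $\ord \le 4$ on its order, and scale invariance --- together with Lorentz covariance, and to match them against the very short list of admissible local invariants in four dimensions. Since Gilkey's formula \cite{Gil80} gives $\zeta(P_B,0) = a_4(P_B)$ and $P = P_{B=0}$, the residue in question is $-\tfrac12(a_4(P_B) - a_4(P))$, so it is enough to determine $a_4(P_B)$ up to its value at $B=0$.

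First I would use scale invariance decisively. Because $a_4(P_B)$ is the coefficient of $t^0$ in the heat expansion, its integrand is a local density of mass dimension $4$; being a gauge invariant it is, by the theorem of Gilkey \cite{Gil84} cited above, a polynomial in $F_{\mu\nu}$ and its covariant derivatives. Since $F_{\mu\nu}$ carries dimension $2$, dimensional analysis leaves only two candidate Lorentz scalars of dimension $4$, namely $\tr_N F_{\mu\nu}F^{\mu\nu}$ and the parity-odd contraction $\epsilon^{\mu\nu\rho\sigma}\tr_N F_{\mu\nu}F_{\rho\sigma}$: a single $F$ with one covariant derivative has dimension $3$ and an odd number of free indices, hence admits no scalar contraction, while a single $F$ with two derivatives produces only $\Delta(g^{\mu\nu}F_{\mu\nu})=0$ or the total derivative $\nabla^\mu\nabla^\nu F_{\mu\nu}$, and anything carrying derivatives on top of $FF$ exceeds dimension $4$. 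In particular there is no dimension-$2$ scalar, since $g^{\mu\nu}F_{\mu\nu}=0$.

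Next I would remove the parity-odd candidate. Gilkey's universal expression for the relevant heat coefficient of a higher-order Laplace-type operator \cite{Gil80} is assembled purely from metric contractions of the curvatures and the endomorphism term, with no occurrence of the volume form $\epsilon$; hence $a_4(P_B)$ is automatically parity-even and the $\epsilon^{\mu\nu\rho\sigma}$ term cannot appear. This forces $a_4(P_B) = c'\int_M \tr_N F_{\mu\nu}F^{\mu\nu} + b\,\Vol(M)$ for constants $c',b$, the term $b\,\Vol(M)$ being the order-zero, $B$-independent piece. Subtracting $a_4(P) = a_4(P_B)\big|_{B=0}$ cancels this constant, so that $a_4(P_B) - a_4(P) = c'\int_M \tr_N F_{\mu\nu}F^{\mu\nu}$ and the claim follows with $c = -c'/2$.

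The hard part is to carry out this elimination with full rigour rather than by dimensional heuristics, in particular to be certain that total-derivative terms (which integrate to zero on the closed $M$) and the topological $\epsilon$-term are genuinely absent. The most self-contained way to close the argument --- avoiding the abstract BRST-cohomology input invoked in Section \ref{sect:ren-sa} --- is to read the conclusion off the explicit formula for $a_4(P_B)$ displayed above, substituting the explicit $p_{2,\alpha\beta}^{\mu\nu}$ (linear in $F$ plus a metric term) together with the order-$\le 4$ gauge invariant $p_4$. In that expansion the purely metric contributions are of order zero and drop out in the difference, the cross terms contract a single $F$ against the metric and vanish by antisymmetry, and the genuinely quadratic-in-$F$ contractions reassemble --- after using flatness of $M$ and integration by parts --- into a multiple of $\tr_N F_{\mu\nu}F^{\mu\nu}$. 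Performing this bookkeeping so as to exhibit $c$ explicitly is routine, but it is where essentially all the combinatorial labour resides.
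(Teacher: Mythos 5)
Your argument is correct and follows essentially the same route as the paper: invoke Gilkey's result to identify $\zeta(P_B,0)$ with $a_4(P_B)$, then use gauge invariance, the order bound $\ord\le 4$ on the invariants $p_2,p_4$, and scale invariance to force the $B$-dependent part of $a_4(P_B)$ to be a multiple of $\int_M \tr F_{\mu\nu}F^{\mu\nu}$, the constant piece cancelling against $a_4(P)$. Your additional steps --- the explicit enumeration of dimension-$4$ Lorentz scalars and the exclusion of the parity-odd $\epsilon$-contraction via the purely metric structure of Gilkey's universal formula --- merely make explicit what the paper leaves implicit.
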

\subsection{Gauge fixing and the Faddeev--Popov action}
The Jacobian term for the background field gauge can be implemented by adding a Faddeev--Popov action. In background gauge it becomes
$$
S^\Lambda_\gh[\bar C,C; B] = \int \tr\bar C \varphi_\Lambda(\Delta_B) \nabla_\mu^B  (\nabla^\mu_B+ \ad A^\mu) C.
$$
Introducing also the auxiliary (hermitian) field $h$ as before, we replace $S^\Lambda_\gf[A;B]$ by 
$$
S_\gf[A,h;B] =  \int \tr h ~\varphi_\Lambda(\Delta_B) (\half \xi h + \nabla^B_\mu A^\mu) .
$$
Indeed, 
$$
S^\Lambda_\gf[A,h;B] = \frac{1}{2\xi} \int \tr \left ( \xi h + \nabla^B_\mu A^\mu) \varphi_\Lambda(\Delta_B) ( \xi h + \nabla^B_\mu A^\mu) \right) -\frac{1}{2 \xi} \int_M (\nabla_B^\mu A_\mu) \varphi_\Lambda(\Delta_B) (\nabla_B^\nu A_\nu)
$$
so that the first term describes a free field, decoupled form the $A$ and $B$ field, and the second term is our previous expression for $S^\Lambda_\gf[A;B]$. 
\begin{prop}
The sum $S^\Lambda[A+B] + S^\Lambda_\gf[A,h;B] + S^\Lambda_\gh[\bar C,C,A;B]$ is invariant under the following BRST-transformations:
\begin{gather}
\label{eq:BRST-bfg}
s(A_\mu) = \nabla_\mu^B C + [A_\mu , C], \qquad s(C) = -\frac{1}{2} [C,C] ,\qquad
s(\bar C) = -h, \qquad s h = 0. 
\end{gather}
\end{prop}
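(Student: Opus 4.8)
The plan is to decompose the total action into the gauge-invariant piece $S^\Lambda[A+B]$ and the gauge-fixing-plus-ghost piece $S^\Lambda_\gf[A,h;B] + S^\Lambda_\gh[\bar C,C,A;B]$, and to show that $s$ annihilates each separately. The crucial observation for the first piece is that, since $s B_\mu = 0$, the transformation of the quantum field assembles into $s A_\mu = \nabla^B_\mu C + [A_\mu,C] = \partial_\mu C + [(A+B)_\mu, C]$, which is precisely an infinitesimal gauge transformation of the combined field $A+B$ with gauge parameter $C$ (the background being held fixed). Because $S^\Lambda[A+B]$ is built from the Seeley--De Witt coefficients $a_m(x,D_{A+B}^2)$, which are gauge invariant by the Weitzenb\"ock formula and Gilkey's theorem (as recorded after the Chamseddine--Connes proposition), it follows immediately that $s(S^\Lambda[A+B]) = 0$.

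For the remaining piece I would exhibit it as $s$ of a gauge-fixing fermion, mirroring the construction in the $B=0$ case. Concretely, set
$$
\Psi^\Lambda = -\int \tr_N \varphi_\Lambda(\Delta_B)(\bar C)\left(\half \xi h + \nabla^B_\mu A^\mu\right),
$$
the obvious background-field analogue of the earlier gauge-fixing fermion, with $\partial_\mu \to \nabla^B_\mu$ and the weight $\varphi_\Lambda(\Delta) \to \varphi_\Lambda(\Delta_B)$. The computation of $s\Psi^\Lambda$ splits into two contributions. The term in which $s$ hits $\bar C$ uses $s\bar C = -h$ and reproduces $S^\Lambda_\gf[A,h;B]$. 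The term in which $s$ hits the bracket uses $s h = 0$ together with $s A^\mu = \nabla^\mu_B C + [A^\mu,C]$, producing $\nabla^B_\mu(\nabla^\mu_B + \ad A^\mu)C$ and hence $S^\Lambda_\gh[\bar C,C,A;B]$. Here I would use two structural facts: that $\Delta_B$, and therefore $\varphi_\Lambda(\Delta_B)$, is built solely from the $s$-invariant background $B$, so that $s$ commutes with the weight operator $\varphi_\Lambda(\Delta_B)$; and that $\varphi_\Lambda(\Delta_B)$ is formally self-adjoint, so it may be moved freely between the two factors under the integral after integration by parts.

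It then remains to establish nilpotency $s^2 = 0$, after which $s(S^\Lambda_\gf + S^\Lambda_\gh) = s^2 \Psi^\Lambda = 0$ is automatic. On the contractible pair this is immediate: $s^2\bar C = -s h = 0$ and $s^2 h = 0$, which is exactly the purpose of introducing the auxiliary field $h$ (as noted in the remark following the non-background proposition). On the ghost the identity $s^2 C = 0$ is the Maurer--Cartan/Jacobi identity for $sC = -\half[C,C]$, and on the quantum field $s^2 A_\mu = 0$ reduces to the standard gauge-field statement once one notes that $s A_\mu = \nabla^{A+B}_\mu C$ transforms like the connection itself. Combining the two halves yields the claimed BRST-invariance.

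The main obstacle I anticipate is not conceptual but the careful graded-sign bookkeeping in evaluating $s\Psi^\Lambda$: $\bar C$ and $C$ are odd, $s$ is an odd derivation, and the relative sign between the two contributions (and between the two terms inside the commutator appearing in $s[A_\mu,C]$) must be tracked through the Leibniz rule together with the integration by parts that moves $\nabla^B_\mu$ and $\varphi_\Lambda(\Delta_B)$ across the trace. Getting these signs consistent --- so that the $h$-dependent term matches $S^\Lambda_\gf$ and the ghost term matches $S^\Lambda_\gh$ with the correct overall sign --- is the one place where an error is easy to make.
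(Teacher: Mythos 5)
Your argument is correct, but it is organized differently from the paper's own proof. The first half is identical: since $sB_\mu=0$, the transformation $sA_\mu=\nabla^B_\mu C+[A_\mu,C]=\partial_\mu C+[(A+B)_\mu,C]$ is an infinitesimal gauge transformation of $A+B$, and gauge invariance of the Seeley--De Witt coefficients kills $S^\Lambda[A+B]$. For the rest, the paper computes directly: it applies $s$ to $S^\Lambda_\gf[A,h;B]$, observes that the result cancels against the contribution of $s\bar C=-h$ in $s(S^\Lambda_\gh)$, and then disposes of the remaining term by showing $s\left((\nabla^\mu_B+\ad A^\mu)C\right)=0$ via the Jacobi identity and the Leibniz rule. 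You instead exhibit $S^\Lambda_\gf+S^\Lambda_\gh=s\Psi^\Lambda$ for the background-field gauge-fixing fermion and conclude by nilpotency $s^2=0$ --- which is precisely the strategy the paper sketches in the $B=0$ discussion but does not carry out here. The two routes have essentially the same content relocated: your verification that the $s\bar C$-term of $s\Psi^\Lambda$ reproduces $S^\Lambda_\gf$ and the $sA$-term reproduces $S^\Lambda_\gh$ is the paper's cancellation read backwards, and your check of $s^2A_\mu=0$ (via $sA_\mu=\nabla^{A+B}_\mu C$) and $s^2C=0$ is where the paper's Jacobi-identity step reappears. What your version buys is conceptual clarity (BRST-exactness of the gauge-fixing sector, making the invariance structural rather than computational); what the paper's version buys is that it never needs to assert nilpotency on all generators explicitly. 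You correctly flag the two technical points a full write-up must supply --- formal self-adjointness of $\varphi_\Lambda(\Delta_B)$ under the integrated trace, and $s$ commuting with $\varphi_\Lambda(\Delta_B)$ because it is built from $B$ alone --- and the graded-sign bookkeeping you warn about does work out: the $s\bar C=-h$ contribution lands on $S^\Lambda_\gf$ with the correct sign precisely because of the odd parity of $\bar C$ against the overall minus sign in $\Psi^\Lambda$.
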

\begin{proof}
The effect of $s$ on $A_\mu+B_\mu$ is just a gauge transformation, whence leaving $S^\Lambda[A+B]$ invariant. For the gauge-fixing term, we compute:
$$
s\left(  \int \tr h ~\varphi_\Lambda(\Delta_B) (\half \xi h + \nabla^B_\mu A^\mu) \right) =  \int \tr h ~\varphi_\Lambda(\Delta_B) (\nabla^B_\mu \nabla^\mu_B C + [A^\mu,C])
$$
which cancels against the first term in:
\begin{align*}
s\left(\int \tr \bar C \varphi_\Lambda(\Delta_B) \nabla_\mu^B  (\nabla^\mu_B+ \ad A^\mu) C\right) &= - \int \tr  h \varphi_\Lambda(\Delta_B) \nabla_\mu^B  (\nabla^\mu_B+ \ad A^\mu) C \\
&\quad - \int \tr \bar C \varphi_\Lambda(\Delta_B) \nabla_\mu^B  s\left((\nabla^\mu_B+ \ad A^\mu) C\right).
\end{align*}
The vanishing of the term $s\left((\nabla^\mu_B+ \ad A^\mu) C\right)$ is an easy consequence of Jacobi's identity and the Leibniz rule.
\end{proof}

The contribution of the Faddeev--Popov ghost action to the one-loop effective action can  be obtained along the same lines as above. First, write
$$
S^\Lambda_\gh[\bar C,C;B] = - \int_M \tr \bar C \tilde P_B( C)
$$
where $\tilde P_B$ is a differential operator of order $2k$ and of the form
\begin{multline*}
\tilde P_B =  f_{2-2k} \Delta^k + ( f_{2-2k} +  f_{2-2(k-1)}) \Delta^{k-1} \\+  p_{3,\alpha_1\cdots \alpha_{k-3}} \nabla^{\alpha_1} \cdots \nabla^{\alpha_{k-3}} + p_{4,\alpha_1\cdots \alpha_{k-4}} \nabla^{\alpha_1} \cdots \nabla^{\alpha_{k-4}} + \cdots
\end{multline*}
where $p_3$ and $p_4$ are gauge invariants functions of $B$. Since $\bar C$ and $C$ are fermionic, their contribution to the effective action is $-\ln \det \tilde P_B \tilde P^{-1}$, regularized using zeta functions as above. Gilkey' s Theorem applies to arrive at
\begin{prop}
\label{prop:eff-action-ghost}
There exist a constant $\tilde c$ %and a polynomial $\tilde q$ 
such that the residue of the ghost contribution to the regularized effective action is
$$
\zeta(\tilde P_B,0) = \tilde c %\left(\frac{f_{3-2k}}{f_{2-2k}},  \frac{f_{4-2k}}{f_{2-2k}}\right) 
\int_M \tr F_{\mu\nu}F^{\mu\nu} %+ \tilde q\left( \frac{ f_{4-2k} }{f_{2-2k}}\right) \textup{Vol}(M).
$$
\end{prop}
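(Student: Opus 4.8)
The plan is to follow verbatim the argument already carried out for the gauge sector in Proposition \ref{prop:eff-action-gauge}, now applied to the scalar ghost operator $\tilde P_B$. Since $\bar C$ and $C$ are $\su(N)$-valued scalar fields, $\tilde P_B$ acts on sections of the adjoint bundle, and its displayed form exhibits it as a higher-order Laplacian with leading part a multiple of $\Delta^k$, of the type \eqref{eq:higher-laplacian} to which Gilkey's Theorem \ref{thm:gilkey} applies. The first step is therefore to verify that $\tilde P_B$ genuinely satisfies the hypotheses of that theorem: its top-order coefficient is positive --- just as for $\varphi_\Lambda(p^2)$, this is guaranteed by the standing assumption $f^{(2k)}(0) \geq 0$ --- and the remaining terms are lower-order differential operators whose coefficients $p_3, p_4$ are gauge-invariant functions of the background field $B$.

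Granting this, the second step is to invoke Gilkey's Theorem in four dimensions to identify the zeta value at the origin with a heat coefficient, $\zeta(\tilde P_B,0) = a_4(\tilde P_B)$, exactly as for $P_B$. On the flat background the explicit formula for $a_4(\tilde P_B)$ then reduces to an integral over $M$ of gauge-invariant, Lorentz-covariant local densities built from the gauge curvature $F_{\mu\nu}$ and from the lower-order coefficients of $\tilde P_B$. Here the bookkeeping is in fact cleaner than in the gauge case, since the ghost fields carry no free Lorentz vector index and hence no additional index contractions arise; in particular the endomorphism piece entering $a_4$ comes from the $(f_{2-2k}+f_{2-2(k-1)})\Delta^{k-1}$ term and from $p_4$, all of which are gauge-invariant functionals of $B$ of order at most $4$.

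The third step is the decisive one and is identical in spirit to the gauge sector. The coefficient $a_4(\tilde P_B)$ is the $t^0$ term in the heat expansion and is therefore scale invariant; by dimensional analysis the associated local density must have scaling dimension $4$, so it is a gauge-invariant, Lorentz-invariant local functional of exactly that order. As recalled in Section \ref{sect:ren-sa} and in the discussion preceding Proposition \ref{prop:eff-action-gauge}, the only such functional on a flat four-manifold is $\int_M \tr F_{\mu\nu}F^{\mu\nu}$ (the parity-odd term being excluded). Passing to the difference $\zeta(\tilde P_B,0) - \zeta(\tilde P,0)$, which is what actually enters the regularized ghost determinant $-\ln\det \tilde P_B \tilde P^{-1}$, removes the $B$-independent constant, and one concludes $\zeta(\tilde P_B,0) = \tilde c \int_M \tr F_{\mu\nu}F^{\mu\nu}$ with a single constant $\tilde c$ read off from Gilkey's formula.

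The main obstacle I expect is computational rather than conceptual: casting $\tilde P_B$ in the precise normal form required by Theorem \ref{thm:gilkey} --- in particular disentangling the connection and endomorphism data from the polynomial-in-$\Delta$ leading part --- and then extracting the numerical value of $\tilde c$ from the explicit $a_4$-formula. The structural conclusion, namely gauge invariance together with uniqueness at order $4$, is forced by exactly the scale-invariance and invariant-theory input already used for $P_B$, so no new ideas beyond those of Proposition \ref{prop:eff-action-gauge} are required.
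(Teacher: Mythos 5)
Your proposal is correct and follows essentially the same route as the paper: write the ghost quadratic form as $-\int\tr\bar C\,\tilde P_B(C)$ with $\tilde P_B$ a higher-order Laplacian of the form \eqref{eq:higher-laplacian}, apply Gilkey's Theorem \ref{thm:gilkey} in four dimensions to get $\zeta(\tilde P_B,0)=a_4(\tilde P_B)$, and conclude by gauge invariance and scale/dimension counting that the result must be proportional to $\int_M\tr F_{\mu\nu}F^{\mu\nu}$. The paper's own proof is exactly this (stated very tersely), so no further comment is needed.
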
 

\subsection{Renormalized YM-spectral action}

Combining the above two Propositions \ref{prop:eff-action-gauge} and \ref{prop:eff-action-ghost} we can derive an expression for the renormalized asymptotically expanded spectral action for the Yang--Mills system. Adding counterterms to $S^\Lambda[A]$ gives
$$
S^\Lambda_\ren[A] = S^\Lambda[A] + \frac{1}{2} \left(\frac{1}{z}+2k \ln \mu \right) ( \zeta(P_A,0) - \zeta(P,0))  -  \left(\frac{1}{z}+2k \ln \mu \right) (\zeta(\tilde P_A,0) - \zeta(\tilde P,0)).
$$
Recall from Theorem \ref{thm:CC} the lowest order terms in the asymptotic expansion $S^\Lambda[A]$. 
\begin{thm}
The action $S^\Lambda[A]$ for the Yang--Mills system on a flat background manifold is renormalizable. The renormalized (asymptotically expanded) spectral action $S^\Lambda_\ren[A]$ is obtained from $S^\Lambda[A]$ by the following redefinition of the coefficient $f_0$:% and $f_4$:
\begin{gather*}
f_0 \mapsto f_0 + 24 \pi^2 \left(c +\tilde c\right)
\left(\frac{1}{z} + 2k \ln \mu\right)  %, \qquad
%f_4 \mapsto f_4 - \frac{2 \pi^2}{\Lambda^4 N^2}\left( q +\tilde q\right)\left( \frac{1}{z}+2k \ln \mu \right) 
\end{gather*}
where the constants $c,\tilde c$ are %and the polynomials $q,\tilde q$ and $\tilde q_i$ 
as in Proposition \ref{prop:eff-action-gauge} and \ref{prop:eff-action-ghost}.
\end{thm}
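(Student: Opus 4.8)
The plan is to obtain the theorem as the culmination of the machinery already set up, essentially as a corollary of the superrenormalizability of Section \ref{sect:ren-sa} combined with Propositions \ref{prop:eff-action-gauge} and \ref{prop:eff-action-ghost}. First I would recall that, by the power-counting Lemma, every divergence of the quantized theory occurs at one loop and in graphs with at most four external legs, and that BRST-invariance of the regularized effective action (together with the cited cohomology results) forces the associated counterterm to be a multiple of $\int_M \tr F_{\mu\nu}F^{\mu\nu}$. Hence the entire divergent part of the one-loop effective action is captured by the pole at $z=0$ of $W_z[B]$ in the gauge sector, together with the analogous Faddeev--Popov ghost contribution.

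Next I would evaluate these residues. Expanding $W_z[B]$ about $z=0$, its pole term is $-\tfrac{1}{2}(\zeta(P_B,0)-\zeta(P,0))\bigl(\tfrac{1}{z}+2k\ln\mu\bigr)$, and Proposition \ref{prop:eff-action-gauge} identifies the coefficient as $c\int_M\tr F_{\mu\nu}F^{\mu\nu}$; Proposition \ref{prop:eff-action-ghost} analogously yields the ghost residue $\tilde c\int_M\tr F_{\mu\nu}F^{\mu\nu}$. A key point I would stress here is that $a_4(P_B)$ is the scale-invariant ($t^0$) heat coefficient and therefore has mass dimension exactly four, so that no higher-derivative invariants $\int F\Delta^j F$ can enter. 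This is precisely why the counterterm is purely proportional to the Yang--Mills action and why none of the coefficients $f_{-2k}$ with $k>0$ is affected.

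I would then substitute into the displayed expression for $S^\Lambda_\ren[A]$. Using $\tfrac{1}{2}(\zeta(P_A,0)-\zeta(P,0))=-c\int\tr F_{\mu\nu}F^{\mu\nu}$ from Proposition \ref{prop:eff-action-gauge} and $\zeta(\tilde P_A,0)-\zeta(\tilde P,0)=\tilde c\int\tr F_{\mu\nu}F^{\mu\nu}$ from Proposition \ref{prop:eff-action-ghost} (the $B=0$ terms dropping out since $F$ vanishes there), the two counterterms combine to $-(c+\tilde c)\bigl(\tfrac{1}{z}+2k\ln\mu\bigr)\int_M\tr F_{\mu\nu}F^{\mu\nu}$. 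Comparing with the leading term $-\tfrac{f_0}{24\pi^2}\int_M\tr_N F_{\mu\nu}F^{\mu\nu}$ of $S^\Lambda[A]$ supplied by Theorem \ref{thm:CC}, I would factor out $-\tfrac{1}{24\pi^2}\int\tr F_{\mu\nu}F^{\mu\nu}$ and read off exactly the advertised shift $f_0\mapsto f_0+24\pi^2(c+\tilde c)\bigl(\tfrac{1}{z}+2k\ln\mu\bigr)$, with all remaining coefficients left untouched.

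The main obstacle I anticipate is not analytic but a matter of careful bookkeeping: making the signs and the overall factor $24\pi^2$ line up between the zeta-regularized residues -- with their factor $\tfrac{1}{2}$ for the bosonic determinant and the opposite-sign, undivided contribution of the fermionic ghosts -- and the Chamseddine--Connes normalization of the Yang--Mills term. One further needs to confirm that the finite remainder $-\tfrac{1}{2}(\zeta'(P_B,0)-\zeta'(P,0))+\cO(z)$ contributes no additional divergence, so that this single shift of $f_0$ genuinely renormalizes the theory; this in turn is guaranteed by the superrenormalizability established in Section \ref{sect:ren-sa}.
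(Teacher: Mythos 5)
Your proposal is correct and follows essentially the same route as the paper: the theorem is obtained there, without further argument, by combining the displayed formula for $S^\Lambda_\ren[A]$ with Propositions \ref{prop:eff-action-gauge} and \ref{prop:eff-action-ghost} and the normalization $-\tfrac{f_0}{24\pi^2}\int_M\tr_N F_{\mu\nu}F^{\mu\nu}$ from Theorem \ref{thm:CC}, exactly as you do. Your sign and factor bookkeeping (the $\tfrac12$ for the bosonic determinant versus the opposite-sign ghost contribution) matches the paper's conventions, so there is nothing to add.
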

With this result, we could determine $\beta$ functions for the dependence of the renormalized quantity $f_0$ %and $f_4$ 
on the mass scale $\mu$. Defining a bare quantity $f_0^B$ % and $f_4^B$ 
by 
\begin{gather*}
f_0^B:=  f_0 +24 \pi^2 \left(c +\tilde c\right)
\left(\frac{1}{z} + 2k \ln \mu\right),  %, \qquad
%f_4^B := f_4 - \frac{2 \pi^2}{\Lambda^4 N^2}  \left( q +\tilde q\right) \left( \frac{1}{z}+2k \ln \mu \right).
\end{gather*}
its supposed independence of the mass scale $\mu$ implies that
$$
\mu \frac{\partial f_0}{\partial \mu} = - 48 k \pi^2 \left(c +\tilde c\right).%;
%\qquad 
%\mu \frac{\partial f_4}{\partial \mu} = \frac{4 k \pi^2}{\Lambda^4 N^2} \left( q+ \tilde q\right).
$$
This defines the renormalization group flow on the spectral function $f$, which in this case is linear on the coefficient $f_0$. % and $f_4$. 
Also, since the only divergences appear at one loop, these equations are expected to hold at the non-perturbative level.

\section{Explicit computation of the counterterm for $k=2$}
\label{sect:1l-sa-a6}
The goal is to compute the counterterm for the asymptotically expanded spectral action with $f_{2-m}=0$ for all $m>4$. In other words, 
$$
S^\Lambda[A] = f_0 a_4(D_A^2) + \Lambda^{-2} f_{-2} a_6 (D_A^2).
$$
Even though the corresponding higher-derivative theory is not superrenormalizable (that would require at least $f_{-4} \neq 0$), it serves as an illustrative example of the above approach. 

For convenience, we compute in a covariantly flat background field, {\it i.e.}
$$
\nabla_\mu (F_{\nu\rho}) = 0
$$
where we have dropped the index $B$ from the covariant derivative and curvature. Consequently, the $F_{\mu\nu}$ commutes among themselves. One then easily establishes the following results:
\begin{lma}
\label{lma:rel1}
For a covariantly flat connection $\nabla$ we have for any field $\phi$ in the adjoint representation:
\begin{align}
\tag{a}\label{rel1-a}
\Delta \nabla_\mu \phi &= \nabla_\mu \Delta \phi + 2 [F_{\mu\kappa},\nabla^\kappa \phi],\\
\tag{b}\label{rel1-b}
\nabla_\mu \Delta \nabla_\nu \phi &= \nabla_\nu \Delta \nabla_\mu \phi +[F_{\mu\nu},\Delta \phi] + 2 [F_{\nu\kappa},\nabla_\mu \nabla^\kappa \phi] - 2 [F_{\mu\kappa},\nabla_\nu \nabla^\kappa \phi],\\
\tag{c}\label{rel1-c}
-\nabla_\mu \Delta \nabla^\mu \phi & = \Delta^2 \phi - [F^{\mu\kappa},[F_{\mu\kappa}, \phi]].
\end{align}
with $\Delta = -\nabla^\mu \nabla_\mu$ the Laplacian corresponding to $\nabla$.
\end{lma}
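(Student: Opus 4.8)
The plan is to derive all three identities from a single structural input: since $M$ is flat, the only curvature felt by an adjoint field $\phi$ is the gauge curvature, so that the Ricci identity takes the form $[\nabla_\mu,\nabla_\nu]\phi = [F_{\mu\nu},\phi]$. I would combine this with the two consequences of covariant flatness available here: the relation $\nabla_\alpha F_{\mu\nu}=0$ (so that any $F$ passes freely through a covariant derivative, and in particular $\nabla^\mu F_{\mu\nu}=0$), and the fact that the $F_{\mu\nu}$ commute among themselves, whence their adjoint actions $\ad(F_{\mu\nu})$ mutually commute.

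For \eqref{rel1-a} I would write $\Delta\nabla_\mu\phi = -\nabla^\kappa\nabla_\kappa\nabla_\mu\phi$ and commute $\nabla_\mu$ to the front past the two factors $\nabla_\kappa$ one at a time. Each interchange produces a curvature term via the Ricci identity; one of the resulting terms has the form $[\nabla^\kappa F_{\kappa\mu},\,\cdot\,]$ and drops out because $\nabla^\kappa F_{\kappa\mu}=0$, while the two surviving terms coincide after raising and lowering the contracted index, producing the factor $2[F_{\mu\kappa},\nabla^\kappa\phi]$.

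Identity \eqref{rel1-b} I would obtain by applying \eqref{rel1-a} once with $\phi$ replaced by $\nabla_\nu\phi$ and once with $\phi$ replaced by $\nabla_\mu\phi$, and subtracting. The leading terms combine into $\Delta(\nabla_\mu\nabla_\nu-\nabla_\nu\nabla_\mu)\phi = \Delta[F_{\mu\nu},\phi]$, and since $F_{\mu\nu}$ is covariantly constant this equals $[F_{\mu\nu},\Delta\phi]$. The genuinely delicate point, and the step I expect to be the main obstacle, is the remaining pair of curvature terms: after using \eqref{rel1-a} they appear as $[F_{\mu\kappa},\nabla^\kappa\nabla_\nu\phi]$ rather than with the ordering $\nabla_\nu\nabla^\kappa$ that appears in the statement, and reordering the derivatives produces extra double-commutator terms $[F_{\mu\kappa},[F^\kappa{}_\nu,\phi]]$. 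I would show these cancel by writing them as $\ad(F_{\mu\kappa})\ad(F^\kappa{}_\nu)\phi$ and using that the $\ad(F)$ commute, together with the antisymmetry of $F$, to match the two contributions; this is precisely where the commutativity of the $F_{\mu\nu}$ (special to the covariantly flat background) is indispensable.

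Finally, for \eqref{rel1-c} I would contract \eqref{rel1-a} with $\nabla^\mu$, i.e.\ apply $\nabla^\mu$ to both sides. The left side becomes $\nabla^\mu\Delta\nabla_\mu\phi$ and the term $\nabla^\mu\nabla_\mu\Delta\phi=-\Delta^2\phi$ appears; the derivative hitting the curvature term vanishes by $\nabla^\mu F_{\mu\kappa}=0$, leaving $2[F_{\mu\kappa},\nabla^\mu\nabla^\kappa\phi]$. Contracting the antisymmetric $F_{\mu\kappa}$ against $\nabla^\mu\nabla^\kappa\phi$ selects the antisymmetric part $\half[F^{\mu\kappa},\phi]$ through the Ricci identity, turning this into $[F^{\mu\kappa},[F_{\mu\kappa},\phi]]$ and yielding \eqref{rel1-c} after rearranging signs. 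Throughout, the only routine care required is consistent index raising and lowering and the sign convention $\Delta=-\nabla^\mu\nabla_\mu$.
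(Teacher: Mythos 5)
Your argument is correct: all three identities follow exactly as you describe from the Ricci identity $[\nabla_\mu,\nabla_\nu]\phi=[F_{\mu\nu},\phi]$, the covariant constancy of $F$ (so $F$ passes through derivatives and $\nabla^\mu F_{\mu\nu}=0$), and the mutual commutativity of the $\ad F_{\mu\nu}$, which is indeed the one place where covariant flatness is used beyond $\nabla F=0$. The paper leaves this lemma as a routine verification ("one then easily establishes"), and your computation is precisely the intended one.
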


We aim at expressions for the quadratic part of our action functionals of the following form
\begin{equation}
\label{P_B}
-\tfrac{1}{2} \int \tr A_\mu P_{B}^{\mu\nu} A_\nu; \qquad P_B ^{\mu\nu} = \Delta^2 g^{\mu\nu}+ p_{2,\alpha\beta}^{\mu\nu} \nabla^\alpha \nabla^\beta + p_4^{\mu\nu}
\end{equation}
with $p_2$ and $p_4$ endomorphisms depending on the background field $B$. We first examine some other related action functionals that will be useful in studying $S^\Lambda[A]$.

\begin{prop}
\label{prop:S1}
Consider the action functional
$$
S_1[A] = -\half\int \tr {F^{\mu\nu}}_{;\mu}(A) {F_{\rho \nu}}^{;\rho}(A)-\half \int \tr (\nabla_\nu A^\nu) \Delta \nabla_\mu A^\mu .
%\nabla^A_{\mu} F^{\mu\nu}_A \nabla_A^\rho F^A_{\rho\nu} 
$$
Then the quadratic part in $A$ in a covariantly flat background $B$ is of the form \eqref{P_B} with
\begin{align*}
p_{2,\alpha\beta}^{\mu\nu} = 4 \ad F^{\mu\nu} g_{\alpha\beta},\qquad
p_4^{\mu\nu} = -4 \ad F^{\mu\kappa} \ad {F^\nu}_\kappa,
\end{align*}
in terms of the curvature of $B$.
\end{prop}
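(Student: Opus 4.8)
The plan is to expand $S_1[A]$ to quadratic order in the fluctuation $A$ about the background $B$, integrate everything by parts into the canonical shape $-\half\int\tr A_\mu(\,\cdot\,)^{\mu\nu}A_\nu$, and read off $P_B$. Throughout, $\nabla$, $\Delta$ and $F$ refer to the background connection $B$, and for the \emph{quadratic} part only the linearization $f_{\mu\nu}:=\nabla_\mu A_\nu-\nabla_\nu A_\mu$ of the field strength contributes, so that ${F^{\mu\nu}}_{;\mu}(A)=\nabla_\mu f^{\mu\nu}$ and ${F_{\rho\nu}}^{;\rho}(A)=\nabla^\rho f_{\rho\nu}$ is the same vector with indices rearranged. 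The first step is to simplify this divergence: expanding and commuting the two covariant derivatives via $[\nabla_\mu,\nabla^\nu]A^\mu=[{F_\mu}^\nu,A^\mu]$ gives
$$
\nabla_\mu f^{\mu\nu}= -\Delta A^\nu-\nabla^\nu(\nabla_\rho A^\rho)-[{F_\mu}^\nu,A^\mu]=:G^\nu .
$$

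With this, the first term of $S_1$ becomes $-\half\int\tr G^\nu G_\nu$, and I would expand the product and sort the nine resulting contributions by the number of background curvatures they carry. The contribution $-\half\int\tr\Delta A^\nu\Delta A_\nu=-\half\int\tr A_\mu\Delta^2 A^\mu$ is already in final form and supplies the $\Delta^2 g^{\mu\nu}$ part of \eqref{P_B}. The purely longitudinal pieces, those built only from $\nabla^\nu(\nabla_\rho A^\rho)$ together with the curvature-free part of the cross terms, collapse after one integration by parts and \eqref{rel1-a} (which converts $\nabla_\nu\Delta A^\nu$ into $\Delta(\nabla_\rho A^\rho)$ up to a curvature remainder) into a single multiple of $\int\tr(\nabla_\rho A^\rho)\Delta(\nabla_\sigma A^\sigma)$; I expect these to cancel exactly against the explicit second term $-\half\int\tr(\nabla_\nu A^\nu)\Delta(\nabla_\mu A^\mu)$ of $S_1$, consistent with the absence of any longitudinal structure in the claimed $P_B$.

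What remains are the curvature-dependent contributions, and here the key simplification is that covariant flatness $\nabla F=0$ makes $\ad F$ commute with every $\nabla_\mu$ and with $\Delta$, so that curvatures can be pulled freely through derivatives and the relations \eqref{rel1-a}--\eqref{rel1-c} of Lemma \ref{lma:rel1} apply cleanly. Using them I would reorganize the curvature-linear terms — which arise from the $\Delta A\cdot[F,A]$ and $\nabla(\nabla\cdot A)\cdot[F,A]$ cross terms together with the remainder produced by \eqref{rel1-a} — so that, after the overall factor $-\half$, they assemble into $2\int\tr A_\mu\,\ad F^{\mu\nu}\,\Delta A_\nu$. Via $g_{\alpha\beta}\nabla^\alpha\nabla^\beta=-\Delta$ this matches the $p_{2,\alpha\beta}^{\mu\nu}\nabla^\alpha\nabla^\beta$ piece with $p_{2,\alpha\beta}^{\mu\nu}=4\ad F^{\mu\nu}g_{\alpha\beta}$. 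The curvature-quadratic part comes partly from the manifest double commutator $-\half\int\tr[{F_\mu}^\nu,A^\mu][F_{\rho\nu},A^\rho]$ and partly from the curvatures spilled while commuting derivatives in the curvature-linear terms; collecting them should give $2\int\tr A_\mu\,\ad F^{\mu\kappa}\,\ad {F^\nu}_\kappa\,A_\nu$, i.e. $p_4^{\mu\nu}=-4\ad F^{\mu\kappa}\ad {F^\nu}_\kappa$.

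The hard part will be exactly this last bookkeeping. The curvature-linear and curvature-quadratic sectors are \emph{not} cleanly separated by naive term-counting, because each commutation of covariant derivatives needed to force the two derivatives into the contracted combination $g_{\alpha\beta}\nabla^\alpha\nabla^\beta$ trades a derivative for a factor of $\ad F$ and thereby feeds the quadratic sector. Keeping the antisymmetry of $F$ and all signs straight through these repeated integrations by parts, and verifying that the $F^2$ terms generated in this way combine with the manifest double commutator to the stated coefficient, is where the real work lies; relations \eqref{rel1-b} and \eqref{rel1-c} are precisely what permit the needed commutations and what let the resulting $\ad F\,\ad F$ terms assemble into the compact endomorphism $p_4$.
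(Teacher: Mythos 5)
Your overall strategy --- square the linearized covariant divergence, cancel the longitudinal piece against the explicit $(\nabla_\nu A^\nu)\Delta(\nabla_\mu A^\mu)$ term, and sort the remainder by powers of $\ad F$ using Lemma \ref{lma:rel1} --- is exactly the paper's. But your starting formula for $G^\nu$ is wrong, and the error is not cosmetic. The semicolon in ${F^{\mu\nu}}_{;\mu}(A)$ denotes the covariant divergence with respect to the \emph{full} connection, so when you substitute $A\to B+A$ you must vary both the field strength \emph{and} the connection appearing in the divergence. Since $\nabla_\mu F^{\mu\nu}=0$ in a covariantly flat background, the linear part is $[A_\mu,F^{\mu\nu}]+\nabla_\mu f^{\mu\nu}$, not just $\nabla_\mu f^{\mu\nu}$. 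The commutator $-[{F_\mu}^\nu,A^\mu]$ that you produce by commuting the two derivatives inside $\nabla_\mu f^{\mu\nu}$ happens to equal $[A_\mu,F^{\mu\nu}]$, so the correct linearization carries this commutator with coefficient $2$, i.e.\ $G^\nu=-\Delta A^\nu-\nabla^\nu(\nabla_\rho A^\rho)+2[A_\mu,F^{\mu\nu}]$, whereas yours has coefficient $1$. (Compare the paper's intermediate expression $[A_\mu,F^{\mu\nu}]-\Delta A^\nu-\nabla_\mu\nabla^\nu A^\mu$: commuting the last term reproduces a second copy of the commutator.)

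This propagates into the coefficients you are trying to reach: the cross term of $-\Delta A_\nu$ with the commutator piece of $G$ scales linearly in that coefficient, and the commutator-squared contribution to $p_4$ scales quadratically, so the bookkeeping you outline would deliver strictly smaller numbers than $p_{2,\alpha\beta}^{\mu\nu}=4\ad F^{\mu\nu}g_{\alpha\beta}$ and $p_4^{\mu\nu}=-4\ad F^{\mu\kappa}\ad{F^\nu}_\kappa$; asserting at the end that the terms ``assemble into'' the stated expressions does not close the gap. With the corrected $G^\nu$ your plan coincides with the paper's proof: the longitudinal cancellation works as you expect, and the curvature terms collect to the stated $p_2$ and $p_4$.
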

\proof
The quadratic part in $A$ for a covariantly flat background field $B$ is 
\begin{align*}
S_1[A;B] &\equiv S_1[B+A]|_{\textrm{quad}} \\
& = -\half \int \tr (\nabla_\nu A^\nu) \Delta \nabla_\mu A^\mu - \half \int \tr \left( [A_\mu,F^{\mu\nu}] - \Delta A^\nu - \nabla_\mu \nabla^\nu A^\mu \right) \\
& \qquad \qquad \qquad \times\left( [A^\rho,F_{\rho\nu}] - \Delta A_\nu - \nabla^\rho \nabla_\nu A_\rho \right)\\
&=-\half \int \tr \bigg\{ A^\nu \Delta^2 A_\nu - A_\mu [F^{\mu\nu},[F_{\rho\nu},A^\rho] ]-2 A_\mu [F^{\mu\nu}, \Delta A_\nu]\\ 
&\qquad \qquad \qquad -2 A_\mu [F^{\mu\nu},\nabla^\rho \nabla_\nu A_\rho] +2A^\nu \Delta \nabla^\rho \nabla_\nu A_\rho + A^\mu \nabla^\nu \nabla_\mu \nabla^\rho \nabla_\nu A_\rho 
\bigg\}.
\end{align*}
We use the commutation relation \eqref{rel1-a} of Lemma \ref{lma:rel1} and the definition of the curvature to obtain for the last two terms:
\begin{align*}
2A^\nu \Delta \nabla^\rho \nabla_\nu A_\rho&=2 A^\nu [F_{\rho\nu},\Delta A^\rho] + 2 A^\nu \nabla_\nu \Delta \nabla^\rho A_\rho + 4 A^\nu [F_{\nu \kappa}, \nabla^\kappa \nabla^\rho A_\rho],
\\
A^\mu \nabla^\nu \nabla_\mu \nabla^\rho \nabla_\nu A_\rho &=A^\mu [F_{\rho\nu}, \nabla^\nu \nabla_\mu A^\rho] + A^\mu [{F^\nu}_\mu, \nabla_\nu \nabla^\rho A_\rho] + A^\mu \nabla_\mu \nabla^\nu \nabla_\nu \nabla^\rho A_\rho.
\end{align*}
Substituting this in $S_1[A;B]$ and relabeling dummy indices if necessary, we find
\begin{align*}
S_1[A;B] &= -\half \int \tr \bigg\{ A^\nu \Delta^2 A_\nu -A_\mu[F^{\mu\nu},[F_{\rho\nu},A^\rho]] -4 A_\mu [F^{\mu\nu},\Delta A_\nu] + 3 A^\mu [F_{\mu\kappa},[F^{\kappa\nu},A_\nu]] \bigg\}
\end{align*}
from which we read off the displayed form of $p_2$ and $p_4$.
\endproof

\begin{prop}
\label{prop:S2}
Consider the action functional
$$
S_2[A] = \tfrac{1}{4} \int \tr F_{\mu\nu}(A) {F^{\mu \nu;\kappa}}_{\kappa}(A) - \half \int \tr (\nabla_\nu A^\nu) \Delta \nabla_\mu A^\mu .
%\nabla^A_{\mu} F^{\mu\nu}_A \nabla_A^\rho F^A_{\rho\nu} 
$$
Then the quadratic part in $A$ in a covariantly flat background $B$ is of the form \eqref{P_B} with
\begin{align*}
p_{2,\alpha\beta}^{\mu\nu} &= \ad F^{\mu\nu} g_{\alpha\beta} + \tfrac{3}{2}(\ad {F^\mu}_\beta g_\alpha^\nu + \ad {F^\mu}_\alpha g_\beta^\nu) -\tfrac{3}{2}( \ad {F^\nu}_\beta g^\mu_\alpha + \ad {F^\nu}_\alpha g^\mu_\beta),\\
p_4^{\mu\nu} &= -\tfrac{3}{2} \ad F_{\kappa\lambda} \ad F^{\kappa\lambda} - \ad F^{\mu \kappa} \ad {F^\nu}_\kappa,
\end{align*}
in terms of the curvature of $B$.
\end{prop}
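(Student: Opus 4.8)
My plan is to follow the proof of Proposition \ref{prop:S1}: I will substitute the background split $A \mapsto B+A$ into $S_2$, expand in powers of the fluctuation $A$, and extract the part quadratic in $A$, eventually integrating by parts to reach the canonical form \eqref{P_B}. Throughout, $\nabla$ denotes the background covariant derivative, the full curvature is $\mathcal F_{\mu\nu} := F_{\mu\nu}(B+A) = F_{\mu\nu} + (\nabla_\mu A_\nu - \nabla_\nu A_\mu) + [A_\mu, A_\nu]$ with $F_{\mu\nu} = F_{\mu\nu}(B)$, and the full covariant derivative on adjoint fields is $\mathcal D_\kappa := \nabla_\kappa + \ad A_\kappa$.

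The essential simplification is to integrate by parts in the first term of $S_2$ \emph{before} expanding, writing
$$
\tfrac14 \int \tr \mathcal F_{\mu\nu}\, {\mathcal F^{\mu\nu;\kappa}}_{\kappa} = -\tfrac14 \int \tr (\mathcal D_\kappa \mathcal F_{\mu\nu})(\mathcal D^\kappa \mathcal F^{\mu\nu}).
$$
Since $B$ is covariantly flat, the zeroth-order term $\nabla_\kappa F_{\mu\nu}$ of $\mathcal D_\kappa \mathcal F_{\mu\nu}$ vanishes, so the quadratic part of this term is exactly the square of the linear part
$$
(\mathcal D_\kappa \mathcal F_{\mu\nu})^{(1)} = \nabla_\kappa \hat F_{\mu\nu} + [A_\kappa, F_{\mu\nu}], \qquad \hat F_{\mu\nu} := \nabla_\mu A_\nu - \nabla_\nu A_\mu.
$$
This factorization, the analogue of the one used for $S_1$, avoids the genuinely quadratic variations of both the curvature and the Laplacian.

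Expanding the square yields three groups. The pure-derivative group $-\tfrac14\int\tr\nabla_\kappa\hat F_{\mu\nu}\nabla^\kappa\hat F^{\mu\nu}$ I will process by integration by parts, turning it into $-\tfrac14\int\tr\hat F_{\mu\nu}\Delta\hat F^{\mu\nu}$ and then moving all derivatives onto one factor $A_\nu$; the leading operator $\Delta^2 g^{\mu\nu}$ emerges, together with a longitudinal remainder proportional to $\int\tr A_\nu\Delta\nabla^\nu\nabla_\mu A^\mu$. The gauge-fixing term $-\half\int\tr(\nabla_\nu A^\nu)\Delta\nabla_\mu A^\mu$, already quadratic with $\nabla$ the background derivative, is present precisely to cancel this remainder, exactly as in Proposition \ref{prop:S1}. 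The pure-commutator group $-\tfrac14\int\tr[A_\kappa,F_{\mu\nu}][A^\kappa,F^{\mu\nu}]$ becomes, using $\tr[F,A_\kappa][F',A^\kappa] = -\tr A_\kappa\,\ad F\,\ad F'\,A^\kappa$ and cyclicity, a $p_4$-contribution proportional to $\ad F_{\mu\nu}\ad F^{\mu\nu}$. The two cross terms are equal under the trace, and after one integration by parts (using $\nabla_\kappa F^{\mu\nu}=0$) reduce to $\int\tr(\nabla_\mu A_\nu)[\nabla_\kappa A^\kappa, F^{\mu\nu}]$, which feeds the $\ad F$-times-two-derivatives structures of $p_2$.

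The heart of the computation, and the main obstacle, is the systematic reordering of covariant derivatives: every integration by parts that commutes one $\nabla$ past another produces a curvature insertion governed by Lemma \ref{lma:rel1}\,\eqref{rel1-a}--\eqref{rel1-c}. In particular, \eqref{rel1-c} rewrites $-\nabla_\mu\Delta\nabla^\mu$ as $\Delta^2 - \ad F^{\mu\kappa}\ad F_{\mu\kappa}$, so the scalar structure $\ad F_{\kappa\lambda}\ad F^{\kappa\lambda}$ in $p_4$ receives a contribution from the derivative group in addition to the one from the commutator group; these must be added to produce the stated coefficient $-\tfrac32$, while \eqref{rel1-b} supplies the mixed structure $-\ad F^{\mu\kappa}\ad {F^\nu}_\kappa$. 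The delicate point is the bookkeeping of index contractions, signs, and rational coefficients — keeping straight which insertion comes from the cross terms versus from commuting derivatives, since both feed the same tensor structures in $p_2$. Once every term is assembled into $-\tfrac12\int\tr A_\mu P_B^{\mu\nu}A_\nu$ with $P_B^{\mu\nu}$ of the form \eqref{P_B}, one reads off $p_2$ and $p_4$ and checks that they agree with the claimed $p_{2,\alpha\beta}^{\mu\nu}$ and $p_4^{\mu\nu}$.
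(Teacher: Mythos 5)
Your proposal follows the paper's proof essentially verbatim: integrate by parts to write the higher-derivative term as $-\tfrac14\int\tr (F_{\mu\nu;\kappa})^2$, use covariant flatness to reduce the quadratic part to the square of the linear variation $[A_\kappa,F_{\mu\nu}]+\nabla_\kappa\hat F_{\mu\nu}$, cancel the longitudinal remainder against the gauge-fixing term, and invoke Lemma \ref{lma:rel1}\,\eqref{rel1-b}--\eqref{rel1-c} to reorder derivatives — in particular correctly attributing the $-\tfrac32$ in $p_4$ to the sum of the pure-commutator term and the curvature insertion from \eqref{rel1-c}. The plan is correct and matches the paper's argument in both strategy and detail.
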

\proof
The quadratic part in $A$ for a covariantly flat background field $B$ is 
\begin{align*}
S_2[A;B] &\equiv S_2[B+A]|_{\textrm{quad}} \\
&=-\half \int \tr (\nabla_\nu A^\nu) \Delta \nabla_\mu A^\mu  - \tfrac{1}{4} \int \tr \left([A_\kappa,F_{\mu\nu}] + \nabla_\kappa (\nabla_\mu A_\nu- \nabla_\nu A_\mu) \right)^2%\left([A^\kappa,F^{\mu\nu}] + \nabla^\kappa (\nabla^\mu A^\nu- \nabla^\nu A^\mu) \right)
\\
&=-\half \int \tr (\nabla_\nu A^\nu) \Delta \nabla_\mu A^\mu  - \half \int \tr \bigg\{ -\half A_\kappa[F_{\mu\nu},[F^{\mu\nu},A^\kappa]]\\
&\qquad\qquad\qquad \qquad\qquad
-2A^\nu [F_{\mu\nu}, \nabla^\mu \nabla^\kappa A_\kappa] - A^\nu \nabla^\mu \Delta \nabla_\mu A_\nu + A^\nu \nabla^\mu \Delta \nabla_\nu A_\mu \bigg\}\\
&= -\half \int \tr \bigg\{ -\tfrac{3}{2} A_\kappa[F_{\mu\nu},[F^{\mu\nu},A^\kappa]]
+A^\nu \Delta^2 A_\nu + A^\nu [F_{\mu\nu},\Delta A^\mu] \\
&\qquad \qquad \qquad -2A^\nu [F_{\mu\nu}, \nabla^\mu \nabla^\kappa A_\kappa] + 2 A^\nu [F_{\nu\kappa},\nabla_\mu \nabla^\kappa A^\mu] -  2 A^\nu [F_{\mu\kappa},\nabla_\nu \nabla^\kappa A^\mu] \bigg\}.
\end{align*}
In going to the last line, we have used relations \eqref{rel1-b} and \eqref{rel1-c} of Lemma \ref{lma:rel1}. We rewrite the last three terms using the definition of the curvature as
\begin{align*}
&-\tfrac{3}{2} A^\nu [F_{\kappa\nu},\nabla^\kappa\nabla^\mu A_\mu]
-\tfrac{3}{2} A^\nu [F_{\kappa\nu},\nabla^\mu\nabla^\kappa A_\mu]
-\half A^\nu [F_{\kappa\nu},[F^{\kappa\mu}, A_\mu]]\\
&+\tfrac{3}{2} A_\mu [ F_{\kappa\nu},\nabla^\mu \nabla^\kappa A^\nu]
+\tfrac{3}{2} A_\mu [ F_{\kappa\nu},\nabla^\kappa \nabla^\mu  A^\nu]
+\half A_\mu [ F_{\kappa\nu},[F^{\mu\kappa}, A^\nu]].
\end{align*}
The last two terms on the two lines combine and contribute to $p_4$, which becomes
$$
p^{\mu\nu}_4 = -\tfrac{3}{2} \ad F_{\kappa\lambda} \ad F^{\kappa\lambda} - \ad F^{\mu \kappa} \ad {F^\nu}_\kappa.
$$
The terms quadratic in the covariant derivatives precisely combine to give the above $p_2$.
\endproof

\begin{prop}
\label{prop:S3}
Consider the action functional 
$$ 
S_3[A] =-\tfrac{1}{3} \int \tr {F_\mu}^\nu {F_\nu}^\rho {F_\rho}^\mu = -  \tfrac{1}{6}\int \tr {F_\mu}^\nu [F_{\nu\rho} ,F^{\rho\mu}].
%\nabla^A_{\mu} F^{\mu\nu}_A \nabla_A^\rho F^A_{\rho\nu} 
$$
Then the quadratic part in $A$ in a covariantly flat background $B$ is of the form $p_{2,\alpha\beta}^{\mu\nu} \nabla^\alpha\nabla^\beta + p_4^{\mu\nu}$ with
\begin{align*}
p_{2,\alpha\beta}^{\mu\nu} &= \ad F^{\mu\nu} g_{\alpha\beta}
-\half (\ad {F^\mu}_\beta g^\nu_\alpha+ \ad {F^\mu}_\alpha g^\nu_\beta )
+\half (\ad {F^\nu}_\beta g^\mu_\alpha+ \ad {F^\nu}_\alpha g^\mu_\beta )
,\\
p_4^{\mu\nu} &= \half \ad F^{\kappa\lambda} \ad F_{\kappa\lambda}~ g^{\mu\nu} 
- \ad F^{\mu\kappa} \ad {F^\nu}_\kappa,
\end{align*}
in terms of the curvature of $B$.
\end{prop}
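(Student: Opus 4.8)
The plan is to follow the same strategy as in the proofs of Propositions \ref{prop:S1} and \ref{prop:S2}: substitute $A \mapsto B+A$, expand the curvature, and retain the part quadratic in $A$. Writing $\mathcal{F}_{\mu\nu}$ for the curvature of $B+A$ and using covariant flatness of the background $B$, one has
$$
\mathcal{F}_{\mu\nu} = F_{\mu\nu} + G_{\mu\nu} + [A_\mu,A_\nu], \qquad G_{\mu\nu} := \nabla_\mu A_\nu - \nabla_\nu A_\mu,
$$
with $F_{\mu\nu}$ the background curvature. Inserting this into $S_3[B+A] = -\tfrac{1}{3}\int \tr {\mathcal{F}_\mu}^\nu {\mathcal{F}_\nu}^\rho {\mathcal{F}_\rho}^\mu$ and collecting the terms of order two in $A$, the contribution splits into two families: the terms quadratic in the linear piece $G$ (each carrying a single covariant derivative of $A$, hence feeding $p_2$) and the terms linear in the quadratic piece $[A,A]$ and quadratic in the background $F$ (carrying no derivatives on $A$, hence feeding $p_4$).

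First I would dispose of the $[A,A]\,F\,F$ family. By cyclicity of the trace the three placements of $[A_\mu,A_\nu]$ among the curvature slots coincide after relabelling, so this family reduces to $-\int \tr [A_\mu,A^\nu]\,{F_\nu}^\rho {F_\rho}^\mu$. Using the identity $\tr([X,Y]Z) = \tr(X[Y,Z])$ together with the fact that in a covariantly flat background the $F$'s commute among themselves, this rewrites as a double-bracket expression of the form $\int \tr A_\mu(\cdots)A_\nu$ with the structure $\ad F\,\ad F$. It produces the term $-\ad F^{\mu\kappa}\ad {F^\nu}_\kappa$ of $p_4$ and, after the appropriate index contraction, the scalar piece $\tfrac{1}{2}\ad F^{\kappa\lambda}\ad F_{\kappa\lambda}\,g^{\mu\nu}$.

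Next I would treat the $G\,G\,F$ family. Substituting $G_{\mu\nu} = \nabla_\mu A_\nu - \nabla_\nu A_\mu$ produces products carrying one covariant derivative on each of the two $A$-factors; integration by parts moves the derivative off the first factor onto the second, and the covariantly flat condition $\nabla_\mu F_{\nu\rho}=0$ guarantees that no derivative lands on the background curvature. Every term then assumes the canonical form $\int \tr A_\mu(\cdots)\nabla^\alpha\nabla^\beta A_\nu$; reading off the coefficient of $\nabla^\alpha\nabla^\beta$ after symmetrizing in $\alpha,\beta$ yields $p_2$, while the antisymmetric part, converted via $[\nabla^\alpha,\nabla^\beta]A_\nu = [F^{\alpha\beta},A_\nu]$, contributes further $\ad F$-terms to $p_4$ that combine with those of the previous step.

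The step I expect to be the main obstacle is the index bookkeeping in the $G\,G\,F$ family: the competing contractions $\nabla_\mu A^\nu$ and $\nabla^\nu A_\mu$ inside ${G_\mu}^\nu {G_\nu}^\rho$ generate several distinct derivative structures, and one must symmetrize $\nabla^\alpha\nabla^\beta$ carefully --- each reordering spawning a curvature commutator that feeds back into $p_4$ --- without double-counting against the $[A,A]\,F\,F$ contribution already computed. Covariant flatness is precisely what renders this tractable, since it removes all derivatives of $F$ and permits the $F$'s to be commuted freely; with that simplification, collecting the two-derivative terms gives the displayed $p_2$ and the residual curvature-squared terms give the displayed $p_4$, completing the proof.
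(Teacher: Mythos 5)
Your plan for the $G\,G\,F$ family (with $G_{\mu\nu}=\nabla_\mu A_\nu-\nabla_\nu A_\mu$) --- integrate by parts onto one factor, symmetrize $\nabla^\alpha\nabla^\beta$, and convert the antisymmetric remainder via $[\nabla^\alpha,\nabla^\beta]A_\nu=[F^{\alpha\beta},A_\nu]$ --- is sound and is essentially what the paper does. The genuine error is in your first step: the $[A,A]\,F\,F$ family does \emph{not} produce the displayed $p_4$; its contribution vanishes identically in a covariantly flat background. You correctly reduce that family to $-\int\tr\,[A_\mu,A^\nu]\,{F_\nu}^\rho{F_\rho}^\mu$, but since $\nabla F=0$ forces the background curvatures to commute, the Lie-algebra-valued tensor $K^{\mu\nu}:=F^{\mu\rho}{F_\rho}^\nu$ is symmetric in $(\mu,\nu)$ (two sign flips from the antisymmetry of $F_{\mu\nu}$ cancel, and the two factors may be interchanged because they commute), whereas $[A_\mu,A_\nu]$ is antisymmetric; the contraction is therefore zero. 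Consequently \emph{all} of $p_4$ --- including the pieces $\half\ad F^{\kappa\lambda}\ad F_{\kappa\lambda}\,g^{\mu\nu}$ and $-\ad F^{\mu\kappa}\ad{F^\nu}_\kappa$ you attribute to the $[A,A]\,F\,F$ terms --- must come from the derivative reorderings in the $G\,G\,F$ family. Executing your plan as written would either double-count (if you keep your claimed $[A,A]\,F\,F$ contribution and also collect the curvature commutators from symmetrizing the derivatives) or leave the two families irreconcilable.

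The paper sidesteps this trap by starting from the commutator form $S_3=-\tfrac16\int\tr\,{F_\mu}^\nu[F_{\nu\rho},F^{\rho\mu}]$: there every order-$A^2$ term containing $[A_\mu,A_\nu]$ carries a commutator of two background curvatures and so drops manifestly, and the three placements of the two $G$'s coincide by cyclicity, giving the clean starting point
$$
S_3[A;B]\big|_{\mathrm{quad}}=-\tfrac12\int\tr\,(\nabla_\mu A^\nu-\nabla^\nu A_\mu)\,[F_{\nu\rho},\nabla^\rho A^\mu-\nabla^\mu A^\rho],
$$
from which $p_2$ and $p_4$ are read off exactly by the symmetrization procedure you describe in your second step. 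I recommend either adopting that form, or inserting the symmetry argument above to justify discarding the $[A,A]\,F\,F$ family before proceeding.
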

\proof
The quadratic part in $A$ for a covariantly flat background field $B$ is 
\begin{align*}
S_3[A;B] &\equiv S_3[B+A]|_{\textrm{quad}} = -\tfrac{1}{2} \int \tr (\nabla_\mu A^\nu - \nabla^\nu A_\mu)[F_{\nu \rho},\nabla^\rho A^\mu- \nabla^\mu A^\rho]\\
&= -\half \int \tr \bigg\{ 
- A^\nu [F_{\nu\rho}, \Delta A^\rho] 
+ A_\mu [F_{\nu\rho},\nabla^\nu \nabla^\rho A^\mu] 
\\& \qquad\qquad\qquad
-A^\nu[F_{\nu\rho}, \nabla_\mu \nabla^\rho A^\mu] 
- A_\mu[F_{\nu\rho}, \nabla^\nu \nabla^\mu A^\rho]
\bigg\}
\intertext{which, expressing the last two terms more symmetrically in the derivatives, becomes}
&= - \half \int \tr \bigg\{ 
- A^\nu [F_{\nu\rho}, \Delta A^\rho] 
+ \half A_\mu [F_{\nu\rho},[F^{\nu\rho}, A^\mu] ]
\\& \qquad 
-\half A^\nu[F_{\nu\rho}, \nabla_\mu \nabla^\rho A^\mu]
-\half A^\nu[F_{\nu\rho},  \nabla^\rho\nabla_\mu A^\mu]
-\half A^\nu[F_{\nu\rho}, [{F_\mu}^\rho ,A^\mu]]\\
& \qquad
-\half A_\mu[F_{\nu\rho}, \nabla^\nu \nabla^\mu A^\rho]
-\half A_\mu[F_{\nu\rho},  \nabla^\mu\nabla^\nu A^\rho]
-\half A_\mu[F_{\nu\rho}, [F^{\nu\mu}, A^\rho]\bigg\}.
\end{align*}
The last term on the second line combines with the last term on the third line to give a contribution $-A_\mu[F^{\mu\kappa}, [F_{\nu\kappa},A^\nu]]$. Gathering all terms gives the indicated form of $p_2$ and $p_4$.
\endproof

A compatibility check between the above three propositions is based on the Bianchi identity $F_{\mu\nu;\rho} + F_{\nu\rho;\mu} + F_{\rho\mu;\nu} = 0$. Indeed, using also that $\phi_{;\mu\nu} = [F_{\nu\mu},\phi] + \phi_{;\nu\mu}$ for $f$ in the adjoint representation, we find
\begin{align*}
-\half \int \tr {F^{\mu\nu}}_{;\mu} {F_{\rho\nu}}^{;\rho} &= -\half \int \tr F^{\mu\nu} {{F_\nu}^\rho}_{;\rho\mu} \\
&=- \half \int \tr \left(F^{\mu\nu} [F_{\mu\rho} ,{F_\nu}^\rho] +   F^{\mu\nu} {{F_\nu}^\rho}_{;\mu\rho} \right)\\
&= - \half \int \tr \left({F_\nu}^\mu [{F_\mu}^\rho, {F_\rho}^\nu]  -\half F^{\mu\nu} {F_{\mu\nu;\rho}}^\rho \right).
\end{align*}
This implies that $S_1 = S_2 + 3 S_3$, which is in concordance with the above expressions for $p_2$ and $p_4$ in Propositions \ref{prop:S1}, \ref{prop:S2} and \ref{prop:S3}. 

\subsection{Higher-derivative Yang--Mills theory}
Consider the following higher-derivative Lagrangian in a background $B$:
\begin{align}
S[B+A] &= -\frac{\Lambda^2}{4} \int \tr F_{\mu\nu}F^{\mu\nu} -\frac{1}{2} \int \tr {F^{\mu\nu}}_{;\mu} {F_{\rho\nu}}^{;\rho} - \frac{\gamma}{3} \int\tr {F_\mu}^\nu {F_\nu}^\rho {F_\rho}^\mu 
\nn \\
& \quad - \frac{1}{2}\int \tr (\nabla^B_\mu  A^\mu) (\Lambda^2+\Delta_B) (\nabla^B_\nu  A^\nu)
+ \int \tr \bar C (\Lambda^2+\Delta_B) \nabla^B_\mu(\nabla_B^\mu+\ad A^\mu)C
\label{eq:babelon}
\end{align}
in which $F$ is the curvature of the connection $\nabla_B + A$. The higher-derivative terms in $S$ are precisely of the form $S_1 + \gamma S_3$; the lower-derivative terms form the gauge-fixed Yang--Mills action. In addition, there is a Faddeev--Popov term making the action $S$ invariant under the BRST-transformations \eqref{eq:BRST-bfg}. 
\begin{prop}
The quadratic part in $A$ of $S[B+A]$ in a covariantly flat background $B$ is of the form \eqref{P_B} with
\begin{align*}
p_{2,\alpha\beta}^{\mu\nu} &= (4+\gamma) \ad F^{\mu\nu} g_{\alpha\beta}
-\frac{\gamma}{2} (\ad {F^\mu}_\beta g^\nu_\alpha+ \ad {F^\mu}_\alpha g^\nu_\beta )
+\frac{\gamma}{2} (\ad {F^\nu}_\beta g^\mu_\alpha+ \ad {F^\nu}_\alpha g^\mu_\beta )
-\Lambda^2 g_{\alpha\beta} g^{\mu\nu}
\\
p_4^{\mu\nu} &= \frac{\gamma}{2} \ad F^{\kappa\lambda} \ad F_{\kappa\lambda}~ g^{\mu\nu} 
- (4+\gamma)\ad F^{\mu\kappa} \ad {F^\nu}_\kappa - \Lambda^2 \ad F^{\mu\nu} .
\end{align*}
Also, the quadratic part in $C$ of $S[B+A]$ in a covariantly flat background $B$ is of the form 
$$
-\int \tr \bar C \tilde P_B C; \qquad \tilde P_B = \Delta^2+ p_{2,\alpha\beta} \nabla^\alpha \nabla^\beta + p_4,
$$
with 
$$
p_{2,\alpha\beta}=-\Lambda^2 g_{\alpha\beta}; \qquad p_4 =0.
$$
\end{prop}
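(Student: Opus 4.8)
The plan is to use the \emph{additivity} of the construction: the assignment sending an action functional to the pair $(p_2,p_4)$ that encodes its quadratic part in $A$ in the normal form \eqref{P_B} is linear, so I can treat each term of \eqref{eq:babelon} separately and add the resulting $p_2$'s and $p_4$'s. As noted just before the statement, the two fourth-order terms $-\half\int\tr{F^{\mu\nu}}_{;\mu}{F_{\rho\nu}}^{;\rho}$ and $-\tfrac\gamma3\int\tr{F_\mu}^\nu{F_\nu}^\rho{F_\rho}^\mu$, together with the $\Delta_B$-part of the gauge fixing in \eqref{eq:babelon}, constitute exactly $S_1[A]+\gamma S_3[A]$. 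Thus Proposition \ref{prop:S1} supplies the leading $\Delta^2 g^{\mu\nu}$ and its curvature-quadratic pieces, while Proposition \ref{prop:S3} (which carries no $\Delta^2$ term) supplies the remaining $\gamma$-proportional curvature pieces; adding them reproduces precisely the $F$-dependent parts of the stated $p_2$ and $p_4$. Before trusting these forms I would re-examine them against the identity $S_1=S_2+3S_3$ recorded after Proposition \ref{prop:S3}. What then remains is the genuinely second-order block — the $\Lambda^2$-scaled gauge-fixed Yang--Mills action $-\tfrac{\Lambda^2}4\int\tr F_{\mu\nu}F^{\mu\nu}-\tfrac{\Lambda^2}2\int\tr(\nabla^B_\mu A^\mu)(\nabla^B_\nu A^\nu)$ — and the Faddeev--Popov term.

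For the Yang--Mills block I would expand $F_{\mu\nu}(B+A)=F_{\mu\nu}(B)+(\nabla_\mu A_\nu-\nabla_\nu A_\mu)+[A_\mu,A_\nu]$ and retain the quadratic part, which comes from the square of the linear piece and from the cross term $2F_{\mu\nu}(B)[A^\mu,A^\nu]$. Integrating by parts and reordering the two covariant derivatives in the kinetic square via $[\nabla_\mu,\nabla_\nu]\phi=[F_{\mu\nu},\phi]$ (equivalently relation \eqref{rel1-a} in the covariantly flat background) produces three kinds of terms: a longitudinal $(\nabla^B_\mu A^\mu)^2$ term, which is cancelled by the $\Lambda^2$ gauge-fixing term; a transverse term $-\tfrac{\Lambda^2}2\int\tr A^\nu\Delta A_\nu$, which gives the contribution $-\Lambda^2 g_{\alpha\beta}g^{\mu\nu}$ to $p_2$; and a zeroth-derivative term linear in $F$, which gives the remaining $\ad F^{\mu\nu}$ contribution to $p_4$. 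Summing this block with $S_1+\gamma S_3$ then yields the stated normal form.

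The ghost operator $\tilde P_B$ is immediate: it is read off from the part of the Faddeev--Popov term bilinear in $\bar C,C$ at zeroth order in $A$ (the $\ad A^\mu$ piece is an interaction vertex and does not enter $\tilde P_B$). Using $\nabla^B_\mu\nabla_B^\mu=-\Delta$ one finds $\int\tr\bar C(\Lambda^2+\Delta)\nabla^B_\mu\nabla_B^\mu C=-\int\tr\bar C(\Delta^2+\Lambda^2\Delta)C$, so by the definition $S^\Lambda_\gh=-\int\tr\bar C\tilde P_B C$ we get $\tilde P_B=\Delta^2+\Lambda^2\Delta=\Delta^2-\Lambda^2 g_{\alpha\beta}\nabla^\alpha\nabla^\beta$, i.e. $p_{2,\alpha\beta}=-\Lambda^2 g_{\alpha\beta}$ and $p_4=0$, as claimed.

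I expect the main obstacle to lie entirely in the Yang--Mills block, and more precisely in pinning down the coefficient and sign of the linear-in-$F$ term in $p_4$. This is the familiar delicacy of the background-field method: the linear-in-$F$ (``gyromagnetic'') term receives contributions from \emph{two} distinct sources — the cross term $F(B)[A,A]$ and the curvature commutator generated when $\nabla_\mu\nabla^\nu A^\mu$ is rewritten as $\nabla^\nu(\nabla^B_\rho A^\rho)+[{F_\mu}^\nu,A^\mu]$ — and each must be correctly normalized, brought to $\ad$-form using $\tr F_{\mu\nu}[A^\mu,A^\nu]=2\tr F_{\mu\nu}A^\mu A^\nu$, and combined with consistent index placement. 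Keeping these factors of two straight, and checking that the longitudinal $(\nabla^B_\mu A^\mu)^2$ pieces cancel exactly against the gauge fixing, is the one genuinely error-prone step; the rest is either a direct citation of Propositions \ref{prop:S1} and \ref{prop:S3} or the short ghost computation above.
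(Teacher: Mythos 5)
Your strategy coincides with the paper's: the proposition is stated without proof precisely because it is meant to follow by superposing Propositions \ref{prop:S1} and \ref{prop:S3} (the higher-derivative terms of \eqref{eq:babelon} being $S_1+\gamma S_3$, gauge-fixing included) with the standard quadratic expansion of the $\Lambda^2$-scaled gauge-fixed Yang--Mills block, and your reading-off of the ghost operator $\tilde P_B=\Delta^2-\Lambda^2 g_{\alpha\beta}\nabla^\alpha\nabla^\beta$ is complete and correct. The $F$-dependent entries of $p_2$ and $p_4$ indeed come out as the stated sums of the two cited propositions.

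The one step you explicitly defer --- ``pinning down the coefficient and sign of the linear-in-$F$ term in $p_4$'' --- is worth actually carrying out, because it is the only entry not obtained by citation, and completing your sketch does not reproduce the stated formula. The two sources you identify do not partially cancel: the cross term $-\tfrac{\Lambda^2}{2}\int\tr F_{\mu\nu}[A^\mu,A^\nu]$ contributes $-\Lambda^2\ad F^{\mu\nu}$ to $P_B^{\mu\nu}$, and the commutator generated by rewriting $\nabla_\mu\nabla^\nu A^\mu$ as $\nabla^\nu(\nabla_\mu A^\mu)+[{F_\mu}^\nu,A^\mu]$ contributes another $-\Lambda^2\ad F^{\mu\nu}$, while the longitudinal piece cancels against the gauge fixing as you say. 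The Yang--Mills block therefore yields $\Lambda^2\left(\Delta g^{\mu\nu}-2\ad F^{\mu\nu}\right)$ --- the familiar gyromagnetic coefficient $2$ of the background-field method --- so a completed version of your argument gives $-2\Lambda^2\ad F^{\mu\nu}$ in $p_4$ rather than the $-\Lambda^2\ad F^{\mu\nu}$ recorded in the proposition. This discrepancy is invisible downstream: in Corollary \ref{corl:a4-4d} that term enters $a_4(P_B)$ only through $\tr S(p_4)$, and $\tr_N\ad {F^{\mu}}_{\mu}=0$, so neither the heat coefficient nor the counterterm is affected. Still, since you singled this coefficient out as the crux and then left it unresolved, the honest conclusion of your own method is that it should be checked rather than taken from the statement.
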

From this we can compute $a_4(P_B)$ and $a_4(\tilde P_B)$ to arrive at the counterterm for this action. Using Corollary \ref{corl:a4-4d} we obtain
\begin{align*}
a_4(P_B) &= - \frac{1}{(4\pi)^2} \frac{40+36 \gamma+3 \gamma^2}{24} \int \tr F_{\mu\nu} F^{\mu\nu} + \frac{N^2\Lambda^4}{(4 \pi)^2} \Vol(M),\\
a_4(\tilde P_B) &= \frac{1}{(4\pi)^2}  \frac{1}{12} \int \tr F_{\mu\nu} F^{\mu\nu}
+ \frac{N^2 \Lambda^4}{4 (4\pi)^2} \Vol(M).
\end{align*}

\begin{thm}
The divergent part of the one-loop effective action for the above $S$ is of the form
$$
\Gamma_{1,\infty} =\frac{1}{z} %\left( 
\frac{1}{(4\pi)^2} \frac{44+36 \gamma+3 \gamma^2}{48} \int \tr_N F_{\mu\nu} F^{\mu\nu} %- \frac{N^2\Lambda^4}{4(4 \pi)^2} \Vol(M) \right)
.
$$
\end{thm}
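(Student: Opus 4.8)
The plan is to obtain $\Gamma_{1,\infty}$ by feeding the heat coefficients $a_4(P_B)$ and $a_4(\tilde P_B)$ already computed for the quadratic fluctuations of \eqref{eq:babelon} into the zeta-function machinery of Section~\ref{sect:1loop-sa}. Integrating out the bosonic fluctuation $A$ around the background $B$ produces the contribution $\tfrac12 \ln\det(P_B P^{-1})$ to the one-loop effective action, whereas the Grassmann ghosts $\bar C, C$ produce $-\ln\det(\tilde P_B \tilde P^{-1})$; the relative minus sign and the absence of the factor $\tfrac12$ are the sole features distinguishing the fermionic loop from the bosonic one. Regularizing each determinant exactly as in \eqref{eq:eff-action-reg} and reading off the $1/z$ pole from the expansion of $\Gamma(z)$, the divergent part of the total effective action is
\begin{equation*}
\Gamma_{1,\infty} = \frac{1}{z}\left[ -\half\bigl(\zeta(P_B,0)-\zeta(P,0)\bigr) + \bigl(\zeta(\tilde P_B,0)-\zeta(\tilde P,0)\bigr) \right].
\end{equation*}

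First I would note that we are in four dimensions and that $P_B, \tilde P_B$ are higher-Laplacians of the form \eqref{eq:higher-laplacian}, so that Gilkey's theorem (Theorem~\ref{thm:gilkey}) gives $\zeta(\cdot,0)=a_4(\cdot)$; this is precisely the identification underlying Propositions~\ref{prop:eff-action-gauge} and~\ref{prop:eff-action-ghost}. Next I would observe that the two subtractions strip away everything that does not depend on the background curvature: the field-independent volume terms $N^2\Lambda^4 (4\pi)^{-2}\Vol(M)$ in $a_4(P_B)$ and $a_4(\tilde P_B)$ are unchanged at $B=0$, hence cancel against $\zeta(P,0)$ and $\zeta(\tilde P,0)$, while the latter two carry no $F_{\mu\nu}F^{\mu\nu}$-term since $F$ vanishes at $B=0$. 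Thus only the $F^2$-parts of $a_4(P_B)$ and $a_4(\tilde P_B)$ survive, and gauge invariance of $\Gamma_{1,\infty}$ is inherited from that of the heat invariants.

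It then remains to substitute the values computed above via Corollary~\ref{corl:a4-4d},
\begin{align*}
a_4(P_B)\big|_{F^2} &= -\frac{1}{(4\pi)^2}\,\frac{40+36\gamma+3\gamma^2}{24}\int\tr F_{\mu\nu}F^{\mu\nu}, \\
a_4(\tilde P_B)\big|_{F^2} &= \frac{1}{(4\pi)^2}\,\frac{1}{12}\int\tr F_{\mu\nu}F^{\mu\nu},
\end{align*}
and to perform the elementary arithmetic
\begin{equation*}
-\half\left(-\frac{40+36\gamma+3\gamma^2}{24}\right) + \frac{1}{12} = \frac{40+36\gamma+3\gamma^2}{48} + \frac{4}{48} = \frac{44+36\gamma+3\gamma^2}{48},
\end{equation*}
which, multiplied by $(4\pi)^{-2}\int\tr_N F_{\mu\nu}F^{\mu\nu}$ and the overall $1/z$, is exactly the asserted $\Gamma_{1,\infty}$.

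The main obstacle is not any individual computation but the consistent bookkeeping of signs and numerical factors threaded through the whole chain: the relative minus sign and the missing $\tfrac12$ of the ghost loop, the sign generated by the $\Gamma(z)$ pole against the prefactor of \eqref{eq:eff-action-reg}, and the check that the quartic $\Lambda^4$ (cosmological-constant-type) pieces genuinely drop out of the subtracted zeta values so that the residual divergence is purely the gauge-invariant $\int\tr F^2$. Once the value of each $a_4$ is granted by Gilkey's theorem, the statement reduces to the displayed arithmetic.
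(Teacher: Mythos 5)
Your proposal is correct and follows the same route as the paper: the paper's own proof consists precisely of writing $\Gamma_{1,\infty} = \frac{1}{z}\left(-\half ( a_4(P_B) - a_4(P)) + (a_4(\tilde P_B) -a_4(\tilde P))\right)$ and substituting the previously computed heat coefficients, which is exactly your formula with $\zeta(\cdot,0)=a_4(\cdot)$. Your additional remarks on the cancellation of the $\Lambda^4$ volume terms and the sign bookkeeping are sound and merely make explicit what the paper leaves implicit.
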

\proof
In zeta function regularization, we have 
$$
\Gamma_{1,\infty} = \frac{1}{z}\left(-\half ( a_4(P_B) - a_4(P)) + (a_4(\tilde P_B) -a_4(\tilde P))\right)
$$
which we have computed above.
\endproof
This can also be proved using dimensional regularization in the background gauge as in \cite{PS96,PS97} since $P_B$ and $\tilde P_B$ are already of the symmetric form required there.  
Note, however, that this differs from the result obtained for the one-loop effective action computed in \cite{BN80} using dimensional regularization directly on the same Lagrangian (with zero background field): there the above coefficient 44 was found to be 43. We will come back to this discrepancy in the Conclusions.

\subsection{The noncommutative Einstein--Yang--Mills system}
Consider $D_A = i \gamma^\mu \nabla_\mu$ with $\nabla_\mu = \partial_\mu+ A_\mu$ the covariant derivative lifted to the spinor bundle. Note that the gauge field $A_\mu$ and its curvature $F_{\mu\nu} = [\nabla_\mu,\nabla_\nu]$ are skew-hermitian. The Weitzenb\"ock formula gives in the case of a flat manifold:
$$
D_A^2 = - \frac{1}{2} \{ \gamma^\mu, \gamma^\nu\} \nabla_\mu \nabla_\nu - \frac{1}{2} [ \gamma^\mu, \gamma^\nu] \nabla_\mu \nabla_\nu = - \nabla_\mu \nabla^\mu - \frac{1}{2} \gamma^\mu \gamma^\nu F_{\mu\nu}  \equiv \Delta -E .
$$
The heat coefficients of $e^{-tD_A^2}$ are on a flat manifold \cite[Theorem 4.8.16]{Gil84}:
\begin{align*}
a_4(D_A^2) &= \frac{1}{(4\pi)^2} \frac{1}{360}\int \tr \left( 180 E^2+60 {E_{;\mu}}^\mu + 30 F_{\mu\nu}F^{\mu\nu}\right),\\
a_6(D_A^2) &= \frac{1}{(4\pi)^2} \frac{1}{360} \int \tr \bigg\{ 8 F_{\mu \nu;\kappa} F^{\mu\nu;\kappa} + 2 {F_{\mu\nu}}^{;\mu} {F^{\rho\nu}}_{;\rho} + 12 F_{\mu\nu} {F^{\mu\nu;\rho}}_\rho - 12 {F_\mu}^\nu {F_\nu}^\rho {F_\rho}^\mu\\ 
& \qquad \qquad \qquad \qquad \qquad
+ 6 {{{E_{;\mu}}^\mu}_\nu}^\nu 
+ 60 
E {E_{;\mu}}^\mu + 30 E_{;\mu}E^{;\mu} 
+ 60 E^3 + 30 E F_{\mu\nu}F^{\mu\nu}\bigg\}.
\end{align*}
We can rearrange these expressions by partial integration:
\begin{align*}
a_4(D_A^2) &= \frac{1}{(4\pi)^2} \frac{1}{360}\int \tr \left( 180 E^2+ 30 F_{\mu\nu}F^{\mu\nu}\right),\\
a_6(D_A^2) &= \frac{1}{(4\pi)^2} \frac{1}{360} \int \tr \bigg\{ 2 {F_{\mu\nu}}^{;\mu} {F_{\rho\nu}}^{;\rho} + 4 F_{\mu\nu} {F^{\mu\nu;\rho}}_\rho - 12 {F_\mu}^\nu {F_\nu}^\rho {F_\rho}^\mu
%\\ & \qquad \qquad \qquad \qquad \qquad
+ 30 E {E_{;\mu}}^\mu + 60 E^3 . %+ 30 E F_{\mu\nu}F^{\mu\nu}
\bigg\}
\end{align*}

\begin{prop}
\begin{align*}
a_4(D_A^2) &= \frac{1}{8 \pi^2}  \left( -\frac{1}{3} \int \tr_N F_{\mu\nu}F^{\mu\nu}  \right),\\
a_6(D_A^2) &= \frac{1}{8 \pi^2} \left( \frac{2}{15} \int\tr_N {F^{\mu\nu}}_{;\mu} {F^{\rho\nu}}_{;\rho} +  \frac{23}{45} \int \tr_N {F_{\mu}}^\nu{F_{\nu}}^\rho {F_{\rho}}^\mu  \right).
\end{align*}
\end{prop}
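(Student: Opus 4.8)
The plan is to start from the partially integrated forms of $a_4$ and $a_6$ displayed just above the Proposition, carry out the trace over the spinor indices, and then use the second Bianchi identity together with integration by parts to express everything through the two independent invariants
$$
\int_M \tr_N {F^{\mu\nu}}_{;\mu}{F^{\rho\nu}}_{;\rho}
\qquad\text{and}\qquad
\int_M \tr_N {F_\mu}^\nu{F_\nu}^\rho{F_\rho}^\mu .
$$
The only nontrivial input is the endomorphism $E=\tfrac12\gamma^\mu\gamma^\nu F_{\mu\nu}$, whose powers and covariant derivatives I evaluate with the Dirac trace identities $\tr\gamma^\mu\gamma^\nu=4g^{\mu\nu}$, $\tr\gamma^\mu\gamma^\nu\gamma^\rho\gamma^\sigma=4(g^{\mu\nu}g^{\rho\sigma}-g^{\mu\rho}g^{\nu\sigma}+g^{\mu\sigma}g^{\nu\rho})$, and the six--gamma trace. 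Since $F_{\mu\nu}$ is antisymmetric, every symmetric contraction drops out; moreover, in four dimensions the totally antisymmetric ($\gamma_5$, hence $\epsilon$-tensor) part of the six--gamma trace is traceless, so no parity-odd invariant can appear and the cubic sector is forced onto a single invariant.

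For $a_4$ the computation is immediate. The four--gamma identity gives $\tr E^2=-2\,\tr_N F_{\mu\nu}F^{\mu\nu}$, whereas the explicit curvature term $30\,F_{\mu\nu}F^{\mu\nu}$ only picks up the spinor trace factor $4$. Substituting into $a_4=\frac{1}{(4\pi)^2\,360}\int\tr(180\,E^2+30\,F_{\mu\nu}F^{\mu\nu})$ yields $\frac{1}{(4\pi)^2\,360}(-360+120)\int\tr_N F_{\mu\nu}F^{\mu\nu}$, which is exactly $\frac{1}{8\pi^2}\bigl(-\tfrac13\bigr)\int\tr_N F_{\mu\nu}F^{\mu\nu}$.

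For $a_6$ I would treat the five surviving terms in turn. The three purely curvature terms (the divergence-squared term, the term $F_{\mu\nu}{F^{\mu\nu;\rho}}_\rho$, and the cubic ${F_\mu}^\nu{F_\nu}^\rho{F_\rho}^\mu$) each simply acquire the spinor trace factor $4$; the mixed term $30\,E\,{E_{;\mu}}^\mu$ collapses, again via the four--gamma identity, to a multiple of $\tr_N F_{\mu\nu}{F^{\mu\nu;\rho}}_\rho$; and the genuinely new computation is $\tr E^3$, which through the six--gamma trace produces a multiple of $\tr_N{F_\mu}^\nu{F_\nu}^\rho{F_\rho}^\mu$. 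Having reduced the integrand to the three invariants $\int\tr_N{F^{\mu\nu}}_{;\mu}{F^{\rho\nu}}_{;\rho}$, $\int\tr_N F_{\mu\nu}{F^{\mu\nu;\rho}}_\rho$ and $\int\tr_N{F_\mu}^\nu{F_\nu}^\rho{F_\rho}^\mu$, I would then invoke the manipulation already performed in the compatibility check preceding this Proposition --- which, using the second Bianchi identity and commuting covariant derivatives, rewrites $\int\tr_N{F^{\mu\nu}}_{;\mu}{F^{\rho\nu}}_{;\rho}$ as a commutator cubic term plus $-\tfrac12\int\tr_N F^{\mu\nu}{F_{\mu\nu;\rho}}^\rho$ --- to eliminate the $F_{\mu\nu}{F^{\mu\nu;\rho}}_\rho$ invariant in favour of the divergence-squared and the cubic invariant, and finally read off the coefficients.

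The hard part will be the cubic sector, for two reasons. First, the six--gamma trace in $\tr E^3$ involves fifteen index pairings whose contraction against $F_{\mu\nu}F_{\rho\sigma}F_{\alpha\beta}$ must be organized so as to collapse onto the single cubic invariant. Second, and more delicately, because the $F$'s are Lie-algebra valued one must track the \emph{gauge} ordering independently of the spacetime contraction: the two a priori distinct cubic contractions $\tr_N{F_\mu}^\nu{F_\nu}^\rho{F_\rho}^\mu$ and its reverse are related by a sign, which follows from the antisymmetry of $F$ under transposition of its spacetime indices combined with cyclicity of $\tr_N$, but it is easy to misassign this sign when reducing individual terms. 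Getting these relative signs and combinatorial factors right, and matching them consistently against the coefficients generated by the Bianchi reduction of the derivative terms, is the step on which the precise constants $\tfrac{2}{15}$ and $\tfrac{23}{45}$ ultimately hinge.
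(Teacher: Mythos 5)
Your proposal follows the paper's proof essentially verbatim: the same spinor-trace evaluations ($\tr E^2=-2\tr_N F_{\mu\nu}F^{\mu\nu}$, the reduction of $\tr E\,{E_{;\mu}}^\mu$ to a multiple of $\tr_N F_{\mu\nu}{F^{\mu\nu;\rho}}_\rho$, and $\tr E^3$ to the cubic invariant), followed by the Bianchi-identity relation $F_{\mu\nu}{F^{\mu\nu;\rho}}_\rho=-2{F_{\mu\nu}}^{;\mu}{F^{\rho\nu}}_{;\rho}-4{F_\mu}^\nu{F_\nu}^\rho{F_\rho}^\mu$ to eliminate the third invariant. The $a_4$ part is complete and correct; for $a_6$ the only outstanding item is the explicit evaluation $\tr E^3=4\,{F_\mu}^\nu{F_\nu}^\rho{F_\rho}^\mu$ (which the paper obtains by successive Clifford contractions rather than the full fifteen-pairing expansion), after which the coefficients assemble to $\tfrac{2}{15}$ and $\tfrac{23}{45}$ exactly as in the paper.
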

\proof
We compute the two terms in the above expression for $a_4(D_A^2)$, using $\tr \gamma^\mu \gamma^\nu \gamma^\rho \gamma^\sigma = 4 \left( g^{\mu\nu} g^{\rho\sigma} - g^{\mu\rho}g^{\nu \sigma} + g^{\mu \sigma} g^{\nu \rho}\right)$:
\begin{align*}
\tr E^2 &= \frac{1}{4} \tr \gamma^\mu \gamma^\nu \gamma^\rho \gamma^\sigma F_{\mu\nu} F_{\rho\sigma} = -2 \tr_N F_{\mu\nu}F^{\mu\nu},\\
\tr F_{\mu\nu} &= 4 \tr_N F_{\mu\nu}F^{\mu\nu}.
\end{align*}
The coefficient in front of $\int\tr_N F_{\mu\nu}F^{\mu\nu}$ in $a_4$ thus becomes
$$
\frac{1}{(4 \pi)^2} \frac{180\times (-2) + 30 \times 4}{360} = -\frac{2}{3 (4 \pi)^2} .
$$
For $a_6$ we start with the simplest expressions:
\begin{multline*}
\frac{1}{(4\pi)^2} \frac{1}{360} \int \tr \bigg\{ 2 {F_{\mu\nu}}^{;\mu} {F^{\rho\nu}}_{;\rho} + 4 F_{\mu\nu} {F^{\mu\nu;\rho}}_\rho - 12 {F_\mu}^\nu {F_\nu}^\rho {F_\rho}^\mu \bigg\} %=
%\frac{1}{(4\pi)^2} \frac{1}{360} \int \tr_N \bigg\{ 8 {F_{\mu\nu}}^{;\mu} {F_{\rho\nu}}^{;\rho} + 16 F_{\mu\nu} {F^{\mu\nu;\rho}}_\rho - 48 {F_\mu}^\nu {F_\nu}^\rho {F_\rho}^\mu \bigg\}
\\ =
\frac{1}{8\pi^2} \int \tr_N \bigg\{ \frac{1}{90} {F_{\mu\nu}}^{;\mu} {F^{\rho\nu}}_{;\rho} + \frac{1}{45} F_{\mu\nu} {F^{\mu\nu;\rho}}_\rho - \frac{1}{15} {F_\mu}^\nu {F_\nu}^\rho {F_\rho}^\mu \bigg\}.
\end{multline*}
For the term of second order in $E$, we find with the above rule for the trace of four gamma matrices:
$$
\tr E {E_{;\kappa}}^\kappa = \frac{1}{4} \tr \gamma^\mu \gamma^\nu \gamma^\rho \gamma^\sigma F_{\mu\nu} {F_{\rho\sigma;\kappa}}^{\kappa} = \tr_N \left( F_{\mu\nu} {F^{\nu\mu;\kappa}}_{\kappa}  - F_{\mu\nu} {F^{\mu\nu;\kappa}}_{\kappa}  \right) = - 2 \tr_N  F_{\mu\nu} {F^{\mu\nu;\kappa}}_{\kappa}  .
$$
Including the coefficients, this contributes to $a_6$ with
$$
\frac{1}{(4\pi)^2} \frac{1}{360} \int \tr 30 E {E_{;\mu}}^\mu   = \frac{1}{8 \pi^2} \left(-\frac{1}{12} \tr_N F_{\mu\nu} {F^{\mu\nu;\kappa}}_{\kappa}   \right).
$$
Finally, we compute
\begin{align*}
\tr E^3 &=\frac{1}{8} \tr \gamma^{\mu_1}\gamma^{\mu_2}\gamma^{\mu_3}\gamma^{\mu_4}\gamma^{\mu_5}\gamma^{\mu_6} F_{\mu_1\mu_2}F_{\mu_3\mu_4}F_{\mu_5\mu_6} \\
&= \frac{1}{8} \bigg( 
\tr \gamma^{\mu_2}\gamma^{\mu_3}\gamma^{\mu_4}\gamma^{\mu_5} F_{\mu\mu_2}F_{\mu_3\mu_4}{F_{\mu_5}}^\mu
-\tr \gamma^{\mu_1}\gamma^{\mu_3}\gamma^{\mu_4}\gamma^{\mu_5} F_{\mu_1\mu}F_{\mu_3\mu_4}{F_{\mu_5}}^\mu\\
&\quad
+\tr \gamma^{\mu_1}\gamma^{\mu_2}\gamma^{\mu_4}\gamma^{\mu_5} F_{\mu_1\mu_2}F_{\mu\mu_4}{F_{\mu_5}}^\mu
-\tr \gamma^{\mu_1}\gamma^{\mu_2}\gamma^{\mu_3}\gamma^{\mu_5} F_{\mu_1\mu_2}F_{\mu_3\mu}{F_{\mu_5}}^\mu\bigg)\\
&= 4 {F_\mu}^\nu {F_\nu}^\rho {F_\rho}^\mu
\end{align*}
applying once more the above rule for the trace of four gamma matrices in going to the last line. Thus, this contributes to $a_6$ with
$$ 
\frac{1}{(4\pi)^2} \frac{1}{360} \int \tr (60 E^3) = \frac{1}{8 \pi^2} \frac{1}{3}\int \tr  {F_\mu}^\nu {F_\nu}^\rho {F_\rho}^\mu.
$$
Gathering these expressions we find
$$
a_6 (D_A^2) = \frac{1}{8 \pi^2}  \bigg\{ \frac{1}{90} {F_{\mu\nu}}^{;\mu} {F^{\rho\nu}}_{;\rho} - \frac{11}{180} F_{\mu\nu} {F^{\mu\nu;\rho}}_\rho + \frac{4}{15} {F_\mu}^\nu {F_\nu}^\rho {F_\rho}^\mu \bigg\}.
$$
This can be reduced to the displayed form by the relation
$$
F_{\mu\nu} {F^{\mu\nu;\rho}}_\rho = - 2 {F_{\mu\nu}}^{;\mu} {F^{\rho\nu}}_{;\rho} - 4 {F_\mu}^\nu {F_\nu}^\rho {F_\rho}^\mu
$$
which can be easily established using the Bianchi identity.
\endproof

\begin{thm}
The action $S^\Lambda[A] = f_0 a_4(x,D_A^2) + \Lambda^{-2} f_{-2} a_6 (x,D_A^2)$ equals
\begin{multline*}
S^\Lambda[A] 
= -\frac{4 f_{-2} \Lambda^{-2}}{15(8 \pi^2)} \bigg\{
\frac{-5 f_0 \Lambda^2}{f_{-2}} \left(- \frac{1}{4}\int \tr_N F_{\mu\nu}F^{\mu\nu} \right)
-\frac{1}{2} \int\tr_N {F^{\mu\nu}}_{;\mu} {F^{\rho\nu}}_{;\rho}
\\ 
+\frac{23}{4}\left(- \frac{1}{3} \int \tr_N {F_{\mu}}^\nu{F_{\nu}}^\rho {F_{\rho}}^\mu \right)
\bigg\}.
\end{multline*}
\end{thm}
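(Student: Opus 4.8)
The proof is a direct substitution; the analytic content is already supplied by the preceding Proposition, in which $a_4(D_A^2)$ and $a_6(D_A^2)$ are written in closed form in terms of $\int\tr_N F_{\mu\nu}F^{\mu\nu}$, $\int\tr_N {F^{\mu\nu}}_{;\mu}{F^{\rho\nu}}_{;\rho}$ and $\int\tr_N {F_\mu}^\nu{F_\nu}^\rho{F_\rho}^\mu$. The plan is to insert those two expressions into the definition $S^\Lambda[A] = f_0\,a_4(D_A^2) + \Lambda^{-2}f_{-2}\,a_6(D_A^2)$ and then reorganize the resulting three curvature monomials into the displayed form.

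Carrying this out, the $f_0 a_4$ piece yields the Yang--Mills term with coefficient $-f_0/(24\pi^2)$, while $\Lambda^{-2}f_{-2}a_6$ yields the quadratic-derivative and cubic terms with coefficients $\tfrac{2f_{-2}\Lambda^{-2}}{15(8\pi^2)}$ and $\tfrac{23f_{-2}\Lambda^{-2}}{45(8\pi^2)}$. The second step is to factor out the common prefactor $-\tfrac{4f_{-2}\Lambda^{-2}}{15(8\pi^2)}$, chosen so that the three bracketed terms reproduce the shape $S_1 + \gamma S_3$ of the higher-derivative action from the previous subsection. Dividing the cubic coefficient by this prefactor returns exactly $\tfrac{23}{4}\cdot(-\tfrac{1}{3})$, which fixes $\gamma = \tfrac{23}{4}$ and matches the normalization of $S_3$; the quadratic-derivative coefficient reduces in the same way to $-\tfrac{1}{2}$, in agreement with $S_1$.

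The only step needing genuine care is the Yang--Mills term, whose coefficient $-f_0/(24\pi^2)$ contains no factor of $f_{-2}\Lambda^{-2}$. To exhibit it inside the common prefactor I would multiply and divide by $f_{-2}\Lambda^{-2}$; this is precisely what generates the displayed factor $-5f_0\Lambda^2/f_{-2}$ multiplying $\bigl(-\tfrac{1}{4}\int\tr_N F_{\mu\nu}F^{\mu\nu}\bigr)$. A one-line check, $-\tfrac{4f_{-2}\Lambda^{-2}}{15(8\pi^2)}\cdot\bigl(-\tfrac{5f_0\Lambda^2}{f_{-2}}\bigr)\cdot\bigl(-\tfrac{1}{4}\bigr) = -\tfrac{f_0}{24\pi^2}$, confirms that a single prefactor is consistent with all three terms at once. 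The main obstacle is therefore purely clerical: keeping track of the three independent sign flips and the rational factors $4/15$, $5$ and $23/4$ so that the common prefactor emerges cleanly. No further input is required.
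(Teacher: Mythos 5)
Your proposal is correct and is essentially the argument the paper intends: the theorem is a direct substitution of the closed forms of $a_4(D_A^2)$ and $a_6(D_A^2)$ from the preceding Proposition, followed by factoring out $-\tfrac{4 f_{-2}\Lambda^{-2}}{15(8\pi^2)}$ so that the bracket matches the first line of the higher-derivative action \eqref{eq:babelon} with $\gamma = \tfrac{23}{4}$. Your arithmetic for all three coefficients checks out.
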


\begin{rem}
Let us check this result by comparing the free part with the coefficients $c_k$ computed in Theorem \ref{thm:free-sa}. The proof of Proposition \ref{prop:S1} yields for the free part of $S[A]$ ({\it i.e.} background field $B=0$):
$$
\frac{4f_0}{3 (8 \pi^2)} \left( - \frac{1}{2} \int \tr_N A^\mu ( \Delta g_{\mu\nu} +\partial_\mu \partial_\nu) A^\nu \right)
-\frac{4 f_{-2} \Lambda^{-2}}{15(8 \pi^2)} \left(-\frac{1}{2}\int \tr_N A^\mu (\Delta^2 g_{\mu\nu} +\Delta \partial_\mu \partial_\nu) A^\nu\right)
$$
which is in concordance with $c_0 = 1/24 \pi^2, c_1=1/120 \pi^2$ in 
$$
- c_0 f_0  \int \tr_N \hat F^{\mu\nu} \hat F_{\mu\nu} + c_1 f_{-2} \Lambda^{-2} \int \tr_N \hat F^{\mu\nu} \Delta \hat F_{\mu\nu}
$$
as appearing in Theorem \ref{thm:free-sa}.
\end{rem}

Thus, the action $f_0 a_4 + \Lambda^{-2} f_{-2} a_6$ is of the form of the action appearing on the first line of Equation \eqref{eq:babelon} from which we immediately conclude
\begin{corl}
The divergent part of the one-loop effective action for $S^\Lambda[A]$ (with $f_{2-m}=0$ for all $m>2$) is
$$
\Gamma_{1,\infty} =\frac{1}{z} %\left( 
\frac{1}{(4\pi)^2} \frac{5603}{768} \int \tr_N F_{\mu\nu} F^{\mu\nu}. %- \frac{25 N^2\Lambda^4 f_0^2}{4(4 \pi)^2 f_{-2}^2} \Vol(M) \right)
$$
\end{corl}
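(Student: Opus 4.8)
The plan is to recognise the truncated action $S^\Lambda[A]$ as a special instance of the higher-derivative Yang--Mills Lagrangian whose first line appears in Equation \eqref{eq:babelon}, and then to import the one-loop divergence already computed there. Indeed, the Theorem immediately preceding this Corollary rewrites $S^\Lambda[A]=f_0 a_4 + \Lambda^{-2}f_{-2}a_6$ entirely in terms of the three gauge invariants $\int\tr_N F_{\mu\nu}F^{\mu\nu}$, $\int\tr_N {F^{\mu\nu}}_{;\mu}{F^{\rho\nu}}_{;\rho}$ and $\int\tr_N {F_\mu}^\nu {F_\nu}^\rho {F_\rho}^\mu$, which is exactly the structure of the linear combination $-\tfrac{\Lambda^2}{4}\int\tr_N F_{\mu\nu}F^{\mu\nu}-\half\int\tr_N {F^{\mu\nu}}_{;\mu}{F_{\rho\nu}}^{;\rho}-\tfrac{\gamma}{3}\int\tr_N {F_\mu}^\nu {F_\nu}^\rho {F_\rho}^\mu$ on the first line of \eqref{eq:babelon}. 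Hence the whole computation should reduce to reading off the correct value of $\gamma$ and substituting it into the formula $\Gamma_{1,\infty}=\tfrac1z\tfrac{1}{(4\pi)^2}\tfrac{44+36\gamma+3\gamma^2}{48}\int\tr_N F_{\mu\nu}F^{\mu\nu}$ already established for \eqref{eq:babelon}.

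First I would factor out the overall constant $C=-\tfrac{4f_{-2}\Lambda^{-2}}{15(8\pi^2)}$ from the preceding Theorem's expression, leaving the cubic, derivative and quadratic invariants with coefficients $-\tfrac{23}{12}$, $-\half$ and $\tfrac{5f_0\Lambda^2}{4f_{-2}}$ respectively. Comparing with the coefficients $-\tfrac{\gamma}{3}$, $-\half$, $-\tfrac{\Lambda'^2}{4}$ in \eqref{eq:babelon}, the derivative terms already agree, so that $S^\Lambda[A]=C\cdot(\text{first line of \eqref{eq:babelon}})$ precisely when $-\tfrac{\gamma}{3}=-\tfrac{23}{12}$, i.e. $\gamma=\tfrac{23}{4}$, while the quadratic term merely fixes the effective mass $\Lambda'^2=-5f_0\Lambda^2/f_{-2}$. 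I would also record that for $k=2$ the gauge-fixing and ghost weight $\varphi_\Lambda(\Delta_B)=f_0 c_0-\Lambda^{-2}f_{-2}c_1\Delta_B$ is, after the same rescaling, of the form $\Lambda'^2+\Delta_B$ demanded by \eqref{eq:babelon}, so the gauge-fixing and Faddeev--Popov sectors match as well and $P_B,\tilde P_B$ are indeed obtained from the operators of \eqref{eq:babelon} by multiplication with the single constant $C$.

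The step that needs care --- and is the only content beyond bookkeeping --- is checking that neither the overall factor $C$ nor the value of the mass $\Lambda'^2$ affects $\Gamma_{1,\infty}$. For the scaling this follows from the scale invariance of the relevant heat coefficient, $\zeta(CP_B,0)=a_4(P_B)=\zeta(P_B,0)$ for $C\neq0$, so rescaling $P_B$ and $\tilde P_B$ by the common constant $C$ leaves the differences $a_4(P_B)-a_4(P)$ and $a_4(\tilde P_B)-a_4(\tilde P)$ untouched. For the mass, the $\Lambda'^2$-dependent pieces of $p_2$ and $p_4$ only feed the vacuum $\Lambda'^4\Vol(M)$ term (which cancels against the $B=0$ operator) and cross terms proportional to $\tr_N \ad F^\mu_{\phantom\mu\mu}$, which vanish by the antisymmetry of $F$; thus the coefficient of $\int\tr_N F_{\mu\nu}F^{\mu\nu}$ is genuinely the $\Lambda$-independent number $\tfrac{44+36\gamma+3\gamma^2}{48}$, exactly as in the Theorem for \eqref{eq:babelon}.

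Finally I would substitute $\gamma=\tfrac{23}{4}$: since $36\cdot\tfrac{23}{4}=207$ and $3\cdot\big(\tfrac{23}{4}\big)^2=\tfrac{1587}{16}$, one obtains $44+207+\tfrac{1587}{16}=\tfrac{5603}{16}$, and dividing by $48$ gives $\tfrac{5603}{768}$. This yields $\Gamma_{1,\infty}=\tfrac1z\tfrac{1}{(4\pi)^2}\tfrac{5603}{768}\int\tr_N F_{\mu\nu}F^{\mu\nu}$, as asserted. I expect the verification of the trace/antisymmetry vanishing in the third paragraph to be the main (though still routine) obstacle, everything else being the matching of coefficients and an arithmetic substitution.
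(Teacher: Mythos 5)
Your proposal is correct and follows essentially the same route as the paper: the preceding Theorem exhibits $f_0a_4+\Lambda^{-2}f_{-2}a_6$ as an overall constant times the first line of Equation \eqref{eq:babelon} with $\gamma=23/4$ and a rescaled mass, and the counterterm $\tfrac{44+36\gamma+3\gamma^2}{48}=\tfrac{5603}{768}$ is then read off from the Theorem for that higher-derivative action. Your additional checks (scale invariance of $\zeta(P_B,0)=a_4(P_B)$ under $P_B\mapsto CP_B$, and the vanishing of the mass-dependent cross terms by antisymmetry of $F$) are exactly the justification the paper leaves implicit in its ``immediately conclude.''
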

Note that this counterterm could easily be subtracted from the spectral action and absorbed through a redefinition of $f_0$% and $f_4$
. However, as before the resulting coefficient $5603/512$ for the coupling constant appears to have little to do with the usual counterterm (and the $\beta$ function) for Yang--Mills theory. It is time to comment on these discrepancies.

\section{Conclusions}

We have established renormalizability of the asymptotically expanded spectral action for the Yang--Mills system on a flat background manifold. By naive power counting we found that this higher-derivative field theory is superrenormalizable. The only divergent Feynman graphs appear at one loop and give rise to a gauge invariant counterterm. We have computed the form of this counterterm using heat invariants in a background gauge field and zeta function regularization. The counterterm can be absorbed in the spectral action by a redefinition of $f(0)$. % and the fourth moment of the function $f$ appearing in the spectral action. 
This gives rise to a non-perturbative $\beta$ function for the coupling $f(0)$.% and $f_4$.

Let us now comment on the explicit computations done in the previous section for the expansion of the spectral action truncated at 6'th order, {\it i.e.} for the action $f_0 a_4 + \Lambda^{-2}f_{-2} a_6$. It appears that the counterterms are not in agreement with physics, or even with other results in the literature for the same higher-derivative gauge theory \cite{BN80}.
Such differences between regularization schemes were reported in \cite{MRR95a} also for higher-order derivative theories with a Lagrangian of the form 
$$
\tr F_{\mu\nu} F^{\mu\nu} + \Lambda^{-4}  \tr F_{\mu\nu} \Delta^2 F^{\mu\nu},
$$
using higher-derivative Pauli-Villars regularization and comparing with the usual one-loop $\beta$ function (no higher-derivatives) for Yang--Mills theory.
Since from a physical point of view it is absolutely crucial that the renormalized quantities (eg. $\beta$ functions) do not depend on the renormalization scheme that is exploited, this is a no-go result. However, for the case considered in {\it loc.~cit.~} this discrepancy has been resolved in \cite{MRR95b} using dimensional regularization in combination with higher-derivative regulators, giving the correct $\beta$ function for Yang--Mills theory at one loop. 

These last results give hope for an explicit computation of the one-loop effective action for the asymptotically expanded spectral action, using dimensional regularization directly on the Lagrangian, rather than zeta functions to regularize the functional determinant in a background field gauge. This would allow for a comparison between the renormalized asymptotically expanded spectral action and the renormalization of Yang--Mills theory. 
Such a computation is part of future work and will appear elsewhere, since it lies outside of the scope of the present paper, whose aim was to rigorously establish renormalizability of the asymptotically expanded spectral action as claimed in \cite{Sui11b}. 

\section*{Acknowledgements}
The author would like to thank Olivier Babelon, Dirk Kreimer and Matilde Marcolli for useful correspondence and discussions. We gratefully acknowledge the anonymous referee for helpful comments. 
Caltech is acknowledged for hospitality and financial support during a visit in April 2011. The ESF is thanked for financial support under the program `Interactions of Low-Dimensional Topology and Geometry with Mathematical Physics (ITGP)'. NWO is acknowledged for support under VENI-project 639.031.827.

\appendix

\section{Heat expansion for higher order Laplacians}
We recall a result from \cite{Gil80}, which is crucial in the above. Gilkey studied Laplacians of higher order on a vector bundle on a Riemannian manifold $(M,g)$. Given a connection $\nabla$ on this vector bundle, the Laplacian is $\Delta = -g^{\mu\nu} \nabla_\mu \nabla_\nu$. Generalizing this to higher orders, Gilkey considered differential operators on a vector bundle  of the form
\begin{equation}
\label{eq:higher-laplacian}
P = \Delta^k  + p_{2,\alpha\beta} \nabla_{\alpha}\nabla_{\beta} \Delta^{k-2}
+ (-1)^k \left( p_{3,\alpha_1 \cdots \alpha_{2k-3}} \nabla_{\alpha_1} \cdots \nabla_{\alpha_{2k-3}} + \cdots + p_{2k}  \right)
\end{equation}
with $k \geq 2$ and $p_{2,\alpha\beta} = p_{2,\beta \alpha}$ is symmetric. The remaining endomorphisms $p_{l}$ ($l>2$) of the vector bundle need not be symmetric.

\begin{thm}[Gilkey \cite{Gil80}]
\label{thm:gilkey}
The heat kernel of an operator of the form \eqref{eq:higher-laplacian} satisfies
 asymptotically (as $t \to 0$):
$$
\tr_{L^2} e^{-t P} \sim \sum_{n \geq 0} t^{(m-n)/{2k}} a_n(P).
$$
The first three coefficients are given on a %flat 
$m$-dimensional Riemannian manifold by
\begin{align*}
a_0(P) &= \frac{1}{k} (4\pi)^{-m/2} \frac{\Gamma(m/2k)}{\Gamma(m/2)}\int_M \tr(I)  \dvol(x), \\
a_2(P) &= \frac{1}{k} (4\pi)^{-m/2} \frac{ \Gamma((m-2)/2k) }{\Gamma((m-2)/2)}
\int_M \tr \left(-\frac{1}{6} R_{\mu\nu\mu\nu} I + \frac{1}{m k}p_{2,\mu\nu}g^{\mu\nu} \right) \dvol(x),\\
a_4(P) &= \frac{1}{k} (4\pi)^{-m/2} \frac{ \Gamma((m-4)/2k) }{\Gamma((m-4)/2)}
\int_M \tr \frac{1}{360(m-2)(m+2)} \\
& \quad \times \bigg(
5(m^2-4) R^{\mu\nu}_{\phantom{\mu\nu}\mu\nu}R^{\rho\sigma}_{\phantom{\rho\sigma}\rho\sigma} - 2(m^2-4) R^\mu_{\phantom\mu\nu\mu\rho} R_\sigma^{\phantom\sigma\nu\sigma\rho}
+30(m^2-4) F_{\mu\nu}F^{\mu\nu}\\
& \quad - \frac{60(m+2)}{k} R^{\mu\nu}_{\phantom{\mu\nu}\mu\nu}p_{2,\rho\sigma}g^{\rho\sigma} + \frac{120(m+2)}{k} R_{\mu}^{\phantom\mu\nu\mu\sigma}p_{2,\nu\sigma}+  \frac{180m + 360(k-2)}{mk^2} {p_{2,\mu}}^\mu {p_{2,\nu}}^\nu\\
&\quad  + \frac{360m + 720(k-2)}{mk^2} {p_{2,\mu}}^\nu {p_{2,\nu}}^\mu%\left(
%+ g^{\mu\rho}g^{\nu\sigma}+g^{\mu\sigma}g^{\nu\rho}\right)
-\frac{720 (m+2)}{k S(\delta^{k-2})} S(p_4)
\bigg) \dvol(x)
\end{align*}
where $S$ is totally symmetric contraction and $\delta$ is the Kronecker delta.  \end{thm}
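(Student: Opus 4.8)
Since the statement is Gilkey's, quoted from \cite{Gil80}, I only indicate the strategy by which such a heat expansion is established. The operator $P$ of \eqref{eq:higher-laplacian} has leading symbol $\sigma_L(P)(x,\xi)=(g^{\mu\nu}\xi_\mu\xi_\nu)^k=|\xi|^{2k}$, which is positive for $\xi\neq0$; thus $P$ is elliptic of order $2k$, and for $\lambda$ in a sector of $\C$ avoiding the positive real axis the resolvent $(P-\lambda)^{-1}$ exists with the decay needed below. The plan is to build a parametrix for this resolvent by symbol calculus, represent $e^{-tP}$ by a contour integral, and extract the $a_n(P)$ from a scaling argument.

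First I would construct the resolvent symbol. Writing $\sigma(P)=\sum_{j=0}^{2k}\sigma_{2k-j}$ with $\sigma_{2k}=|\xi|^{2k}$, one seeks $r(x,\xi,\lambda)\sim\sum_{j\geq0}r_{-2k-j}$ for $(P-\lambda)^{-1}$, where $\lambda$ is assigned the weight $2k$ so that $r_{-2k}=(|\xi|^{2k}-\lambda)^{-1}$ is homogeneous of degree $-2k$. Imposing $\sigma((P-\lambda)\circ R)=1$ via the composition rule $\sigma(P\circ R)=\sum_\alpha\frac{(-i)^{|\alpha|}}{\alpha!}\,\partial_\xi^\alpha\sigma(P)\,\partial_x^\alpha\sigma(R)$ gives a recursion determining each $r_{-2k-j}$ as a finite sum of terms polynomial in $\xi$ and in the jets of $g$, of the bundle curvature $F_{\mu\nu}=[\nabla_\mu,\nabla_\nu]$, and of the endomorphisms $p_l$, multiplied by powers of $r_{-2k}$. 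That the resulting parametrix reproduces $e^{-tP}$ up to terms smaller in $t$ is the standard consequence of ellipticity.

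Next I would pass to the heat kernel through $e^{-tP}=\frac{1}{2\pi i}\int_\Gamma e^{-t\lambda}(P-\lambda)^{-1}\,d\lambda$, with $\Gamma$ encircling the spectrum, so that on the diagonal
$$
K(t,x,x)=\frac{1}{(2\pi)^m}\int_{\R^m}\Big(\frac{1}{2\pi i}\int_\Gamma e^{-t\lambda}\,r(x,\xi,\lambda)\,d\lambda\Big)\,d\xi.
$$
The rescalings $\xi\mapsto t^{-1/2k}\xi$ and $\lambda\mapsto t^{-1}\lambda$ convert the joint homogeneity of $r_{-2k-j}$ into a factor $t^{(j-m)/2k}$, so that $a_n(P)$ is the fibrewise trace of the $\xi$-- and $\lambda$--integral of $r_{-2k-n}$, with the leading term $a_0$ carrying the singularity $t^{-m/2k}$ and the odd coefficients vanishing by parity in $\xi$ on a closed manifold. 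The $\lambda$--integral of a power $r_{-2k}^{b}$ is an elementary residue, and the remaining radial integrals are of the type $\int_0^\infty s^{a}(s^{2k}+1)^{-b}\,ds$, evaluated by the Beta/Mellin formula; these are precisely what produce the ratios $\Gamma((m-n)/2k)/\Gamma((m-n)/2)$ in the statement, the feature distinguishing the higher-order case $k>1$ from the classical $k=1$ one.

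Rather than pushing the symbol recursion all the way to $j=4$ by hand, I would finish using invariance theory. Each $a_n(P)$ integrates a local invariant of weight $n$ in $R_{\mu\nu\rho\sigma}$, $F_{\mu\nu}$, the $p_l$ and their covariant derivatives; for $n=4$ the complete basis is exactly the list displayed (the two quadratic Riemann contractions, $F_{\mu\nu}F^{\mu\nu}$, the two $R$--$p_2$ couplings, the two self-contractions of $p_2$, and $S(p_4)$), which reduces the task to fixing finitely many universal constants depending only on $m$ and $k$. I would pin these down by evaluating both sides on test configurations — a flat torus with covariantly constant $p_l$, isolating the $p$--dependence through the explicit $\xi$--integrals above — and by functoriality under direct sums and Riemannian products together with comparison to the second-order Laplace-type case recorded in \cite[Theorem 4.8.16]{Gil84}. \emph{The main obstacle} is precisely this determination of the $a_4$ coefficients: the weight-$4$ invariant basis is large and the order-$4$ symbol recursion is combinatorially heavy, the genuinely new bookkeeping relative to the second-order theory being the $k$-dependent Gamma-function normalisations and the totally symmetric contraction $S(p_4)$ that the leading symbol $|\xi|^{2k}$ forces upon the top-order endomorphism.
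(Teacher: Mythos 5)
The paper does not prove this statement at all: it is imported verbatim from Gilkey's 1980 paper (the appendix opens with ``We recall a result from \cite{Gil80}''), so there is no in-paper proof to compare against. Your outline is nonetheless a faithful account of how the result is actually established in \cite{Gil80} and in the standard pseudodifferential treatment: ellipticity of the leading symbol $|\xi|^{2k}$, a resolvent parametrix with $\lambda$ weighted by $2k$, the contour-integral representation of $e^{-tP}$, the rescaling that produces the exponents $(n-m)/2k$, the Beta/Mellin integrals responsible for the ratios $\Gamma((m-n)/2k)/\Gamma((m-n)/2)$, and invariance theory plus evaluation on special configurations to fix the universal constants. Two caveats. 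First, the part of the theorem that the paper actually uses downstream — the explicit numerical coefficients in $a_4$, including the $k$-dependence of the $p_2$ self-contractions and the normalisation $720(m+2)/(k\,S(\delta^{k-2}))$ of the $S(p_4)$ term — is exactly the part your sketch defers; you correctly identify it as the hard step but do not carry it out, so as written this is a proof strategy rather than a proof. Second, your parenthetical claim that the odd coefficients vanish ``by parity in $\xi$'' is not automatic here: unlike the Laplace-type case, the operator \eqref{eq:higher-laplacian} contains terms such as $p_{3,\alpha_1\cdots\alpha_{2k-3}}\nabla^{\alpha_1}\cdots\nabla^{\alpha_{2k-3}}$ whose symbols are of odd degree in $\xi$, so the parity argument needs more care (the quoted theorem accordingly makes no claim about the odd $a_n$).
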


\begin{corl}
\label{corl:a4-4d}
Let $M$ be a 4-dimensional flat Riemannian manifold ($R_{\mu\nu\rho\sigma}=0$) and suppose that $P_B$ is an operator of the form \eqref{eq:higher-laplacian} with 
\begin{align*}
p_{2,\alpha\beta}^{\mu\nu} &= \a F^{\mu\nu} g_{\alpha\beta}
+ \b ( {F^\mu}_\beta g^\nu_\alpha+  {F^\mu}_\alpha g^\nu_\beta )
+\c( {F^\nu}_\beta g^\mu_\alpha+ {F^\nu}_\alpha g^\mu_\beta ).
\end{align*}
Then
\begin{multline*}
a_4(P) = \frac{1}{(4 \pi)^2}\left( \frac{4}{12}  + \frac{1}{2k} ((\a+\b)(-\a+\c)+\b \c) \right)  \int \tr_N  F_{\mu\nu} F^{\mu\nu} \\- \frac{1}{(4 \pi)^2} \frac{1}{k S(\delta^{k-2})} \int \tr S(p_4).
\end{multline*}
\end{corl}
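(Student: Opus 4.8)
The plan is to specialize Gilkey's general expression for $a_4(P)$ in Theorem~\ref{thm:gilkey} to $m=4$, a flat metric, and the explicit $p_2$ of the statement. First I would set $R_{\mu\nu\rho\sigma}=0$; this kills the four terms in Gilkey's parenthesis carrying a Riemann tensor (the two purely gravitational contractions and the two mixed $R\,p_2$ terms), leaving only the $F_{\mu\nu}F^{\mu\nu}$ term, the two $p_2$-contractions, and the $p_4$ term.

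The prefactor must be handled with care, since at $m=4$ both $\Gamma(\tfrac{m-4}{2k})$ and $\Gamma(\tfrac{m-4}{2})$ have simple poles. Using $\Gamma(z)\sim 1/z$ near $z=0$ one finds $\Gamma(\tfrac{m-4}{2k})/\Gamma(\tfrac{m-4}{2})\to k$ as $m\to4$, so the factor $\tfrac1k(4\pi)^{-m/2}\Gamma(\tfrac{m-4}{2k})/\Gamma(\tfrac{m-4}{2})$ collapses to $(4\pi)^{-2}$. Evaluating the remaining numerical factors at $m=4$ (so that $\tfrac1{360(m-2)(m+2)}=\tfrac1{4320}$, $30(m^2-4)=360$, $\tfrac{180m+360(k-2)}{mk^2}=\tfrac{90}{k}$, $\tfrac{360m+720(k-2)}{mk^2}=\tfrac{180}{k}$, and $\tfrac{720(m+2)}{k}=\tfrac{4320}{k}$) produces the coefficients $\tfrac1{12}$, $\tfrac1{48k}$, $\tfrac1{24k}$, and $-\tfrac{1}{kS(\delta^{k-2})}$ in front of the four surviving structures.

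For the curvature term I would observe that the relevant bundle is $V=T^*M\otimes\ad$ and that on a flat manifold its connection curvature reduces to $1\otimes\ad F_{\mu\nu}$; hence $\tr_V(F_{\mu\nu}F^{\mu\nu})=(\dim T^*M)\,\tr_N F_{\mu\nu}F^{\mu\nu}=4\,\tr_N F_{\mu\nu}F^{\mu\nu}$, which turns $\tfrac1{12}$ into the stated $\tfrac4{12}$. The heart of the proof is the two $p_2$-traces. Substituting the explicit $p_{2,\alpha\beta}^{\mu\nu}$, the single trace collapses at once: ${p_{2,\alpha}}^{\alpha}=(4\a+2\b-2\c)\,\ad F^{\mu\nu}$, whence $\tr({p_{2,\alpha}}^{\alpha}{p_{2,\beta}}^{\beta})=-(4\a+2\b-2\c)^2\,\tr_N F_{\mu\nu}F^{\mu\nu}$ using $F^{\mu}_{\ \nu}F^{\nu}_{\ \mu}=-F^{\mu\nu}F_{\mu\nu}$. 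The cross-trace $\tr({p_{2,\alpha}}^{\beta}{p_{2,\beta}}^{\alpha})$ is the laborious part: expanding the product yields on the order of twenty-five terms, each reduced by contracting the Kronecker deltas and metrics and then invoking antisymmetry of $F$, with several terms vanishing because they produce $\sum_\mu F^{\mu}_{\ \mu}=0$. Collecting all contributions gives a coefficient $-4\a^2-4\a\b+4\a\c+2\b^2+2\c^2+20\b\c$ times $\tr_N F_{\mu\nu}F^{\mu\nu}$; forming $\tfrac1{48k}\tr({p_{2,\alpha}}^{\alpha}{p_{2,\beta}}^{\beta})+\tfrac1{24k}\tr({p_{2,\alpha}}^{\beta}{p_{2,\beta}}^{\alpha})$ then produces exactly $\tfrac1{2k}\big((\a+\b)(-\a+\c)+\b\c\big)$. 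The $p_4$ term is simply carried along as $-\tfrac{1}{kS(\delta^{k-2})}\tr S(p_4)$.

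I expect the main obstacle to be the bookkeeping of this cross-trace: keeping the spacetime ($T^*M$) indices rigidly separate from the internal adjoint indices throughout, correctly distinguishing contractions that produce a trace of an antisymmetric object (hence zero) from those yielding a genuine $\tr_N F_{\mu\nu}F^{\mu\nu}$, and tracking all the signs generated by raising and lowering with the metric. A secondary subtlety is that the $\Gamma$-quotient must be evaluated as a limit $m\to4$ \emph{before} setting $m=4$ in the integrand, since a naive substitution gives the indeterminate $0/0$.
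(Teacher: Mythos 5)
Your proposal is correct and follows essentially the same route as the paper: specialize Gilkey's formula to $m=4$ with vanishing Riemann tensor, evaluate the gamma-function quotient as a limit (giving the factor $k$ that cancels the $1/k$ prefactor), and reduce the two $p_2$-traces to the stated quadratic form in $\a,\b,\c$ — your intermediate value for $\tr({p_{2,\alpha}}^{\beta}{p_{2,\beta}}^{\alpha})$ agrees with what the paper obtains. The only cosmetic difference is that the paper exploits the $\alpha\leftrightarrow\beta$ symmetry to compute the cross-trace via three contractions of a single factor of $p_2$ rather than expanding the full product term by term.
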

\proof
First, the quotient of the two gamma functions gives in the limit $m\to 4$ a factor of $k$. The coefficient of $S(p_4)$ is then easily found. The terms involving $p_2$ are
$$
\frac{1}{(4 \pi)^2}\frac{1}{48 k}\int \tr \left( {p_{2,\alpha}}^\alpha {p_{2,\beta}}^\beta + 2 {p_{2,\alpha}}^\beta {p_{2,\beta}}^\alpha  \right)
$$
From the above form of $p_2$ one readily finds
$
{p^{\mu\nu}_{2,\alpha}}^\alpha = (4 \a +2\b -2\c) F^{\mu\nu}
$
and by symmetry in $\alpha$ and $\beta$ we also find
$$
\tr {p_{2,\alpha}}^\beta {p_{2,\beta}}^\alpha = \a p_{2,\alpha\beta}^{\mu\nu} g^{\alpha\beta} F_{\nu\mu} 
+ 2 \b p_{2,\alpha\beta}^{\mu\nu} g^\beta_\mu {F_\nu}^\alpha 
+ 2 \c p_{2,\alpha\beta}^{\mu\nu} g^\beta_\nu {F_\mu}^\alpha.
$$
For each term we compute
\begin{align*}
p_{2,\alpha\beta}^{\mu\nu} g^{\alpha\beta} &= (4 \a + 2\b -2 \c) F^{\mu\nu} ,\\
p_{2,\alpha\beta}^{\mu\nu} g^\beta_\mu &= (\a-\b-5\c) {F_\alpha }^\nu ,\\
p_{2,\alpha\beta}^{\mu\nu} g^\beta_\nu &= (-\a-5\b-\c) {F_\alpha }^\mu.
\end{align*}
This combines to give 
$$
\tr \left( {p_{2,\alpha}}^\alpha {p_{2,\beta}}^\beta + 2 {p_{2,\alpha}}^\beta {p_{2,\beta}}^\alpha \right)
= 24(-\a^2 -\a\b + \a\c+2\b\c) \tr F_{\mu\nu}F^{\mu\nu}
$$
which is of the desired form.
\endproof

\begin{corl}
\label{corl:a4-4d-k3}
Let $M$ be a 4-dimensional flat Riemannian manifold ($R_{\mu\nu\rho\sigma}=0$) and suppose that $P_B$ is an operator of the form \eqref{eq:higher-laplacian} with $k=3$ and
\begin{align*}
p_{2,\alpha\beta}^{\mu\nu} &= \a F^{\mu\nu} g_{\alpha\beta}
+ \b ( {F^\mu}_\beta g^\nu_\alpha+  {F^\mu}_\alpha g^\nu_\beta )
+\c( {F^\nu}_\beta g^\mu_\alpha+ {F^\nu}_\alpha g^\mu_\beta ),\\
-p_{4,\alpha\beta}^{\mu\nu} &= \d \ad F_{\kappa\lambda} \ad F^{\kappa\lambda} g_{\alpha\beta} g^{\mu\nu}
+ \e \ad {F_\kappa}^\mu F^{\kappa\nu} g_{\alpha \beta} 
+\f \ad F_{\kappa\lambda} \ad F^{\kappa\lambda} g^\mu_\alpha g^\nu_\beta  
+\g \ad {F_\alpha}^\lambda \ad F_{\beta\lambda} g^{\mu\nu}\\
& \quad
+ \h \ad {F_\alpha}^\mu \ad {F_\beta}^\nu
+\k \ad F^{\mu\kappa} \ad F_{\beta\kappa} g^\nu_\alpha
 +\l \ad F^{\nu\kappa} \ad F_{\beta\kappa} g^\mu_\alpha 
 +\m g^{\mu\nu} g_{\alpha\beta}.
\end{align*}
Then
\begin{multline*}
a_4(P) = \frac{ N \m}{(4 \pi)^2}\frac{4}{3}\Vol(M) \\ + \frac{1}{(4 \pi)^2} \frac{1}{12} \bigg( 2 ((\a+\b)(-\a+\c)+\b \c)  + 16 \d+ 4\e + 4\f +4\g +\h +\k +\l\bigg) \int \tr_N  F_{\mu\nu} F^{\mu\nu} .
\end{multline*}
\end{corl}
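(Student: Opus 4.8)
\emph{Proof proposal.} The plan is to apply Gilkey's heat-expansion formula (Theorem~\ref{thm:gilkey}) directly, specialised to dimension $m=4$, order $k=3$, and a flat base ($R_{\mu\nu\rho\sigma}=0$), and to extract the coefficient of $t^0$, which is precisely $a_4(P)$. First I would dispose of the overall prefactor exactly as in the proof of Corollary~\ref{corl:a4-4d}: as $m\to 4$ the quotient $\Gamma((m-4)/2k)/\Gamma((m-4)/2)$ tends to $k$, cancelling the explicit $1/k$ and leaving the normalisation $(4\pi)^{-2}$. Setting $R=0$ annihilates every purely Riemannian monomial in the $a_4$-integrand, so that only three kinds of terms survive: the intrinsic bundle-curvature term $30(m^2-4)F_{\mu\nu}F^{\mu\nu}$, the two quadratic invariants ${p_{2,\mu}}^\mu {p_{2,\nu}}^\nu$ and ${p_{2,\mu}}^\nu {p_{2,\nu}}^\mu$, and the totally symmetrised term $S(p_4)$.

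The $p_2$-part has exactly the index structure already treated in Corollary~\ref{corl:a4-4d}, so I would quote that computation verbatim: contracting and tracing the displayed $p_{2,\alpha\beta}^{\mu\nu}$ yields $24\big(-\a^2-\a\b+\a\c+2\b\c\big)\tr F_{\mu\nu}F^{\mu\nu}=24\big((\a+\b)(-\a+\c)+\b\c\big)\tr F_{\mu\nu}F^{\mu\nu}$, which at $m=4$, $k=3$ produces the factor $\tfrac{1}{12}\cdot 2\big((\a+\b)(-\a+\c)+\b\c\big)$. The intrinsic term is handled likewise: its vector-index trace contributes a factor $m=4$ and the conversion $\tr_{\mathrm{adj}}(\ad F\,\ad F)=\tr_N F_{\mu\nu}F^{\mu\nu}$ turns it into a multiple of $\int\tr_N F_{\mu\nu}F^{\mu\nu}$.

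The crux is the evaluation of $S(p_4)$. For $k=3$ the tensor $p_{4,\alpha\beta}$ carries only $2k-4=2$ derivative indices, already symmetric on the flat background, so Gilkey's totally symmetric contraction reduces to contracting these two indices with the metric, $S(p_4)=g^{\alpha\beta}p_{4,\alpha\beta}$, while the normalising constant is $S(\delta^{k-2})=S(\delta)=m=4$; this reproduces the prefactor $1/\big(kS(\delta^{k-2})\big)=\tfrac{1}{12}$. I would then apply $g^{\alpha\beta}$ to each of the eight monomials of $-p_{4,\alpha\beta}^{\mu\nu}$ and take the full fibre trace (contracting the vector indices $\mu,\nu$ against $g$ and using the adjoint-to-$\tr_N$ conversion once more). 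Monomials carrying $g_{\alpha\beta}$ pick up a factor $g^{\alpha\beta}g_{\alpha\beta}=m=4$, those carrying $g^\mu_\alpha g^\nu_\beta$ collapse to $g^{\mu\nu}$, and the explicit $\ad F$-contractions rearrange into $\tr_N F_{\mu\nu}F^{\mu\nu}$; carried out term by term this gives the weights $16,4,4,4,1,1,1$ attached to $\d,\e,\f,\g,\h,\k,\l$, while the $F$-free monomial $\m\,g^{\mu\nu}g_{\alpha\beta}$ produces the volume contribution $\tfrac{4}{3}N\m\,\Vol(M)$ after tracing the identity over the fibre. Collecting the three groups then yields the asserted expression for $a_4(P)$.

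The main obstacle I anticipate is purely combinatorial: correctly implementing Gilkey's $S$-operator on the \emph{non-scalar} monomials of $p_4$ --- distinguishing how $g_{\alpha\beta}g^{\mu\nu}$, $g^\mu_\alpha g^\nu_\beta$ and $\ad{F_\alpha}^\mu\ad{F_\beta}^\nu$ contract --- keeping the weight $S(\delta^{k-2})$ consistent with the one used in the $p_2$-normalisation, and repeatedly and correctly passing from traces in the adjoint representation to $\tr_N$. Once this bookkeeping of contraction factors is fixed, the collection of terms is immediate.
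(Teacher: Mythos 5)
Your proposal follows the paper's own route exactly: specialise Theorem~\ref{thm:gilkey} to $m=4$, $k=3$ on a flat background, reuse the $p_2$-trace computation of Corollary~\ref{corl:a4-4d}, and evaluate $S(p_4)=g^{\alpha\beta}p_{4,\alpha\beta}$ with $S(\delta^{k-2})=4$ term by term, which reproduces precisely the paper's weights $16,4,4,4,1,1,1$ for $\d,\e,\f,\g,\h,\k,\l$ and the $\tfrac{4}{3}N\m\Vol(M)$ contribution. The one point of divergence is that you (consistently with Corollary~\ref{corl:a4-4d}) retain the intrinsic $30(m^2-4)F_{\mu\nu}F^{\mu\nu}$ term, which would add $\tfrac{4}{12}$ inside the bracket; this term is absent from the displayed statement of Corollary~\ref{corl:a4-4d-k3}, a discrepancy in the paper's formula rather than in your argument.
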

\proof
In addition to the previous Corollary, one computes that (modulo commutators)
\begin{align*}
-S(p_4^{\mu\nu}) &= 
(4 \d +\f+\g) \ad F_{\kappa\lambda} \ad F^{\kappa\lambda} g^{\mu\nu}
+( 4 \e +\h +\k +\l)\ad {F_\kappa}^\mu F^{\kappa\nu} 
 +4 \m g^{\mu\nu}.
\end{align*}
Then, contracting the indices $\mu$ and $\nu$ gives
$$
- \tr S(p_4) = (4 (4 \d +\f+\g) + 4 \e +\h +\k +\l)  \tr_N \ad F_{\kappa\lambda} \ad F^{\kappa\lambda}
 +16 \m .
$$
\endproof

%\bibliography{references}
\newcommand{\noopsort}[1]{}\def\cprime{$'$}

\end{document}